\newtheorem{proposition}{Proposition}
\newtheorem{theorem}{Theorem}
\pgfplotsset{compat=newest}
\pgfplotsset{plot coordinates/math parser=false}
\def\bs{\ensuremath\boldsymbol}
\title{In the recovery of sparse vectors from quadratic measurements, the presence of linear terms breaks the square root bottleneck}
\author{Augustin Cosse\\
\textcolor{blue}{augustin.cosse@univ-littoral.fr}}
\affil{Universit\'e du Littoral C\^ote d'Opale\\
Laboratoire de Mathématiques Pures et Appliquées Joseph Liouville }
\begin{document}
\maketitle

\begin{abstract}
Motivated by recent results in the statistical physics of spin glasses, we study the recovery of a sparse vector $\bs x_0\in \mathbb{S}^{n-1}$, $\|\bs x_0\|_{\ell_0} = k<n$, from $m$ quadratic measurements of the form $ (1-\lambda)\langle \bs A_i, \bs x_0\bs x_0^T \rangle + \lambda \langle\bs c_i,\bs x_0 \rangle $ where $\bs A_{i}, \bs c_{i}$ have i.i.d Gaussian entries. This can be related to a constrained version of the 2-spin Hamiltonian with external field for which it was shown (in the absence of any structural constraint and in the asymptotic regime) in~\cite{belius2022triviality} that the geometry of the energy landscape becomes trivial above a certain threshold $\lambda > \lambda_c\in (0,1)$. Building on this idea we study the evolution of the so-called square root bottleneck for $\lambda\in [0,1]$ in the setting of the sparse rank one matrix recovery/sensing problem.  We show that recovery of the vector $\bs x_0$ can be guaranteed as soon as $m\gtrsim k^2 (1-\lambda)^2/\lambda^2$,  $\lambda \gtrsim k^{-1/2}$ provided that this vector satisfies a sufficiently strong incoherence condition, thus retrieving the linear regime for an external field $(1-\lambda)/\lambda \lesssim k^{-1/2}$. Our proof relies on an interpolation between the linear and quadratic settings, as well as on standard convex geometry arguments.
\end{abstract}


\section{Introduction}

\subsection{\label{quadraticEquations}Quadratic equations, external field and square root bottleneck}

An important question in classical (i.e. linear) compressed sensing is the design of measurement matrices.  The traditional view of compressed sensing considers an object that is supposed to have a sparse representation in some basis and tries to recover this object from linear measurements $\bs A\bs x = \bs A\bs x_0 = \bs b$ where $\bs A\in \mathbb{R}^{m\times n}$.    When $m$ is larger than $n$, by selecting sufficiently diverse measurement vectors, the system $\bs A\bs x = \bs b$ can be inverted and the solution $\bs x_0$ can be recovered.  A more interesting situation occurs when $m<n$.  In this case, despite the limited number of measurements,  because of the existence of a sparse representation for $\bs x_0$, it remains possible to recover this signal exactly provided that $k^2\lesssim m$.  Such a sample complexity induces a ``square-root bottleneck" $k\lesssim \sqrt{m}$ ($k\lesssim m^{1/2+\varepsilon}$ for some $\varepsilon$ to be exact~\cite{bourgain2011explicit}) on the size of the support of $\bs x_0$.  For a fixed number of measurements, only those vectors whose support is smaller than $\sqrt{k}$ are guaranteed to be successfully recovered.  In theory one can in fact go beyond this bottleneck bound and show that with high probability over the measurement matrices $\bs A$, recovery remains possible for $k=\Omega(m)$~\cite{candes2006stable}. Although the probabilistic result implies the existence of efficient measurement matrices beyond the square root bottleneck, checking that a particular matrix is an efficient measurement matrix or explicitly constructing one is hard in practice~\cite{bandeira2013certifying}.  

What is perhaps equally interesting is that a similar phenomenon (albeit different in nature) seems to arise when considering quadratic measurements~\cite{li2013sparse} (or the optimization of quadratic objectives~\cite{amini2008high, wang2016statistical}).  Moreover, in this case, due to the additional difficulty introduced by the non linearity, even for random matrices, exact recovery of a signal $\bs x_0\in \mathbb{R}^n$ with at most $k$ non vanishing entries seems to require $m\gtrsim k^2\vee n$ despite the $O(k)$ intrinsic dimension of the problem. This difficulty is essentially a computational one and it can be related to the joint nature of the unknown.  Simultaneoulsy enforcing a number of vanishing entries and the rank one structure on the matrix appearing in convex relaxations implies that the optimal number of parameters is much less than the typical performance of both of the entrywise $\ell_1$ norm ($m\gtrsim k^2$) or nuclear norm ($m\gtrsim n$) penalties.  Each of those norms will thus yield suboptimal sample complexities. In fact it is shown in~\cite{oymak2015simultaneously} that any combination of those will still remain suboptimal while it is also shown that minimizing a weighted combination of the $\ell_0$ norm and rank will recover the solution uniquely for $m\gtrsim k$ measurements only (up to log factors).  In this setting again,  a similar``square root bottleneck" thus seems to appear.  While theory posits the well posedness of the original problem for any vector $\bs x_0$,  in the absence of any sufficiently serious prior on the unknowns, any computational endeavour currently seems bound to fail for measurement operators with $m<k^2$. Possible subterfuges then include the use of generative priors (see for example~\cite{bora2017compressed,
aubin2019spiked,hand2018phase}) or the constraining of the initialization of iterative algorithms (see for example~\cite{soltanolkotabi2019structured, wang2017sparse}). Such approaches come at the expense of stronger conditions which are not always satisfied and/or require prior knowledge on the solution which is not always available.  

In this paper, we are interested in investigating this curiosity under the light of a recent characterization of the landcape of the $p$-spin Hamiltonian through the Kac-Rice formula~\cite{belius2022triviality}.  Statistical physics has since long been a provider of a number of tools that revealed particularly useful to the signal processing community (see for example~\cite{mezard2009information, krzakala2012statistical}) by relying on the interpretation of cost functions as Hamiltonians associated to particular physical systems (such as spin glasses or more general systems of particles).  If $H_n(\bs \sigma)$, $\bs \sigma \in \mathbb{R}^n$ denotes the pure $p$-spin Hamiltonian restricted to the unit sphere $\mathbb{S}^{n-1}$ and $H_n^h(\bs \sigma) = H_n(\bs \sigma) + n h \langle \bs \varphi,  \bs \sigma\rangle $ denotes the extension of this Hamiltonian to the presence of an external field $h\bs \varphi\in \mathbb{R}^n$, if we use $E_n$ to define the event under which the only critical points of $H_{n}^h$ are one maximum and one minimum, then the main result of~\cite{belius2022triviality} states that for $h^2$ sufficiently large, $\lim_{n\rightarrow \infty} P(E_n) = 1$.

Those recent findings raise a natural question: \emph{Can we observe a similar phenomenon when considering the recovery of {\bfseries \upshape structured} signals from quadratic equations under the presence of sufficiently strong linear terms?} \emph{In particular, if the landscape of the $p$-spin Hamiltonian becomes trivial for a sufficiently strong external field, {\bfseries \upshape  is it possible to break the square root bottleneck under a similar regime when trying to recover an unknown sparse vector from a system of quadratic equations?}} Our main result provides a positive answer to this question. 

In what follows, we will use $\|\bs X\|_p$ to denote the Schatten $p$-norms, and $\|\bs X\|_{\ell_p}$ to denote the entrywise $\ell_p$ norm of a matrix $\bs X$.  I.e. $\|\bs X\|_{\ell_p} = \left(\sum_{ij} |\bs X_{ij}|^p\right)^{1/p}$.  For any random variable $X$, we use $\|X\|_{\psi_p}$ to denote the Orlicz $p$-norm of $X$ (see e.g.~\cite{wellner2013weak} section 2.2.). Moreover, given a matrix $\bs X \in \mathbb{R}^{n+1\times n+1}$ of the form 
\begin{align}
\bs X = \left[\begin{array}{cc}
1 & \bs x^* \\
\bs x& \tilde{\bs X}
\end{array}\right]
\end{align}
along the line of~\cite{barak2012hypercontractivity}, we refer to $\bs x$ as encoding the first order pseudo-moments of $\bs X$ and to $\tilde{\bs X}$ as encoding the second order pseudo-moments. Given a sparse vector $\bs x_0$ we use $S$ to denote the support of $\bs x_0$ and $\Omega$ to denote the support of the rank one matrix $\bs X_0 = [1,\bs x_0^*][1, \bs x^*_0]^*$. 

To formalize the discussion above and to provide an answer to the questions raised, we consider the following setting.  In the long tradition of~\cite{candes2006robust, donoho2006compressed} we define our information operator $I_m: \mathbb{S}^{n-1}\mapsto \mathbb{R}^m$ that samples $m$ observations about a signal $\bs x_0$ as 
\begin{align}
   I_m(\bs x_0) = \left( \langle \tilde{\bs A}_1, \bs X_0\rangle, \ldots, \langle \tilde{\bs A}_m,  \bs X_0\rangle \right)\label{measurementOperatorIm}
\end{align}
where the sampling kernels $\tilde{\bs A}_i$ are defined as an interpolation between a linear part $\bs c_i$ and a quadratic part $\bs A_i$:
\begin{align}
\tilde{\bs A}_i = \left[\begin{array}{cc}
0 & \lambda \bs c_i^T\\
\lambda \bs c_i & (1-\lambda)\bs A_i\\
\end{array}\right]\label{definitionAitilde}
\end{align}
Note that we always have injectivity on the rank one manifold as shown by the following proposition

\begin{proposition}\label{wellPosedness}
Let $\bs x_0\in \mathbb{S}^{n-1}$ be a $k$-sparse signal,  and let $I_m$ denote the measurement operator defined in~\eqref{measurementOperatorIm}. Then $m\gtrsim k$ measurements are enough to recover $\bs x_0$ (in theory). 
\end{proposition}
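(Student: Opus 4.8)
To prove Proposition~\ref{wellPosedness}, the plan is to establish that the linear map $I_m$ is injective on the set of $k$-sparse unit vectors, via a standard genericity / dimension-counting argument; since the kernels $\tilde{\bs A}_i$ are built from i.i.d.\ Gaussian data, a statement that holds for a generic measurement operator holds with probability one, which is exactly what is meant by ``recovery in theory'' (as opposed to by a tractable algorithm).

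First I would reduce the claim to showing that for Lebesgue-almost every choice of the sampling kernels $(\tilde{\bs A}_i)_{i=1}^m$, viewed as a point of the parameter space $\mathbb{R}^N$ with $N=m(n+n^2)$, one has $I_m(\bs x_0)\neq I_m(\bs x_0')$ whenever $\bs x_0\neq\bs x_0'$ are $k$-sparse unit vectors. Fix two supports $S,S'\subseteq\{1,\dots,n\}$ with $|S|,|S'|\le k$; the unit vectors supported on $S$, resp.\ $S'$, sweep out a product of spheres of total dimension at most $2(k-1)$. For a fixed pair $\bs x_0\neq\bs x_0'$ with these supports, the lifted difference $\bs X_0-\bs X_0'$ — equivalently the block $\left[\begin{smallmatrix}0 & \lambda(\bs x_0-\bs x_0')^T\\ \lambda(\bs x_0-\bs x_0') & (1-\lambda)(\bs x_0\bs x_0^T-\bs x_0'\bs x_0'^T)\end{smallmatrix}\right]$ that is actually probed by $I_m$ — is a nonzero symmetric matrix as soon as $\lambda>0$; here the linear term is precisely what also removes the global sign ambiguity that a purely quadratic operator ($\lambda=0$) would leave. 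Consequently each equation $\langle\tilde{\bs A}_i,\bs X_0-\bs X_0'\rangle=0$ is one genuine linear constraint on $\tilde{\bs A}_i$, and the constraints for distinct $i$ involve disjoint blocks of variables, so the set of $m$-tuples for which $I_m(\bs x_0)=I_m(\bs x_0')$ is a linear subspace of codimension exactly $m$ in $\mathbb{R}^N$.

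Next I would form the incidence variety $Z_{S,S'}=\{(\bs x_0,\bs x_0',(\tilde{\bs A}_i)_i)\,:\,\bs x_0\neq\bs x_0'\text{ on }S,S',\ I_m(\bs x_0)=I_m(\bs x_0')\}$. Its projection onto the $(\bs x_0,\bs x_0')$-factor has all fibers of dimension $N-m$, so $\dim Z_{S,S'}\le 2(k-1)+(N-m)$. Projecting instead onto $\mathbb{R}^N$ gives a semialgebraic set — the ``bad'' operators for this pair of supports — of dimension at most $2(k-1)+N-m$, hence of Lebesgue measure zero in $\mathbb{R}^N$ as soon as $m\ge 2k-1$. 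Taking the union over the $\binom{n}{k}^2$ pairs $(S,S')$ (finitely many) preserves the null property, so outside a null set $I_m$ is injective on $k$-sparse unit vectors; in particular this holds with probability one for i.i.d.\ Gaussian $\bs c_i,\bs A_i$, which proves the proposition with $m\gtrsim k$ (in fact $m\ge 2k-1$ suffices).

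I do not anticipate a serious obstacle, as the argument is a dimension count; the points needing care are (i) checking that the lift $\bs x_0\mapsto\bs X_0$ separates $k$-sparse unit vectors, so that each constraint is genuinely nontrivial — this is where $\lambda\in(0,1]$ rather than merely $\lambda\in[0,1)$ enters, since at $\lambda=0$ one can only hope to recover $\bs x_0$ up to sign; (ii) the bookkeeping of the finite union over supports; and (iii) the passage from ``generic'' to ``almost surely under Gaussian sampling'', which is immediate since a Lebesgue-null set is null for any absolutely continuous law. Alternatively one could invoke the real-algebraic analogue of Whitney's embedding theorem (a generic linear map injects a $d$-dimensional real variety into $\mathbb{R}^{2d+1}$) applied to the $O(k)$-dimensional secant set of lifted $k$-sparse unit vectors, obtaining the same conclusion.
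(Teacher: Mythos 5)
Your proposal is correct and rests on the same mechanism as the paper's proof: since the linear ($\lambda$) block of the measurements probes the difference $\bs x_0-\bs y_0$, which lives in a coordinate subspace of dimension $|S\cup V|\le 2k$, generic Gaussian measurements separate any two $k$-sparse unit vectors once $m\gtrsim k$, and the rank-one structure together with $\bs X_{11}=1$ then pins down the rest of the lifted matrix. The only difference is in the formalization: the paper states this as injectivity of the restricted map $\mathcal{A}_\lambda$ on $\mathcal{S}^{n+1}_{\Omega'}$ with $\Omega'=\left\{(W,1)\cup(1,W)\right\}$, $W=S\cup V$, implicitly union-bounding over the finitely many support pairs, whereas you run the dimension count on the incidence variety of bad operators and project. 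Your version has the merit of making explicit both the constant ($m\ge 2k-1$ suffices) and the role of $\lambda>0$ in removing the sign ambiguity, neither of which the paper spells out; otherwise the two arguments are interchangeable.
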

\begin{proof}
Assume that there exist two vectors $\bs x_0, \bs y_0$ with support $S$ and $V$, $|S| = |V|=k$ that satisfy $I_m(\bs x_0) = I_m(\bs y_0)$.  Let $W = S\cup V$,  $|W|\leq 2k$. 
Simply use the injectivity of the linear map $\mathcal{A}_{\lambda}$ defined on the space of symmetric matrices as 
\begin{align}
\begin{array}{lll}
\mathcal{S}_{n+1}&\rightarrow  & \mathbb{R}^m\\
\bs X&\mapsto & \left\{2\lambda \bs c_i^*\bs X\bs e_1 + (1-\lambda)\langle \bs A_i, \tilde{\bs X}\rangle  \right\}_{1\leq i\leq m}
\end{array}\label{definitionLinearMapInterp}
\end{align}
on either $\mathcal{S}_{\tilde{\Omega}}^{n+1}$ where $\tilde{\Omega} = \Omega\setminus \left(\left\{(1, [n])\right\}\cup \left\{([n],1)\cup (1,1)\right\}\right)$ or $\mathcal{S}^{n+1}_{\Omega'}$ where $\Omega' = \left\{(W,1)\cup (1,W)\right\}$ together with $\bs X = \bs x\bs x^*$ and $\bs x_1 = 1$.
\end{proof}
Proposition~\ref{wellPosedness} of course does not imply that the signal $\bs x_0$ can be recovered efficiently (i.e. in a computationally tractable manner).  In this paper, we turn to the following semidefinite programs which we label as $\mathsf{SDP}_{1}(\tilde{\bs A}, \lambda)$ and $\mathsf{SDP}_{2}(\tilde{\bs A}, \lambda)$ 
\begin{align}
\begin{split}
    \min \quad &\sum_{k\in[n]} \left|\bs e_k^*\bs X\bs e_1\right|\\
    \text{subject to}\quad & \text{\upshape Tr}(\bs X)=2\\
    &\left\langle \left[\begin{array}{cc}
    0 & \lambda \bs c_i^*\\
    \lambda \bs c_i& (1-\lambda)\bs A_i
    \end{array}\right] , \bs X\right\rangle = b_i, \quad i=1, \ldots m\\
    &\bs X_{11}=1, \bs X\succeq 0
\end{split}\label{sdpquadraticLinearL1column}
\end{align}
\begin{align}
\begin{split}
    \min \quad &\|\bs X\|_{\ell_1}\\
    \text{subject to}\quad & \text{\upshape Tr}(\bs X)=2\\
    &\left\langle \left[\begin{array}{cc}
    0 & \lambda \bs c_i^*\\
    \lambda \bs c_i& (1-\lambda)\bs A_i
    \end{array}\right] , \bs X\right\rangle = b_i, \quad i=1, \ldots m\\
    &\bs X_{11}=1, \bs X\succeq 0
\end{split}\label{sdpquadraticLinear}
\end{align}

Our question now becomes: if $\bs x_0$ is a vector on the unit sphere $\bs x_0\in \mathbb{S}^{n-1}$ with $|\text{supp}(\bs x_0)| = k< n$, \emph{in what regime on $\lambda, m$ and $k$ can the semidefinite programs $\mathsf{SDP}_1(\tilde{\bs A}, \lambda)$ and $\mathsf{SDP}_2(\tilde{\bs A}, \lambda)$ recover $\bs x_0$ uniquely?}

A first observation is that if we completely disregarded the second order pseudo-moments in $\mathsf{SDP}_1(\tilde{\bs A}, 1)$, this program would be exactly equivalent to linear compressed sensing (see for example~\cite{candes2007sparsity}).  In this regime, we should thus reasonably expect to recover $\bs x_0$ from $m\gtrsim k$ measurements.  A second observation is that for $\lambda=0$, we should not hope to recover the complete matrix $\bs X_0 = [1, \bs x_0^*]^*[1, \bs x_0^*]$ since no observations are made on the first order pseudo-moments.  The fact that $\mathsf{SDP}_1(\tilde{\bs A}, 0)$ and $\mathsf{SDP}_2(\tilde{\bs A}, 0)$ will be unable to recover $\bs X_0$ does not imply that those programs will fail in recovering the signal $\bs x_0$. In fact for a sufficiently large number of observations (in the ``large $m$ regime", $m\gtrsim k^2$), in light of classical results on quadratic problems, we should expect recovery of the second order pseudo-moments from which the solution $\bs x_0$ can ultimately be extracted.

\begin{figure}
\centering
    \begin{tikzpicture}[every node/.style={inner sep=0,outer sep=0}]
    \node[circle, fill, minimum size=5pt,inner sep=0pt, outer sep=0pt] (e0) at (0,0) {};
    
    \coordinate (e1) at (1,2) ;
    \coordinate (e2) at (1.8,1.2) ;
    \coordinate (e3) at (1.6,-1) ;
    \draw[gray, thick, line width=.4mm] (e0) -- (e1) -- (e2) --(e3)--(e0);
    \draw[gray, thick, line width=.4mm] (e0) -- (e2);
    \draw[gray, thick, line width=.4mm] (e2) -- (e3);
    \draw[gray, thick, line width=.4mm] (e1) -- (e2);
\draw[gray, thick, line width=.4mm,dash pattern=on 5pt off 5pt on 5pt](e0) -- (2*.7,4*.7);
\draw[gray, thick, line width=.4mm,dash pattern=on 5pt off 5pt on 5pt](e0) -- (1.8*1.4,1.2*1.4);

\path[fill=gray!20,thick, line width=.4mm] (-.3,1.4) to (-.3+1.6*1,1.4+1*.3) to (-.4+1.6*1,0.5+1*.3) to (-.4,0.5) to (-.3,1.4);

\draw[gray!30, thick, line width=.4mm] (-.3,1.4) -- (-.3+1.6*1,1.4+1*.3) -- (-.4+1.6*1,0.5+1*.3) -- (-.4,0.5) -- (-.3,1.4);

\draw[gray!30, thick, line width=.4mm] (-.4,0.5) -- (-.4+1.6*1,0.5+1*.3);

\draw[gray, thick, line width=.4mm] (-.2,1) -- (1.4,1.3);
\draw[gray, thick, line width=.4mm,dash pattern=on 5pt off 5pt on 5pt] (-.2,1) -- (-.2-1.6*0.6,1-0.6*.3);

\path[fill=white!20,thick, line width=.6mm] (e0) to (e1) to (e2) to (e0);
\draw[gray, thick, line width=.4mm] (e0) -- (e1) -- (e2) --(e3)--(e0);
\draw[gray, thick, line width=.4mm] (e0) -- (e1) -- (e2) --(e0);
\node[circle, fill, minimum size=5pt,inner sep=0pt, outer sep=0pt] (emid) at (0.55,1.11) {};
\node[] (eabove) at (0.55,1.9) {$\bs a\bs a^*$};

\node[] (eabove) at (-1.4,1.2) {$\mathcal{A}(\bs X) = \bs b$};

    \end{tikzpicture}
\caption{\label{intersectionCone01} Illustration of the configuration corresponding to Proposition~\ref{propositionOnlyLinearVectorL1} in which the intersection between the positive semidefinite cone and the affine subspace $\left\{\bs X|\mathcal{A}(\bs X) = \bs b\right\}\equiv \left\{\bs X|\text{\upshape Trace}(\bs X) = 2\right\}\cap \left\{\bs X|\bs X_{11}=1\right\}\cap \left\{\bs X|\bs X_{1j} = a_j\right\}$ with $\|\bs a\|_2^2 = 1$ reduces to the matrix $\bs a\bs a^*$.}
\end{figure}
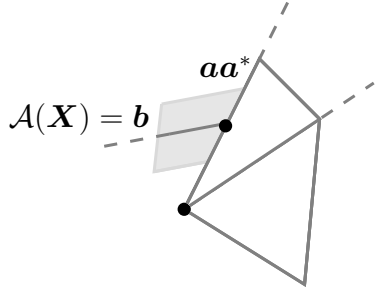

%

\subsection{Main results}
    
In order to introduce our main conclusion, we start by recalling the notion of coherence $\mu_0$ for a signal $\bs x_0$ as
\begin{align}
\mu_0 \equiv \frac{\|\bs x_0\|_\infty}{\|\bs x_0\|_2}\label{definitionCoherence}
\end{align}
Note that for a $k$-sparse vector of $\mathbb{S}^{n-1}$ ($\|\bs x_0\|_2 = 1$), we always have $1/\sqrt{k}\leq \mu_0 \leq 1$ where $\mu_0 = 1$ for a perfectly coherent signal and $\mu_0 = 1/\sqrt{k}$ for a perfectly incoherent one (i.e. $\bs x_0 = (1/\sqrt{k})\mathbf{1}_S$ or gaussian i.i.d.).  We further define $\tilde{\mu}_0$ as $\tilde{\mu}_0 = \sqrt{k}\mu_0$ so that $1\leq \tilde{\mu}_0\leq \sqrt{k}$. We are now ready to state the main quantitative observation of this paper. 

\begin{theorem}\label{mainTheorem}
   As soon as $m\gtrsim k^2\tilde{\mu}_0^4(1-\lambda)^2/\lambda^2 $ as well as $\lambda \gtrsim \tilde{\mu}_0k^{-1/2}$ (up to log factors), $\mathsf{SDP}_2(\tilde{\bs A}, \lambda)$ recovers the solution $\bs x_0$ with probability $1-o_n(1)$. 
\end{theorem}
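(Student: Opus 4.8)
The plan is to certify optimality of $\bs X_0 = [1,\bs x_0^*]^*[1,\bs x_0^*]$ for $\mathsf{SDP}_2(\tilde{\bs A},\lambda)$ by constructing an (approximate) dual certificate and invoking strong duality. Writing the constraint map as $\mathcal{B}(\bs X) = (\mathcal{A}_\lambda(\bs X),\,\Tr(\bs X),\,\bs X_{11})$ with $\mathcal{A}_\lambda$ as in~\eqref{definitionLinearMapInterp}, the KKT conditions at $\bs X_0$ ask for a vector $\bs y$, scalars $\alpha,\beta$ and a matrix $\bs S\succeq 0$ with $\bs S\bs v=0$, $\bs v=[1,\bs x_0^*]^*$, such that $\bs W := \mathcal{A}_\lambda^*(\bs y) + \alpha\bs I + \beta\bs e_1\bs e_1^* + \bs S$ is an $\ell_1$-subgradient of $\|\cdot\|_{\ell_1}$ at $\bs X_0$, i.e. $\bs W_\Omega = \operatorname{sign}((\bs X_0)_\Omega)$ and $\|\bs W_{\Omega^c}\|_\infty < 1$; uniqueness follows if in addition $\operatorname{corank}\bs S = 1$ and $\mathcal{B}$ is injective on $\{\bs H\in T:\bs H_{\Omega^c}=0\}$, $T=\{\bs v\bs a^*+\bs a\bs v^*:\bs a\in\mathbb{R}^{n+1}\}$. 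Equivalently — and this is the form in which I would actually run the computation — parametrizing a feasible point as $\bs X=[1,\bs x^*;\,\bs x,\,\bs x\bs x^*+\bs E]$ with $\bs E\succeq 0$ (forced by $\bs X\succeq 0$, $\bs X_{11}=1$, $\Tr\bs X=2$, which also gives $\|\bs x\|_2\le 1$, $\Tr\bs E=1-\|\bs x\|_2^2$), one reduces to a restricted null-space statement for $\bs h=\bs x-\bs x_0$: the feasibility relations $2\lambda\langle\bs c_i,\bs h\rangle+(1-\lambda)\langle\bs A_i,\bs D\rangle=0$, with $\bs D=\bs x_0\bs h^*+\bs h\bs x_0^*+\bs h\bs h^*+\bs E$, must force $2\|\bs h_S\|_1+\|\bs D_{S\times S}\|_{\ell_1}$ to be dominated by $2\|\bs h_{S^c}\|_1+\|\bs D_{(S\times S)^c}\|_{\ell_1}$, whence decomposability of $\ell_1$ gives $\|\bs X\|_{\ell_1}>\|\bs X_0\|_{\ell_1}$. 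The degenerate case $\bs h=0$ is immediate, since then $\Tr\bs E=0$ and $\bs E\succeq 0$ force $\bs E=0$.

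The certificate is built by interpolation between the two settings. I would pick $\bs y$ so that the \emph{first-order block} $\lambda(\mathcal{A}_\lambda^*\bs y)\bs e_1$ reproduces the classical compressed-sensing certificate for the linear operator $\{\bs c_i\}$ — the minimum-norm $\bs y$ whose first-order part restricted to the $\bs x_0$-support equals $\operatorname{sign}(\bs x_0)$ — which exists and is well conditioned once $m\gtrsim k\log(n/k)$. The \emph{second-order block} $(1-\lambda)\sum_i y_i\bs A_i$ is then a Gaussian sum treated as a perturbation: it must be shown small in $\ell_\infty$ off $\Omega$, and correctable on $\Omega$ by the free slack $\bs S$ together with $\alpha\bs I$ and $\beta\bs e_1\bs e_1^*$. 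Concretely $\bs S$ is taken of the form $\rho\,\bs P_{\bs v^\perp}$ plus lower-order corrections so that $\bs S\succeq 0$, $\bs S\bs v=0$, $\operatorname{corank}\bs S=1$, with $\alpha,\beta$ fixed by the trace and corner constraints; a parallel construction, together with a short geometric argument in the spirit of Figure~\ref{intersectionCone01}, controls the purely second-order directions.

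The quantitative thresholds then fall out of Gaussian concentration applied to each piece: (i) restricted isometry of $\{\bs c_i\}$ over $2k$-sparse vectors and invertibility of the associated Gram matrix on the support, for $m\gtrsim k\log(n/k)$; (ii) a matrix-Bernstein / restricted-isometry estimate for the quadratic operator $\{\bs A_i\}$ acting on the \emph{incoherent, essentially rank-$O(1)$} matrices that occur here — this is where incoherence enters, as the relevant norms scale with $\tilde\mu_0$ and one must restrict to this benign family because $\{\bs A_i\}$ fails RIP over all low-rank matrices when $m\ll n$ — contributing deviations of order $\sqrt{k^2\tilde\mu_0^4/m}$; and (iii) cross terms between the independent ensembles $\{\bs c_i\}$ and $\{\bs A_i\}$, carrying the factor $(1-\lambda)/\lambda$. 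Assembling these, the off-support value of $\bs W$ is bounded by a constant times $\tfrac{1-\lambda}{\lambda}\cdot\tfrac{k\tilde\mu_0^2}{\sqrt m}$, which is $<1$ exactly when $m\gtrsim k^2\tilde\mu_0^4(1-\lambda)^2/\lambda^2$, while sign-matching on the support and $\operatorname{corank}\bs S=1$ survive the perturbation precisely when $\lambda\gtrsim\tilde\mu_0 k^{-1/2}$ (up to logs). Combined with the injectivity of $\mathcal{B}$ on $\{\bs H\in T:\bs H_{\Omega^c}=0\}$, this gives uniqueness, i.e. the theorem; note one also needs the baseline $m\gtrsim k\,\mathrm{polylog}(n)$, which is implied by the hypotheses in the regime of interest.

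The main obstacle is the simultaneous control of the positive-semidefinite slack and the $\ell_1$ feasibility: imposing $\bs S\succeq 0$ with corank exactly one pulls the certificate one way, keeping $\|\bs W_{\Omega^c}\|_\infty<1$ pulls it another, and both conditions are contaminated by the $(1-\lambda)$ quadratic terms, so making the trade-off quantitative is exactly what produces the pair of thresholds. A secondary difficulty is that, since the Gaussian quadratic operator is not a restricted isometry on low-rank matrices when $m\ll n$, every estimate must be carried out on the incoherence-restricted family and the dependence on $\tilde\mu_0$ propagated through each step — which is why the sample complexity degrades as $\tilde\mu_0^4$ and the field threshold as $\tilde\mu_0$.
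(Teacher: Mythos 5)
Your route is genuinely different from the paper's. The paper never constructs a dual certificate for $\mathsf{SDP}_2(\tilde{\bs A},\lambda)$: it runs a primal argument, lower-bounding $\|\bs X\|_{\ell_1}-\|\bs X_0\|_{\ell_1}$ for an arbitrary feasible $\bs X$ by decomposing $\bs H=\bs X-\bs X_0$ over $T\cap\Omega$, $T^\perp\cap\Omega$, $B$ and $\Omega^c$, using the identity $\Tr(\bs H)=0$ to inject positive terms $\gamma\Tr(\bs H_B)+\gamma\Tr(\bs H_{T^\perp\cap\Omega})$ that absorb all error terms, and controlling $\|\bs H_{T\cap\Omega}\|_F$ through the near-isometry of the augmented operator $\mathcal{H}$ of~\eqref{definitionMathcalH} (Propositions~\ref{propositionInjectivityOnTcapOmega}--\ref{PropositionAstarAHOmegaComplement}); the case $\lambda<1$ is then handled by keeping the $\lambda=1$ inverse and treating the quadratic part as a perturbation via Lemma~\ref{lemmaCrossTermsAici01Quadratic}. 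Your KKT/dual-certificate plan would, if completed, give a cleaner uniqueness statement, but as written it leaves the central difficulty unresolved.

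Concretely: (i) you correctly identify that one must build $\bs S\succeq 0$ with $\bs S\bs v=0$ and corank one such that $\bs W_\Omega=\operatorname{sign}((\bs X_0)_\Omega)$ exactly while $\|\bs W_{\Omega^c}\|_\infty<1$, but you then only offer the ansatz $\bs S=\rho\,\bs P_{\bs v^\perp}$ plus unspecified corrections and defer the ``trade-off'' to the thresholds. This is the step the whole proof hinges on --- roughly $k^2$ exact equality constraints on $\Omega$ to be met simultaneously with global positive semidefiniteness of $\bs S$ --- and no argument is given that the corrections can be made PSD-compatible; the paper's architecture exists precisely to avoid this. (ii) The bound $\|\bs W_{\Omega^c}\|_\infty\lesssim\frac{1-\lambda}{\lambda}\cdot\frac{k\tilde\mu_0^2}{\sqrt m}$ is asserted rather than derived; it reproduces the theorem's threshold but cannot be checked from what you wrote. (iii) Your attribution of the incoherence dependence to the failure of restricted isometry for $\{\bs A_i\}$ over low-rank matrices is inconsistent with the paper: the incoherence condition is already present at $\lambda=1$, where no $\bs A_i$ appears at all (Proposition~\ref{propositionSDP2lambda1} requires $m\gtrsim k\tilde\mu_0^6$); in the paper it enters through the subgradient pairings bounded via Hanson--Wright and through the duality estimate $\|\bs H_{\Omega^c}\|_{\ell_1}\geq 2\mu_0^{-1}\|\bs x_{S^c}\|_1$. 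Consequently your claimed baseline of $m\gtrsim k\,\mathrm{polylog}(n)$ near $\lambda=1$ is not what either your argument or the paper's actually delivers.
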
 

As expected, the result of Theorem~\ref{mainTheorem} only holds for $\lambda \gtrsim \tilde{\mu}_0k^{-1/2}$ and is only meaninful for $(1-\lambda)/\lambda<k^{-\varepsilon/2}$ for $\varepsilon>0$ (i.e. the square-root bottleneck only ``unlocks" for sufficiently large external fields).  We for example recover the linear regime for any $\lambda>1-O(k^{-1/2})$.  What is perhaps more surprising is that recovery above this threshold will require $\bs x_0$ to satisfy a strong incoherence condition. I.e. for $\lambda>1-O(k^{-1/2})$, the sample complexity grows as $m\gtrsim k^3\mu_0^4$ and ``remaining" in the linear regime $m\gtrsim k$ will require $\bs x_0$ to obey the condition $\mu_0\lesssim k^{-1/2}$.  For more coherent vectors, other approaches (e.g. iterative) might be needed.  

Another interesting phenomenon is that the numerical simulations seem to indicate a blow up in the sample complexity for arbitrarily small but non vanishing external fields (see Figs~\ref{phaseTransitionRecoveryColumnInnerMatrix} and~\ref{phaseIllustrationmanual}) which suggest that the lower bound on $\lambda$ might be superfluous. 

The rest of the paper is organized as follows. We start by investigating the simpler $\mathsf{SDP}_1(\tilde{\bs A},1)$ in section~\ref{sectionLinearMeasurementsVectorL1}.  We then extend those results to the setting of $\mathsf{SDP}_2(\tilde{\bs A}, 1)$ in section~\ref{sectionLinearMeasurementsMatrixL1}. The final result is obtained by considering $\mathsf{SDP}_2(\tilde{\bs A}, \lambda)$ as a perturbation of $\mathsf{SDP}_2(\tilde{\bs A}, 1)$ in section~\ref{sectionFinalResult}.

\section{\label{sectionLinearMeasurementsVectorL1}Unique recovery in the setting of $\mathsf{SDP}_1(\tilde{\bs A}, \lambda =1)$}

In order to prepare for the matrix $\ell_1$ norm, we start by extending the classical compressed sensing results from~\cite{candes2007sparsity} to the recovery of a sparse matrix under linear measurements. 

The main result of this section is the following proposition which essentially consists in a lift of the standard compressed sensing setting to the world of semidefinite programming.

\begin{proposition}\label{propositionOnlyLinearVectorL1}
As soon as $m\gtrsim k$, the semidefinite program $\mathsf{SDP}_1(\tilde{\bs A}, 1)$ recovers the solution $\bs x_0$ with probability $1-o_n(1)$ (on the genericity of the measurements and signs of $\bs x_0$). 
\end{proposition}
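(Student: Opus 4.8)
The plan is to notice that at $\lambda=1$ the program $\mathsf{SDP}_1(\tilde{\bs A},1)$ collapses onto the first order pseudo-moments and becomes, up to harmless constraints, ordinary $\ell_1$ minimization, so that the proposition reduces to classical compressed sensing. First I would record three elementary facts valid at $\lambda=1$: the kernel $\tilde{\bs A}_i$ has a vanishing quadratic block, so $\langle\tilde{\bs A}_i,\bs X\rangle=2\langle\bs c_i,\bs x\rangle$ and $b_i=2\langle\bs c_i,\bs x_0\rangle$, where $\bs x$ denotes the first order pseudo-moments of $\bs X$; the objective $\sum_{k\in[n]}|\bs e_k^*\bs X\bs e_1|$ equals, up to the fixed contribution of $\bs X_{11}$, the $\ell_1$ norm $\|\bs x\|_{\ell_1}$; and, by the Schur complement, the constraints $\bs X\succeq 0$, $\bs X_{11}=1$, $\Tr(\bs X)=2$ can be met by some choice of the second order pseudo-moments $\tilde{\bs X}$ if and only if $\|\bs x\|_2\le 1$ (e.g. $\tilde{\bs X}=\bs x\bs x^*+(1-\|\bs x\|_2^2)\bs v\bs v^*$ for any unit $\bs v\perp\bs x$). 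Consequently the optimal first columns of $\mathsf{SDP}_1(\tilde{\bs A},1)$ are exactly the minimizers of the vector program $\min\{\|\bs x\|_{\ell_1}:\langle\bs c_i,\bs x\rangle=\langle\bs c_i,\bs x_0\rangle,\ i\in[m],\ \|\bs x\|_2\le 1\}$.

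Next, since $\bs x_0\in\mathbb{S}^{n-1}$ satisfies $\|\bs x_0\|_2\le 1$ with equality, the extra ball constraint only restricts the feasible set and cannot create new minimizers, so it suffices to show that $\bs x_0$ is the \emph{unique} minimizer of the plain problem $\min\{\|\bs x\|_{\ell_1}:\langle\bs c_i,\bs x\rangle=\langle\bs c_i,\bs x_0\rangle,\ i\in[m]\}$. This is precisely $\ell_1$ recovery of a $k$-sparse vector from $m$ i.i.d.\ Gaussian linear measurements, which holds with probability $1-o_n(1)$ over the measurements and the random signs of $\bs x_0$ once $m\gtrsim k$ up to log factors, exactly as in~\cite{candes2007sparsity}; I would reproduce the certificate for completeness. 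Writing $\bs C\in\mathbb{R}^{m\times n}$ for the matrix with rows $\bs c_i^*$, $S$ for the support and $\bs s\in\{\pm 1\}^{S}$ for the sign pattern of $\bs x_0$, take $\bs\nu=\bs C^*\bs C_S(\bs C_S^*\bs C_S)^{-1}\bs s$; it lies in the row space of $\bs C$ and restricts to $\bs s$ on $S$ (well defined since $\bs C_S$ has full column rank for $m\ge k$), and $\|\bs\nu_{S^c}\|_\infty<1$ follows from Gaussian concentration of $\bs C_{S^c}^*\bs C_S(\bs C_S^*\bs C_S)^{-1}\bs s$ (using independence of $\bs C_{S^c}$ from $\bs C_S$), a Hoeffding bound over $\bs s$, and a union bound over $[n]\setminus S$ --- which is what forces $m\gtrsim k\log n$. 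Together with the full rank of $\bs C_S$ this certifies uniqueness. Reading back through the equivalence of the first paragraph, every minimizer of $\mathsf{SDP}_1(\tilde{\bs A},1)$ has first order pseudo-moments $\bs x_0$, and then $\|\bs x_0\|_2=1$ together with the Schur complement and $\Tr(\bs X)=2$ forces $\tilde{\bs X}=\bs x_0\bs x_0^*$, i.e.\ $\bs X=\bs X_0$; this is the collapse of the feasible set to a single point, as in Figure~\ref{intersectionCone01} with $\bs a=\bs x_0$.

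I do not expect a genuine obstacle here beyond bookkeeping, but if one wants the argument phrased intrinsically in the semidefinite language --- which is the form that will be needed in Sections~\ref{sectionLinearMeasurementsMatrixL1} and~\ref{sectionFinalResult} --- the delicate point will be the lift of the dual certificate: one must adjoin to $\bs\nu$ the Lagrange multipliers for $\Tr(\bs X)=2$ and $\bs X_{11}=1$ together with a positive semidefinite slack $\bs S\succeq 0$ satisfying the complementary slackness condition $\bs S\bs X_0=0$ (equivalently, $\bs S$ supported on the orthogonal complement of $[1,\bs x_0^*]^*$), and verify that these extra multipliers absorb exactly the part of the objective's subgradient not matched by the $\tilde{\bs A}_i$. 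At $\lambda=1$ all the relevant matrices ($\tilde{\bs A}_i=\bs e_1\bs c_i^*+\bs c_i\bs e_1^*$ and the subgradient of the column $\ell_1$ norm) share the same ``arrow'' pattern supported on the first row and column, and checking that the trace and $\bs X_{11}$ multipliers leave $\bs S$ genuinely positive semidefinite reduces once more to the inequality $\|\bs x_0\|_2\le 1$ of the first paragraph. As usual, $m\gtrsim k$ is meant up to logarithmic factors.
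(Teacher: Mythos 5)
Your proposal is correct and follows essentially the same route as the paper: reduce to classical $\ell_1$ recovery on the first-order pseudo-moments (with the same Cand\`es--Romberg certificate $\bs C^*\bs C_S(\bs C_S^*\bs C_S)^{-1}\bs s$), then use $\bs X\succeq 0$, $\bs X_{11}=1$ and $\Tr(\bs X)=2$ to force the completion $\bs X=\bs x_0\bs x_0^*$. Your Schur-complement packaging ($\tilde{\bs X}\succeq\bs x\bs x^*$ plus $\Tr(\tilde{\bs X})=1$ gives $\|\bs x\|_2\le 1$, with equality collapsing $\tilde{\bs X}$ to $\bs x_0\bs x_0^*$) is a slightly more direct version of the paper's diagonal-and-tangent-space argument in~\eqref{argumentcolumnTrace01}--\eqref{argumentcolumnTrace0end}, but the substance is the same.
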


\begin{proof}
The result can be proved either by explicitely constructing a dual certificate (see~\cite{hand2013conditions, li2013sparse,candes2013phaselift} and references therein) or by showing uniqueness on the set $ \left\{(i,j)|\text{$i=1$ or $j=1$} \right\}$ (which can be done through traditional compressed sensing arguments) and then use the fact that on $\mathcal{S}_n^+\cap \left\{\bs X_{11}=1\right\}\cap\left\{\bs X_{i,1}=a_i\right\}\cap \left\{\text{\upshape Tr}(\bs X) = 2\right\}$, we necessarily have $\bs X_{i,i} = a_i^2$ from which follows $\bs X = \bs a\bs a^*$. Let $\tilde{\Omega} = \left\{(i,j)|\text{$i=1$ or $j=1$ or $i=j$}\right\}$. We claim that when a matrix is specified on this set as $\bs X_{\tilde{\Omega}} = (\bs a\bs a^*)_{\tilde{\Omega}}$, the only PSD completion for this matrix is the rank one matrix $\bs a\bs a^*$. To see this, first note that
\begin{align}
    1=\sum_{i=1}^n \bs X_{ii} \geq \sum_{i=1}^n \bs X_{i,1}^2 =\|\bs a\|_2^2 = 1\label{argumentcolumnTrace01}
\end{align}
As a result $\sum_{i=1}^n \bs X_{ii} - \bs X_{i,1}^2 = 0$, which, from the non negativity of the $\bs X_{ii} - \bs X_{i,1}^2$ necessarily implies $\bs X_{ii} = \bs X_{i,1}^2$. From this, any candidate solution $\bs Z$ to $\mathsf{SDP}_1(\tilde{\bs A},1)$ can read
\begin{align}
    \bs Z = \bs a\bs a^* + \bs X\succeq 0
\end{align}
with $\bs X_{1,j} = \bs X_{i,1} = \bs X_{ii} = 0$ for every $(i,j)\in [n]$. $\bs X$ thus reduces to a component in the Tangent space $T = \left\{\bs a\bs a^*\bs X + \bs X\bs a\bs a^* - \bs a\bs a^*\bs X\bs a\bs a^*\right\}$ to the PSD cone at $\bs a\bs a^*$. Using $\bs X_{1j} = \bs X_{i1} = 0$ as well as $\bs a_1 = 1$ we get 
\begin{align}
   & \bs 0= \bs a\left(\bs a^*\bs X\right)_1 + \bs X\bs a\bs a_1 - \langle \bs a\bs a^*, \bs X\rangle \bs a\bs a_1\\
   \iff & \bs X\bs a = \left[-(\bs a^*\bs X)_1 + \langle \bs a\bs a^*, \bs X\rangle \right]\bs a
\end{align}
and $\bs a$ is an eigenvector of $\bs X$ with eigenvalue $\lambda = \left[-(\bs a^*\bs X)_1 + \langle \bs a\bs a^*, \bs X\rangle \bs a_1\right]$. From this we can write $\bs X$ as $\bs X= \lambda_a \bs a\bs a^*$ but since the trace of $\bs X$ is zero, this implies $\lambda_a = 0$ and hence $\bs X = \bs 0$. In other words, the intersection between the PSD cone and the affine space 
\begin{align}
    (\bs X)_{\tilde{\Omega}} = \left[\begin{array}{cc}
    1 & \bs a^T\\
    \bs a & \text{diag}(\bs a\bs a^*)
    \end{array}\right]\label{argumentcolumnTrace0end}
\end{align}
is a single matrix as illustrated in Fig~\ref{intersectionCone01}.
We now provide an alternative proof based on a matrix certificate. In the case of $\mathsf{SDP}_1(\tilde{\bs A}, 1)$, if we let $\mathcal{A}$ to denote the linear map encoding the constraints $\langle \bs e_1\bs c_i^* + \bs c_i\bs e_1^*, \bs X\rangle  = b_i$,  any solution $\bs X$ is necessarily of the form
\begin{align}
    \bs X = \left[\begin{array}{cc}
    1 & \bs x^*\\
    \bs x&\tilde{\bs X}
    \end{array}\right]
\end{align}
Moreover, if we let $\bs H = \bs X-  \bs X_0$ with 
\begin{align}
\bs H = \left[\begin{array}{cc}
0& \bs h^*\\
\bs h & \tilde{\bs H}
\end{array}\right]    
\end{align}
We define $\Omega'$ as $\Omega' = \left\{(1,j)|j\in S\right\}\cup\left\{(j,1)|j\in S\right\}$. For any vector $\bs Y$ in the range of $\mathcal{A}_{0}^*$,  one must necessarily have
\begin{align}
    \|\bs x\|_1 & = \|\left(\bs x-\bs x_0 + \bs x_0\right)_{S}\|_1 + \|\left(\bs x-\bs x_0\right)_{S^c} + \left(\bs x_0\right)_{S^c}\|\\
    & \geq  \frac{1}{2}\langle \bs H_{\Omega'}, \text{\upshape sign}(\bs x_0)\bs e_1^* + \bs e_1\text{\upshape sign}(\bs x_0)^*\rangle  + \|\bs h_{S^c}\|_1 + \|\bs x_{0}\|_1 \\
    & \geq \frac{1}{2}\langle \bs H_{\Omega'}, \text{\upshape sign}(\bs x_0)\bs e_1^* + \bs e_1\text{\upshape sign}(\bs x_0)^* - \bs Y_{\Omega'}\rangle + \|\bs x_0\|_1 + \|\bs h_{(\Omega')^c}\|_1 - \langle \bs Y_{(\Omega')^c}, \bs H_{(\Omega')^c}\rangle  
\end{align}
As a result, if one can find a vector $\bs Y$ in the range of $\mathcal{A}_0^*$ such that 
\begin{align}
    \langle \bs H_{\Omega'}, \text{\upshape sign}(\bs x_0)\bs e_1^* + \bs e_1\text{\upshape sign}(\bs x_0)^* - \bs Y_{\Omega'}\rangle + \|\bs h_{S^c}\|_1 - \langle \bs Y_{(\Omega')^c}, \bs H_{(\Omega')^c}\rangle >0\label{conditionDualCertificatel1vector} 
\end{align}
for all $\bs H$, then $\|\bs x\|_1>\|\bs x_0\|$ and any solution to $\mathsf{SDP}_1(\tilde{\bs A}, 1)$ necessarily satisfies 
\begin{align}
    \bs X = \left[\begin{array}{cc}
    1 & \bs x_0^*\\
    \bs x_0 & \tilde{\bs X}
    \end{array}\right]
\end{align}
In order to construct such a $\bs Y$, we consider the following extension of the classical compressed sensing ansatz
\begin{align}
    \bs Y &= \frac{1}{2}\sum_{i=1}^m \bs c_i\bs c_{i, S}^* \bs C^{-1}\text{\upshape sign}(\bs x_0)\bs e_1^*+ \bs e_1 \sum_{i=1}^m\text{\upshape sign}(\bs x_0)^* \bs C^{-1} \bs c_{i, S} \bs c_i^* \\
& =  \frac{1}{2}\sum_{i=1}^m \left(\bs c_i\bs e_1^*+ \bs e_1 \bs c_i^*\right)\bs c_{i, S}^* \bs C^{-1}\text{\upshape sign}(\bs x_0)
\end{align}
where $\bs C$ is the symmetric matrix defined as 
\begin{align}
\bs C = m^{-1} \sum_{i=1}^m \bs c_{i, S}\bs c_{i, S}^*
\end{align}
We start by showing invertibility of $\bs C$ through the following proposition (an equivalent of which can be found in~\cite{candes2007sparsity})
\begin{proposition}
    Let $\bs c_i$ be i.i.d. random Gaussian vectors with mean zero and variance $1$. Let $S$ be a subset of the entries of size $|S| = k$. We have 
    \begin{align}
      \left\|\frac{1}{m} \sum_{i=1}^m\bs c_{i, S}\bs c_{i, S}^* - \bs I\right\|< \delta 
    \end{align}
    with probability at least $1-n^{-\alpha}$ as long as $m\gtrsim k \log(n)$
\end{proposition}

\begin{proof}
    The proof follows from an epsilon net argument combined with an application of the non commutative Bernstein inequality which we recall below.

\begin{proposition}[Bernstein, matrix version see~\cite{koltchinskii2011nuclear}]\label{BernsteinMatrix}
    Let $Z_1, \ldots, Z_n$ be independent random matrices with dimensions $m_1\times m_2$ that satisfy $\mathbb{E}Z_i = 0$ and $\|Z_i\|_{\psi_\alpha}\leq U$ almost surely for some constant $U$ where $\|Z\|_{\psi_\alpha}$ is the $\alpha$-Orclicz norm defined as 
    \begin{align}
        \|\|Z\|\|_{\psi_{\alpha}}\equiv \inf\left\{u>0:\mathbb{E}\exp\left(\|Z\|^\alpha/u^{\alpha}\right)\leq 2\right\}, \alpha \geq 1
    \end{align}
    and all $i=1, \ldots, n$. Define 
    \begin{align}
        \sigma_Z = \max \left\{\left\|\frac{1}{n}\sum_{i=1}^n \mathbb{E}Z_iZ_i^T\right\|^{1/2}, \left\|\frac{1}{n}\sum_{i=1}^n \mathbb{E}Z_i^TZ_i\right\|^{1/2}\right\}
    \end{align}
    Then there is a constant $C>0$ such that for all $t>0$ with probability at least $1-e^{-t}$
    \begin{align}
        &\left\|\frac{Z_1+\ldots + Z_n}{n}\right\|\\
        &\leq C\max \left\{\sigma_z\sqrt{\frac{t+\log(m)}{n}}, \left\|\|Z\|\right\|_{\psi_\alpha}\left(\log\left(\frac{\left\|\|Z\|\right\|_{\psi_\alpha}}{\sigma_z}\right)\right)^{1/\alpha} \frac{t+\log(m)}{n}\right\}
    \end{align}
\end{proposition}

First note that 
    \begin{align}
        \sigma^2 =  \left\|\mathbb{E}\frac{1}{m} \sum_{i=1}^m \bs c_{i, S} \|\bs c_{i, S}\|^2 \bs c_{i, S}^*\right\| \leq \left\|k\bs I + \mathds{1}\mathds{1}^* - \text{diag}(\mathds{1}\mathds{1}^*) \right\|\lesssim k
    \end{align}
\begin{align}
    \left\|\bs c_{i, S}\bs c_{i, S}^* - \bs I\right\|_{\psi_1}\lesssim k
\end{align}
The bound on the Orlicz norm can be derived from lemma 2.2.1 in~\cite{wellner2013weak}. Applying Proposition~\ref{BernsteinMatrix}, we thus get with probability at least $1-e^{-t}$, 
    \begin{align}
        \left\|\frac{1}{m}\sum_{i=1}^m \bs c_{i, S }\bs c^*_{i, S } - \bs I\right\| \lesssim C \sqrt{\frac{k t}{m}}\vee \frac{ kt}{m}
    \end{align}
\end{proof}

We have $\mathcal{P}_{\Omega'}(\bs Y) = \text{\upshape sign}(\bs x_0)\bs e_1^* + \bs e_1\text{\upshape sign}(\bs x_0)^*$. Now fix any $\ell\in S^c$, define $\bs v_\ell$ to be the $\ell^{th}$ row of $\sum_{i=1}^m\bs c_{i}\bs c_{i, \Omega}^*$ (or similarly the $\ell^{th}$ column of $\sum_{i=1}^m\bs c_{i, \Omega}\bs c_{i}^*$) and let 
\begin{align}
    \bs y_\ell = \left\langle \bs v_\ell, \left(\sum_{i=1}^m \bs c_{i, S}\bs c_{i, S}^*\right)^{-1} \text{\upshape sign}(\bs x_0)\right\rangle = \langle \bs w_\ell, \text{\upshape sign}(\bs x_0)\rangle 
\end{align}
where $\bs w_\ell = \bs C^{-1}\bs v_\ell$. The vector $\bs v_\ell$ is given by a sum of subexponential random variables. Again using the non commutative Bernstein inequality, noting that 
\begin{align}
    \left\|\frac{1}{m} \mathbb{E} \sum_{i=1}^m c_{i, \ell} \|\bs c_{i, S}\|_2^2 \right\| \lesssim k, \quad \left\|\frac{1}{m}\mathbb{E}\sum_{i=1}^m c_{i, \ell} \bs c_{i, S }\bs c_{i, S}^*\right\|\lesssim k
\end{align}
as well as 
\begin{align}
    \|\left\|c_{i,\ell} \bs c_{i, S}\right\|\|_{\psi_1}\lesssim \|c_{i, \ell}\|_{\psi_2} \|\bs c_{i, S}\|_{\psi_2} \lesssim k
\end{align}
Using the fact that $P(\|\bs c_{i, S}\|>t)\leq 2\exp\left( - \frac{t^2}{2k}\right)$. Using this with Proposition~\ref{BernsteinMatrix} gives 
\begin{align}
    \left\|\frac{1}{m} \sum_{i=1}^m c_{i, \ell} \bs c_{i, \Omega}^*\right\| \leq \sqrt{\frac{k t}{m}}\vee \frac{kt}{m}
\end{align}
with probability at least $1-e^{-t}$. Taking $t =\log^{\alpha}(n)$ gives $\|\sum_{i=1}^m c_{i, \ell}\bs c_{i, S}\|\lesssim C\sqrt{k m}$ (i.e. up to log factors) with probability at least $1-n^{-\alpha}$. 

To conclude, if we define the events $E_{v}$ and $E_{C}$ as $E_{C} = \left\{\|\bs C\|\gtrsim 1\right\}$ and $E_{v} = \left\{\sup_{\ell} \|\bs v_\ell\| \lesssim \sqrt{k/m}\right\}$, we have $P(E_v\cap E_{C})\geq 1- P(E_v^c) - P(E_{C}^c)\geq 1-o_n(1)$ (using a union bound to express $P(E_{v})$). Together, those results then give 
\begin{align}
    \sup_{\ell}\|\bs w_\ell\|\leq \|\bs C\|^{-1}\sup_{\ell}\|\bs v_\ell\| \lesssim \sqrt{k/m}
\end{align}
with probablity at least $1-o_n(1)$. In particular, using random signs for $\bs x_0$ and following the same reasoning as in~\cite{candes2007sparsity}, this implies $\left\|\mathcal{P}_{\Omega^c}(\bs Y)\right\|_{\infty}\leq \delta$ with probability $1-o_n(1)$ as soon as $m\gtrsim k \delta^{-1}$
\end{proof}

\section{\label{sectionLinearMeasurementsMatrixL1}Unique recovery in the setting of $\mathsf{SDP}_2(\tilde{\bs A}, \lambda =1)$}

Quite surprisingly, the extension from the vector $\ell_1$ norm to the matrix $\ell_1$ norm introduces an artefactual condition on the incoherence of the unknown vector $\bs x_0$.  This condition seems to arise mostly as a side effect of the proof technique. We discuss it in more detail in section~\ref{numericalExperimentsQuadratic} where we provide empirical evidence in favor of its superfluous character. The quantitative formulation of this observation is summarized by Proposition~\ref{propositionSDP2lambda1} below. 

\begin{proposition}\label{propositionSDP2lambda1}
As soon as $m\gtrsim k \tilde{\mu}_0^6$, the semidefinite program $\mathsf{SDP}_2(\tilde{\bs A}, 1)$ recovers the solution $\bs x_0$ with probability $1-o_n(1)$ (on the genericity of the measurements and signs of $\bs x_0$).
\end{proposition}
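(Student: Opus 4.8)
The plan is to certify optimality of $\bs X_0=\bs u\bs u^\ast$ (with $\bs u=(1,\bs x_0^\ast)^\ast$) for $\mathsf{SDP}_2(\tilde{\bs A},1)$ by a dual/KKT argument, exactly as in the PhaseLift literature but adapted to the entrywise $\ell_1$ norm. Since at $\lambda=1$ the operator $\mathcal A$ only reads the first row/column of $\bs X$ (the first-order pseudo-moments), optimality of $\bs X_0$ is equivalent to the existence of multipliers $\bs\nu\in\mathbb R^m$, $\mu,\beta\in\mathbb R$ and a normal-cone element $\bs W\succeq 0$ with $\bs W\bs u=0$ such that $\bs S_0:=\bs W-(\mathcal A^\ast\bs\nu+\mu\bs I+\beta\bs e_1\bs e_1^\ast)$ is a subgradient of $\|\cdot\|_{\ell_1}$ at $\bs X_0$: $(\bs S_0)_{ij}=\operatorname{sign}((\bs X_0)_{ij})$ on the support $\Omega$ of $\bs X_0$, and $\|(\bs S_0)_{\Omega^c}\|_\infty\le 1-\gamma$ for a fixed margin $\gamma>0$. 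Given such a certificate, the conclusion follows by standard convex geometry: for any feasible perturbation $\bs H$ one has $\langle\mathcal A^\ast\bs\nu+\mu\bs I+\beta\bs e_1\bs e_1^\ast,\bs H\rangle=0$ (from $\mathcal A\bs H=0$, $\operatorname{Tr}\bs H=0$, $\bs H_{11}=0$), hence $\|\bs X_0+\bs H\|_{\ell_1}\ge\|\bs X_0\|_{\ell_1}+\langle\bs W,\bs H\rangle+\gamma\|\bs H_{\Omega^c}\|_{\ell_1}=\|\bs X_0\|_{\ell_1}+\langle\bs W,\bs X_0+\bs H\rangle+\gamma\|\bs H_{\Omega^c}\|_{\ell_1}\ge\|\bs X_0\|_{\ell_1}$, with equality forcing $\bs H_{\Omega^c}=0$; and when $\bs H$ is supported on $\Omega$, invertibility of $\bs C=\tfrac1m\sum_i\bs c_{i,S}\bs c_{i,S}^\ast$ (the concentration proposition used in the proof of Proposition~\ref{propositionOnlyLinearVectorL1}, valid for $m\gtrsim k\log n$) forces the first column of $\bs H$ to vanish, after which the trace/PSD argument of~\eqref{argumentcolumnTrace01}--\eqref{argumentcolumnTrace0end} gives $\bs H=0$.

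The substance of the proof is the explicit construction of $(\bs W,\mathcal A^\ast\bs\nu)$. On the $S\times S$ block the $\ell_1$-subgradient is rigid, $(\bs S_0)_{ij}=\operatorname{sign}(x_{0,i})\operatorname{sign}(x_{0,j})$ for $i\ne j$, and since $\mathcal A^\ast$, $\mu\bs I$ and $\beta\bs e_1\bs e_1^\ast$ do not reach this block, $\bs W$ itself must equal $\operatorname{sign}(\bs x_{0,S})\operatorname{sign}(\bs x_{0,S})^\ast$ there. Imposing additionally $\bs W\bs u=0$ pins $\bs W$ down (up to a diagonal degree of freedom that is absorbed by $\mu=\beta$) to the rank-one matrix $\bs W=\bs v\bs v^\ast$, where $\bs v$ is supported on $S\cup\{0\}$, equals $\operatorname{sign}(\bs x_{0,S})$ on $S$ and $-\|\bs x_0\|_1$ in the first coordinate; one checks $\bs v\perp\bs u$, $\bs W\succeq 0$, and that the constraints on $\bs S_0$ off the first row (complementary diagonal, and off-diagonal entries touching $S^c$) all hold with a constant margin. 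The residual that remains for the linear multiplier is a first-row term: $\mathcal A^\ast\bs\nu$ must realise, inside the $m$-dimensional subspace $\operatorname{span}\{\bs e_1\bs c_i^\ast+\bs c_i\bs e_1^\ast\}$, a vector equal to $-(1+\|\bs x_0\|_1)\operatorname{sign}(\bs x_{0,S})$ on $S$ and bounded by $1-\gamma$ on $S^c$. This is precisely the certificate problem of Proposition~\ref{propositionOnlyLinearVectorL1} but with an \emph{inflated} target of scale $1+\|\bs x_0\|_1$ instead of $1$; re-running the ansatz $\bs Y=\tfrac12\sum_i(\bs c_i\bs e_1^\ast+\bs e_1\bs c_i^\ast)\bs c_{i,S}^\ast\bs C^{-1}(\cdot)$ with this target, and bounding the off-support entries $\langle\bs w_\ell,(1+\|\bs x_0\|_1)\operatorname{sign}(\bs x_{0,S})\rangle$ via the randomness of $\operatorname{sign}(\bs x_0)$ (Hoeffding, as in~\cite{candes2007sparsity}), yields the stated sample complexity; the powers of $\tilde\mu_0$ enter when $\|\bs x_0\|_1$ and the diagonal rescaling of $\bs W$ are re-expressed through $\mu_0=\tilde\mu_0/\sqrt k$.

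The main obstacle is exactly this last step. Unlike in the pure linear setting of Proposition~\ref{propositionOnlyLinearVectorL1}, the PSD normal component $\bs W$ is not free: matching the rigid $\ell_1$-subgradient on the quadratic block together with complementary slackness $\bs W\bs u=0$ forces $\bs W$ to place mass of order $\|\bs x_0\|_1$ on the first row, and the linear certificate must then cancel it; controlling the off-support values of this inflated certificate with a margin bounded away from $0$, uniformly over $\ell\in S^c$, is what both degrades the sample complexity and introduces the (artefactual) incoherence requirement $m\gtrsim k\tilde\mu_0^6$. A secondary point needing care is verifying that the margin $\gamma$ stays uniformly positive on all three parts of $\Omega^c$ simultaneously — the first row (inflated linear certificate), the complementary diagonal (trace multiplier $\mu$), and the remaining off-diagonal entries (where $\bs W$ and $\mathcal A^\ast\bs\nu$ vanish) — and that the events $\{\|\bs C-\bs I\|<\delta\}$ and $\{\sup_\ell\|\bs v_\ell\|\lesssim\sqrt{k/m}\}$ from the proof of Proposition~\ref{propositionOnlyLinearVectorL1} survive the union bound at the required size of $m$.
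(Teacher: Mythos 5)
Your strategy is genuinely different from the paper's. You set up an exact KKT/dual-certificate condition with a PSD normal-cone element $\bs W$ satisfying $\bs W\bs u=0$, whereas the paper never constructs such a certificate: it bounds the primal perturbation $\|\bs X_0+\bs H\|_{\ell_1}-\|\bs X_0\|_{\ell_1}$ directly, introduces the trace multiplier $\gamma$ as a free parameter (eventually tuned to $\gamma\asymp\mu_0^2k^{3/2}$), exploits the positivity of $\operatorname{Tr}(\bs H_B)$ and $\operatorname{Tr}(\bs H_{T^\perp\cap\Omega})$ coming from the PSD constraint, controls $\|\bs H_{T\cap\Omega}\|_F$ through the injectivity of the auxiliary operator $\mathcal H$ (Propositions~\ref{propositionInjectivityOnTcapOmega}, \ref{relationTTperpOmegaLinearMeasurements} and~\ref{PropositionAstarAHOmegaComplement}), and gains the decisive factors of $\mu_0$ from Hanson--Wright/Hoeffding over the random signs and from the lower bound $\|\bs H_{\Omega^c}\|_{\ell_1}\ge 2\mu_0^{-1}\|\bs x_{S^c}\|_1$.

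There is, however, a quantitative gap in your route, and it bites exactly in the regime where the proposition is needed. As you correctly observe, the rigid subgradient on the $S\times S$ block together with $\bs W\succeq 0$ and $\bs W\bs u=0$ forces $\bs W_{S,1}=-\|\bs x_0\|_1\operatorname{sign}(\bs x_{0,S})-\mu\,\bs x_{0,S}$ (this is forced for \emph{any} admissible $\bs W$, not only your rank-one choice), so the linear part of the certificate must realise a target of magnitude $1+\|\bs x_0\|_1$ on $S$. With the least-squares ansatz the off-support values are $\langle \bs w_\ell,\cdot\rangle$ with $\|\bs w_\ell\|_2\asymp\sqrt{k/m}$, and Hoeffding over the random signs gives $|\langle\bs w_\ell,(1+\|\bs x_0\|_1)\operatorname{sign}(\bs x_{0,S})\rangle|\asymp(1+\|\bs x_0\|_1)\sqrt{k\log n/m}$; keeping this below $1-\gamma$ forces $m\gtrsim k\,(1+\|\bs x_0\|_1)^2$ up to logs. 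Since every $k$-sparse unit vector satisfies $\|\bs x_0\|_1\ge \|\bs x_0\|_2^2/\|\bs x_0\|_\infty=1/\mu_0=\sqrt{k}/\tilde\mu_0$, your bound reads $m\gtrsim k^2/\tilde\mu_0^2$, which exceeds the claimed $k\tilde\mu_0^6$ whenever $\tilde\mu_0\lesssim k^{1/8}$; for an incoherent vector ($\tilde\mu_0\asymp 1$) it degenerates to $m\gtrsim k^2$, i.e.\ the square-root bottleneck the statement is meant to beat. Your assertion that re-running the inflated ansatz ``yields the stated sample complexity'' is therefore unsubstantiated: the inflation is structural (it cannot be removed by a different $\bs W$, a different rank, or inexact duality, because $\mathcal A^*$ does not reach the $S\times S$ block), and escaping it is precisely why the paper abandons the exact-certificate formulation. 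The remaining pieces of your argument --- the normal-cone inequality, and the uniqueness endgame via invertibility of $\bs C$ followed by the PSD completion argument of~\eqref{argumentcolumnTrace01}--\eqref{argumentcolumnTrace0end} --- are sound.
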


The rest of the section is organized as follows. Section~\ref{matrixL1linearMeas} gives the structure of the proof, whereas the auxilliary lemmas (which are essentially subexponential concentration results) are proved in section~\ref{auxialliaryResultsLinearMeasMatrixL1}. As indicated above, the numerical simulations are discussed in section~\ref{numericalExperimentsQuadratic}. 

\subsection{\label{matrixL1linearMeas}Proof outline}

For clarity, we will now use $\bs x_0$ to denote the vector $[1, \tilde{\bs x}_0]$ where $\tilde{\bs x}_0$ refers to the unknown vector which we want to recover.  $\mu_0$ thus stands for the coherence of $\tilde{\bs x}_0$. For any solution $\bs X$ of $\mathsf{SDP}_2(\tilde{\bs A}, 1)$,  if we let $\bs H = \bs X - \bs X_0$ and introduce the decomposition
\begin{align}
\bs H = \left[\begin{array}{cc}
1&\bs h^* \\
\bs h& \tilde{\bs H}
\end{array}\right]
\end{align}
we can write
\begin{align}
\|\bs X\|_{\ell_1} &\geq \|\bs X_0\|_{\ell_1} + \langle \bs X_{\Omega} - \bs X_0, \text{sign}(\bs x_0) \text{sign}(\bs x_0)^*\rangle + \left\|\bs X_{\Omega^c}\right\|_{\ell_1}\\
&\geq  \|\bs X_0\|_{\ell_1} + \langle \bs H_{\Omega}, \text{sign}(\bs x_0)\bs e_1^* + \bs e_1\text{sign}(\bs x_0)^*)\rangle + \langle \bs H, \bs e_1\bs e_1^*\rangle\\
& + \langle \tilde{\bs H} , \text{sign}(\tilde{\bs x}_0)\text{sign}(\tilde{\bs x}_0)^*\rangle + \|\bs H_{\Omega}^c\|_{\ell_1}\\
&   \geq \|\bs X_0\|_{\ell_1} + \langle \tilde{\bs H}_{\Omega}, \text{sign}(\tilde{\bs x}_0)\text{sign}(\tilde{\bs x}_0)^*\rangle + \|\bs H_{\Omega^c}\|_{\ell_1} \\
&+ \langle \bs H_{\Omega}, \text{sign}(\tilde{\bs x}_0)\bs e_1^* + \bs e_1\text{sign}(\tilde{\bs x}_0)^* \rangle \label{introducingCertificate01}  
\end{align}
We now let $B$ to denote the set of indices $B = \left\{(i,i)|i \in S^c\right\}$ with $S = \text{supp}(\bs x_0)$, from which $\text{Tr}(\tilde{\bs H}_{\Omega^c}) = \text{Tr}(\tilde{\bs H}_B) = \text{Tr}(\bs X_B)\geq 0$.  Moreover, using $\text{Tr}(\bs H) = 0$, and for any vector $\bs Y$ in the range of $\mathcal{A}^*_1$, we can write
\begin{align}
\|\bs X\|_{\ell_1} &   \geq \|\bs X_0\|_{\ell_1} + \langle \tilde{\bs H}_{\Omega}, \text{sign}(\tilde{\bs x}_0)\text{sign}(\tilde{\bs x}_0)^*\rangle + \|\bs H_{\Omega^c}\|_{\ell_1} \\
&+ \langle \bs H_{\Omega}, \text{sign}(\tilde{\bs x}_0)\bs e_1^* + \bs e_1\text{sign}(\tilde{\bs x}_0)^* \rangle  + \gamma\text{Tr}(\bs H), \bs H\rangle \\
&   \geq \|\bs X_0\|_{\ell_1} + \langle \tilde{\bs H}_{\Omega}, \text{sign}(\tilde{\bs x}_0)\text{sign}(\tilde{\bs x}_0)^*\rangle + \|\bs H_{\Omega^c}\|_{\ell_1} \\
&+ \langle \bs H_{\Omega}, \text{sign}(\tilde{\bs x}_0)\bs e_1^* + \bs e_1\text{sign}(\tilde{\bs x}_0)^* \rangle   + \gamma\text{Tr}(\bs H_{\Omega^c}) \\
&+ \gamma\text{Tr}(\bs H_{T\cap \Omega}) +\gamma\text{Tr}(\bs H_{T^\perp\cap \Omega})\\
&   \geq \|\bs X_0\|_{\ell_1} + \langle \tilde{\bs H}_{\Omega}, \text{sign}(\tilde{\bs x}_0)\text{sign}(\tilde{\bs x}_0)^*\rangle + \|\bs H_{\Omega^c}\|_{\ell_1} \\
&+ \langle \bs H_{\Omega}, \text{sign}(\tilde{\bs x}_0)\bs e_1^* + \bs e_1\text{sign}(\tilde{\bs x}_0)^* \rangle  + \gamma\text{Tr}(\bs H_{B}) \\
&+ 2^{-1}\gamma \langle \bs x_0\bs x_0^*, \bs H\rangle  +\gamma\text{Tr}(\bs H_{T^\perp\cap \Omega})\\
& \geq \|\bs X_0\|_{\ell_1} + \langle \tilde{\bs H}_{\Omega}, \text{sign}(\tilde{\bs x}_0)\text{sign}(\tilde{\bs x}_0)^*\rangle + \|\bs H_{\Omega^c}\|_{\ell_1} \\
&+ \langle \bs H_{\Omega}, \text{sign}(\tilde{\bs x}_0)\bs e_1^* + \bs e_1\text{sign}(\tilde{\bs x}_0)^* + 2^{-1}\gamma\left( \tilde{\bs x}_0\bs e_1^* + \bs e_1\tilde{\bs x}_0^*\right)\rangle\\
& + \gamma\text{Tr}(\bs H_{B}) + \gamma \text{Tr}(\bs H_{T^\perp\cap \Omega}) + 2^{-1}\gamma\langle \tilde{\bs x}_0\tilde{\bs x}_0^*, \bs H\rangle 
\end{align}
Moreover, we define the tangent space $\tilde{T}$ as $\tilde{T} = \left\{\bs X = \tilde{\bs x}_0\tilde{\bs x}^* + \tilde{\bs x}\tilde{\bs x}_0^*|\tilde{\bs x}, \tilde{\bs x}_0\in \mathbb{R}^n\right\}$. From this we can write
\begin{align}
\|\bs X\|_{\ell_1}& \geq \|\bs X_0\|_{\ell_1} + \langle \tilde{\bs H}_{\Omega}, \text{sign}(\tilde{\bs x}_0)\text{sign}(\tilde{\bs x}_0)^*\rangle + \|\bs H_{\Omega^c}\|_{\ell_1} \\
&+ \langle \bs H_{\Omega}, \text{sign}(\tilde{\bs x}_0)\bs e_1^* + \bs e_1\text{sign}(\tilde{\bs x}_0)^* + 2^{-1}\gamma\left( \tilde{\bs x}_0\bs e_1^* + \bs e_1\tilde{\bs x}_0^*\right)\rangle\\
&   + \gamma\text{Tr}(\bs H_{B}) + \gamma \text{Tr}(\bs H_{T^\perp\cap \Omega}) + 2^{-1}\gamma\langle \tilde{\bs x}_0\tilde{\bs x}_0^*, \bs H\rangle \\
&\geq \|\bs X_0\|_{\ell_1} + \langle \tilde{\bs H}_{\tilde{T}\cap \Omega}, \text{sign}(\tilde{\bs x}_0)\text{sign}(\tilde{\bs x}_0)^*\rangle + \|\bs H_{\Omega^c}\|_{\ell_1} \\
&+  \langle \tilde{\bs H}_{\tilde{T}^\perp\cap \Omega}, \text{sign}(\tilde{\bs x}_0)\text{sign}(\tilde{\bs x}_0)^*\rangle\\
&+ \langle \bs H_{\Omega}, \text{sign}(\tilde{\bs x}_0)\bs e_1^* + \bs e_1\text{sign}(\tilde{\bs x}_0)^* + 2^{-1}\gamma\left( \tilde{\bs x}_0\bs e_1^* + \bs e_1\tilde{\bs x}_0^*\right)\rangle\\
&   + \gamma\text{Tr}(\bs H_{B}) + \gamma \text{Tr}(\bs H_{T^\perp\cap \Omega}) + 2^{-1}\gamma\langle \tilde{\bs x}_0\tilde{\bs x}_0^*, \bs H\rangle\\
&\geq \|\bs X_0\|_{\ell_1} + \langle \tilde{\bs H}_{\Omega}, \mathcal{P}_{\tilde{T}\cap \Omega}\left(\text{sign}(\tilde{\bs x}_0)\text{sign}(\tilde{\bs x}_0)^*\right)\rangle + \|\bs H_{\Omega^c}\|_{\ell_1} \\
&+ \langle \bs H_{\Omega}, \text{sign}(\tilde{\bs x}_0)\bs e_1^* + \bs e_1\text{sign}(\tilde{\bs x}_0)^* + 2^{-1}\gamma\left( \tilde{\bs x}_0\bs e_1^* + \bs e_1\tilde{\bs x}_0^*\right)\rangle\\
& + \gamma\text{Tr}(\bs H_{B}) + \gamma \text{Tr}(\bs H_{T^\perp\cap \Omega}) + 2^{-1}\gamma\langle \tilde{\bs x}_0\tilde{\bs x}_0^*, \bs H\rangle\label{gettingRidOfTperpOmega}
\end{align}
In~\eqref{gettingRidOfTperpOmega} we use $\tilde{\bs H} = \tilde{\bs X} - \tilde{\bs X}_0$ from which $\mathcal{P}_{\tilde{T}^\perp\cap \Omega}(\tilde{\bs H}) =\mathcal{P}_{\tilde{T}^\perp\cap \Omega}(\tilde{\bs X})  \succeq  0$. 
Using $\|\tilde{\bs x}_0\|_2 = 1$ and developing,  we obtain 
\begin{align}
\|\bs X\|_{\ell_1}&\geq \|\bs X_0\|_{\ell_1} +  \langle \tilde{\bs H}_{ \Omega}, \left(\text{sign}(\tilde{\bs x}_0)\tilde{\bs x}_0^* + \tilde{\bs x}_0\text{sign}(\tilde{\bs x}_0)^*\right) \rangle \|\tilde{\bs x}_0\|_{\ell_1} \\
&- \langle \tilde{\bs H}_{ \Omega}, \tilde{\bs x}_0\tilde{\bs x}_0^*\rangle \|\tilde{\bs x}_0\|^2_{\ell_1} + \|\bs H_{\Omega^c}\|_{\ell_1} \\
&+ \langle \bs H_{\Omega}, \text{sign}(\tilde{\bs x}_0)\bs e_1^* + \bs e_1\text{sign}(\tilde{\bs x}_0)^* + 2^{-1}\gamma\left( \tilde{\bs x}_0\bs e_1^* + \bs e_1\tilde{\bs x}_0^*\right)\rangle\\
&   + \gamma\text{Tr}(\bs H_{B}) + \gamma \text{Tr}(\bs H_{T^\perp\cap \Omega}) + 2^{-1}\gamma\langle \tilde{\bs x}_0\tilde{\bs x}_0^*, \bs H\rangle\label{gettingRidOfTOmega}
\end{align}
An application of the Hanson-Wright inequality (see Proposition~\ref{HansonWright} below), noting that 
\begin{align}
\langle \tilde{\bs H}_{\Omega}, \text{sign}(\tilde{\bs x}_0)\tilde{\bs x}_0^*\rangle = \langle \tilde{\bs H}_{\Omega}\text{diag}(|\tilde{\bs x}_0|), \text{sign}(\tilde{\bs x}_0)\text{sign}(\tilde{\bs x}_0)^*\rangle  
\end{align}
hence using $\text{Tr}(\bs H_{T\cap \Omega}) + \text{Tr}(\bs H_{T^\perp \cap \Omega}) + \text{Tr}(\bs H_{B}) = \text{Tr}(\bs H) = 0$ together with $\bs H_B, \bs H_{T^\perp\cap \Omega}\succeq 0$
\begin{align}
\left|\mathbb{E}_{s(t)}\langle \tilde{\bs H}_{\Omega}, \text{sign}(\tilde{\bs x}_0)\tilde{\bs x}_0^*\rangle\right| &= \left|\mathbb{E}_{s(t)}\langle \tilde{\bs H}_{\Omega}\text{diag}(|\tilde{\bs x}_0|), \text{sign}(\tilde{\bs x}_0)\text{sign}(\tilde{\bs x}_0)^*\rangle \right|\\
& \leq \mu_0 \sqrt{k} \|\bs H_{\Omega}\|_F
\end{align}
and similarly 
\begin{align}
\left|\mathbb{E}_{s(t)}\langle \tilde{\bs H}_{\Omega}, \tilde{\bs x}_0\tilde{\bs x}_0^*\rangle\right| &\lesssim \mu_0^2 \sqrt{k} \|\bs H_{\Omega}\|_F
\end{align}
then gives 
\begin{align}
\|\bs X\|_{\ell_1}&\geq \|\bs X_0\|_{\ell_1} - \mu_0 (1+\log^{\alpha}(n))\|\bs H_\Omega\|_F \|\tilde{\bs x}_0\|_{\ell_1} - \mu_0\|\bs H_\Omega\|_F\sqrt{k} \|\tilde{\bs x}_0\|_{\ell_1} \\
&- \left\| \bs H_{ \Omega}\right\|_F \mu_0^2 (1+\log^{\alpha}(n)) \|\tilde{\bs x}_0\|^2_{\ell_1} - \mu_0^2 \|\bs H_\Omega\|_F \sqrt{k}\|\tilde{\bs x}_0\|^2_{\ell_1} \\
& +   \|\bs H_{\Omega^c}\|_{\ell_1} + \langle \bs H_{\Omega}, \text{sign}(\tilde{\bs x}_0)\bs e_1^* + \bs e_1\text{sign}(\tilde{\bs x}_0)^* + 2^{-1}\gamma\left( \tilde{\bs x}_0\bs e_1^* + \bs e_1\tilde{\bs x}_0^*\right)\rangle\label{afterApplyingHansonWright01}\\
& + \gamma\text{Tr}(\bs H_{B}) + \gamma \text{Tr}(\bs H_{T^\perp\cap \Omega}) - 2^{-1}\gamma \mu_0^2 \left(1+\log^{\alpha}(n)\right) \|\bs H_{\Omega}\|_F\\
&- 2^{-1}\gamma \mu_0^2 \left(1+\log^{\alpha}(n)\right) \|\bs H_{\Omega}\|_F \sqrt{k}\label{afterApplyingHansonWright02}
\end{align}
with probability at least $1-o_n(1)$. 
\begin{proposition}[see Theorem 6.2.1 in~\cite{vershynin2018high}\label{HansonWright}]
Let $\bs x = (x_1, \ldots, x_n)\in \mathbb{R}^n$ be a random vector with independent mean zero, subgaussian coordinates. Let $\bs A$ be an $n\times n$ matrix. Then for every $t\geq 0$, we have
\begin{align}
P\left(|\bs x^T\bs A\bs x - \mathbb{E} \bs x^T\bs A\bs x|\geq t\right)\leq 2\exp\left[-\left(\frac{t^2}{K^2\|\bs A\|_F^2}\wedge \frac{t}{K^2\|\bs A\|}\right)\right]
\end{align}
where $K = \vee_i \|x_i\|_{\psi_2}$.
\end{proposition}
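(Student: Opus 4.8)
The plan is to follow the classical moment-generating-function route: bound $\mathbb{E}\exp\!\big(\lambda(\bs x^T\bs A\bs x - \mathbb{E}\,\bs x^T\bs A\bs x)\big)$ for $\lambda$ in a suitable range, then apply Markov's inequality and optimize over $\lambda$. First I would split the centered quadratic form into its diagonal and off-diagonal contributions,
\begin{align}
\bs x^T\bs A\bs x - \mathbb{E}\,\bs x^T\bs A\bs x = \sum_{i} \bs A_{ii}\big(x_i^2 - \mathbb{E}\,x_i^2\big) + \sum_{i\neq j} \bs A_{ij}\,x_i x_j,
\end{align}
and estimate the two sums separately, recombining them at the end through a union bound over the two tail events (each at level $t/2$, which only costs a universal constant in the exponent).

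For the diagonal part, each $x_i^2 - \mathbb{E}\,x_i^2$ is a centered sub-exponential variable with $\|x_i^2 - \mathbb{E}\,x_i^2\|_{\psi_1}\lesssim K^2$, so $\sum_i \bs A_{ii}(x_i^2-\mathbb{E}\,x_i^2)$ is a sum of independent centered sub-exponentials. A direct application of Bernstein's inequality yields a tail of exactly the claimed two-regime shape, with $\sum_i \bs A_{ii}^2\le \|\bs A\|_F^2$ controlling the Gaussian (small-$t$) regime and $\max_i|\bs A_{ii}|\le\|\bs A\|$ the exponential (large-$t$) regime, the sub-gaussian parameter $K$ entering through the $\psi_1$-norm bound above.

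The heart of the argument is the off-diagonal sum $S=\sum_{i\neq j}\bs A_{ij}x_ix_j$. Here I would first \emph{decouple}, using the standard decoupling inequality to bound $\mathbb{E}\exp(\lambda S)\le \mathbb{E}\exp(4\lambda\,\bs x^T\bs A\bs y)$, where $\bs y$ is an independent copy of $\bs x$ (only the off-diagonal entries of $\bs A$ enter). Conditioning on $\bs y$, the quantity $\bs x^T\bs A\bs y=\sum_i x_i(\bs A\bs y)_i$ is a sum of independent mean-zero sub-gaussians, so $\mathbb{E}_{\bs x}\exp(4\lambda\,\bs x^T\bs A\bs y)\le \exp\!\big(CK^2\lambda^2\|\bs A\bs y\|_2^2\big)$. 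It then remains to take the expectation over $\bs y$ of $\exp(CK^2\lambda^2\,\bs y^T\bs A^T\bs A\bs y)$, which is itself the MGF of a quadratic form. To close this loop without an infinite recursion I would invoke a Gaussian comparison: since each coordinate is sub-gaussian, this MGF is dominated (up to constants absorbed into $K$) by the same form evaluated at a standard Gaussian vector $\bs g$, and for the Gaussian the MGF is computed exactly as $\prod_i\big(1-cK^2\lambda^2\sigma_i^2\big)^{-1/2}$ in terms of the singular values $\sigma_i$ of $\bs A$. This product is finite and bounded by $\exp(C'K^2\lambda^2\|\bs A\|_F^2)$ as long as $\lambda\lesssim (K\|\bs A\|)^{-1}$, the operator-norm threshold being precisely what guarantees convergence. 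Combining the diagonal and off-diagonal MGF bounds, applying Markov, and optimizing $\lambda$ over this admissible range reproduces the subgaussian tail $t^2/(K^2\|\bs A\|_F^2)$ for small $t$ and the subexponential tail $t/(K^2\|\bs A\|)$ for large $t$.

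The main obstacle is exactly this last step for the off-diagonal term: controlling $\mathbb{E}_{\bs y}\exp(CK^2\lambda^2\,\bs y^T\bs A^T\bs A\bs y)$ for a general sub-gaussian (non-Gaussian) $\bs y$ without the estimate feeding back into another uncontrolled quadratic form. The decoupling step, which removes the dependence between the two copies, together with the Gaussian-comparison evaluation of the resulting MGF via the singular-value product, is what makes the bound terminate. The delicate bookkeeping is aligning the universal constants so that the convergence threshold $\lambda\lesssim(K\|\bs A\|)^{-1}$ produces exactly the operator-norm term $t/(K^2\|\bs A\|)$ in the stated crossover, while the diagonal and decoupled contributions both feed into the Frobenius term $t^2/(K^2\|\bs A\|_F^2)$.
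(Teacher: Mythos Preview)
The paper does not prove this proposition at all: it is quoted verbatim as Theorem~6.2.1 from Vershynin's book and used as a black box in the proof of Proposition~\ref{propositionSDP2lambda1}. There is therefore no ``paper's own proof'' to compare against.

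Your sketch is the standard Rudelson--Vershynin argument (which is essentially what appears in the cited reference): diagonal part via Bernstein for sub-exponentials, off-diagonal part via decoupling followed by a reduction to the Gaussian case. One remark on your presentation of the off-diagonal step: the sentence ``this MGF is dominated (up to constants absorbed into $K$) by the same form evaluated at a standard Gaussian vector'' is doing a lot of work and is not literally a one-line domination. The clean way to close the loop is to \emph{undo} the sub-gaussian MGF bound, writing $\exp(CK^2\lambda^2\|\bs A\bs y\|_2^2)=\mathbb{E}_{\bs g}\exp(C'K\lambda\,\bs g^T\bs A\bs y)$ for a fresh Gaussian $\bs g$, then condition on $\bs g$ and repeat the same trick on $\bs y$ to produce a second Gaussian $\bs g'$; only then do you land on the bilinear Gaussian chaos $\bs g^T\bs A\bs g'$ whose MGF factors through the singular values. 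With that clarification your outline is correct and matches the proof in the cited source.
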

To control~\eqref{afterApplyingHansonWright01} we use an application of Hoeffding’s inequality (see for example~\cite{boucheron2003concentration})
From this we get
\begin{align}\begin{split}
\|\bs X\|_{\ell_1}&\geq \|\bs X_0\|_{\ell_1} - \mu_0 (1+\log^{\alpha}(n))\|\bs H_\Omega\|_F \|\tilde{\bs x}_0\|_{\ell_1} - \mu_0\|\bs H_\Omega\|_F\sqrt{k} \|\tilde{\bs x}_0\|_{\ell_1} \\
&- \left\| \bs H_{ \Omega}\right\|_F \mu_0^2 (1+\log^{\alpha}(n)) \|\tilde{\bs x}_0\|^2_{\ell_1} - \mu_0^2 \|\bs H_\Omega\|_F \sqrt{k}\|\tilde{\bs x}_0\|^2_{\ell_1} \\
& +   \|\bs H_{\Omega^c}\|_{\ell_1}  - \|\bs H_{\Omega}\|_F \log(n) \left(1+2^{-1}\gamma \mu_0\right)\\
& + \gamma\text{Tr}(\bs H_{B}) + \gamma \text{Tr}(\bs H_{T^\perp\cap \Omega}) - 2^{-1}\gamma \mu_0^2 \left(1+\log^{\alpha}(n)\right) \|\bs H_{\Omega}\|_F\\
&- 2^{-1}\gamma \mu_0^2  \|\bs H_{\Omega}\|_F \sqrt{k}\end{split}\label{afterApplyingHansonWright03}
\end{align}
Note that $\text{Tr}(\bs H_{T\cap \Omega^c}) = 0$. Hence $\text{Tr}(\bs H_B) + \text{Tr}(\bs H_{T^\perp\cap \Omega }) = \text{Tr}(\bs H_{T^\perp})$. Let $\bs H_{T} = \bs x_0(\bs x_{S^c})^* + \bs x_{S^c}\bs x_0^*$.  Using the duality of the $\ell_1$ norm, we also have 
\begin{align}
    \|\bs H_{\Omega^c}\|_{\ell_1} &\geq \langle \mathcal{P}_{\Omega^c}\left(\bs H_{T}\right) + \mathcal{P}_{\Omega^c}\left(\bs H_{T^\perp}\right), \bs x_0\text{\upshape sign}(\bs x_{S^c})\rangle\|\bs x_0\|^{-1}_\infty\\
    &\geq \langle \bs x_{S^c}\bs x_0^* + \bs x_0\bs x_{S^c} + \bs H_{T^\perp }, \bs x_0\text{\upshape sign}(\bs x_{S^c})\rangle \|\bs x_0\|^{-1}_\infty\\
    &\geq \langle \bs x_{S^c}\bs x_0^* + \bs x_0\bs x_{S^c} , \bs x_0 \text{\upshape sign}(\bs x_{S^c})\rangle \|\bs x_0\|^{-1}_\infty \\
    &\geq \|\bs x_0\|_{2}\|\bs x_{S^c}\|_{1} \|\bs x_0\|^{-1}_\infty\\
&= 2\mu_0^{-1}\|\bs x_{S^c}\|_{1}\label{lowerBoundl1NormFinal01}
\end{align}

Finally to relate the norm of $\bs H_{\Omega}$ to $\bs H_{\Omega^c}$ and $\bs H_{T^\perp\cap \Omega}$, we will need Propositions~\ref{propositionInjectivityOnTcapOmega} , ~\ref{relationTTperpOmegaLinearMeasurements} and~\ref{PropositionAstarAHOmegaComplement} below which are respectively proved in sections~\ref{sectionpropositionInjectivityOnTcapOmega}, ~\ref{sectionrelationTTperpOmegaLinearMeasurements} and~\ref{sectionProofPropositionAstarAHOmegaComplement}. We start by showing injectivity of the linear transformation $\mathcal{H}$ whose action on symmetric matrices is defined through
\begin{align}
\begin{array}{lll}
\mathcal{S}^{n+1} &\rightarrow& \mathbb{R}^{m+1}\\
\bs X & \mapsto & \displaystyle m^{-1}\sum_{i=1}^m (\bs e_1\bs c_i^* + \bs c_i\bs e_1^*) \langle \left(\bs e_1\bs c_i^* + \bs c_i\bs e_1^*\right), \bs X \rangle+ 2(\bs e_1\bs e_1^*)\langle \bs e_1\bs e_1^*, \bs X \rangle \\
& &- 2\bs I\langle \bs e_1\bs e_1^*, \bs X \rangle - 2\bs e_1\bs e_1^* \langle \bs I, \bs X\rangle  + 6\bs I \langle \bs I, \bs X\rangle  
\end{array}\label{definitionMathcalH}
\end{align}
In what follows, we will use $\mathcal{D}(\bs X)$ to denote the deterministic part of $\mathcal{H}$. I.e. 
\begin{align}
\mathcal{D}(\bs X)& = 2(\bs e_1\bs e_1^*)\langle \bs e_1\bs e_1^*, \bs X \rangle - 2\bs I\langle \bs e_1\bs e_1^*, \bs X \rangle - 2\bs e_1\bs e_1^* \langle \bs I, \bs X\rangle  + 6\bs I \langle \bs I, \bs X\rangle  
\end{align}

\begin{restatable}{proposition}{restateInjectivityOnTcapOmega}
\label{propositionInjectivityOnTcapOmega}
Let $\mathcal{H}$ be defined as in~\eqref{definitionMathcalH}. Let $T$ and $T^\perp$ denote the tangent space to the positive semidefinite cone at $\bs X_0 = \bs x_0\bs x_0^*$ where $\bs x_0 = [1, \tilde{\bs x}_0]$ and $\tilde{\bs x}_0$ is a $k$-sparse vector of the unit sphere, i.e.  $\tilde{\bs x}_0\in \mathbb{S}^{n+1}$. We let $\Omega$ denote the support of $\bs X_0$ and use $\mathcal{P}_{T\cap \Omega}$ to encode the orthogonal projector onto the intersection $T\cap \Omega$. Then as soon as $m\gtrsim k\delta^{-1}$, with probability at least $1-n^{-\alpha}$ we have 
\begin{align}
\left\|\mathcal{P}_{T\cap \Omega}\mathcal{H}\mathcal{P}_{T\cap \Omega}(\bs X) - \mathcal{P}_{T\cap \Omega}(\bs X)\right\|\lesssim \delta\|\bs X\|_F
\end{align}

\end{restatable}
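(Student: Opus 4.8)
The statement is a ``local isometry on the tangent intersection'' estimate of the kind that appears in matrix completion / PhaseLift arguments, so the plan is to reduce it to a concentration bound for a sum of i.i.d.\ rank-one type operators and then pass projectors through. First I would observe that on $T\cap\Omega$ the deterministic part $\mathcal{D}$ acts in a controlled way: since a matrix $\bs X\in T\cap\Omega$ has the form $\bs x_0\bs z^*+\bs z\bs x_0^*$ with $\bs z$ supported on $S\cup\{1\}$, the quantities $\langle \bs e_1\bs e_1^*,\bs X\rangle = 2z_1$ and $\langle \bs I,\bs X\rangle = 2\langle\bs x_0,\bs z\rangle$ are simple linear functionals of $\bs z$, so $\mathcal{P}_{T\cap\Omega}\mathcal{D}\mathcal{P}_{T\cap\Omega}$ is an \emph{explicit} finite-rank operator whose action I can compute by hand; the content of the proposition is that the \emph{random} part
\begin{align}
\mathcal{R}(\bs X) = m^{-1}\sum_{i=1}^m (\bs e_1\bs c_i^*+\bs c_i\bs e_1^*)\,\langle \bs e_1\bs c_i^*+\bs c_i\bs e_1^*,\bs X\rangle
\end{align}
satisfies $\mathcal{P}_{T\cap\Omega}\mathcal{R}\mathcal{P}_{T\cap\Omega}\approx \mathbb{E}[\mathcal{P}_{T\cap\Omega}\mathcal{R}\mathcal{P}_{T\cap\Omega}]$, and that the expectation combines with $\mathcal{D}$ restricted to $T\cap\Omega$ to give exactly the identity on $T\cap\Omega$ (this is why the specific constants $2,2,6$ and the $\bs I$-terms were chosen). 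So step one is the bookkeeping computation $\mathcal{P}_{T\cap\Omega}\mathbb{E}[\mathcal{R}]\mathcal{P}_{T\cap\Omega} + \mathcal{P}_{T\cap\Omega}\mathcal{D}\mathcal{P}_{T\cap\Omega} = \mathcal{P}_{T\cap\Omega}$; here I use $\mathbb{E}\langle \bs e_1\bs c_i^*+\bs c_i\bs e_1^*,\bs X\rangle(\bs e_1\bs c_i^*+\bs c_i\bs e_1^*) = $ (a fixed quadratic form in $\bs c_i$) and the Gaussian moment identities $\mathbb{E}[\bs c\bs c^*]=\bs I$, $\mathbb{E}[c_ac_bc_cc_d]$ handled by Wick's formula, restricted to coordinates in $S\cup\{1\}$.

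Step two is the concentration estimate. I would write $\mathcal{P}_{T\cap\Omega}\mathcal{R}\mathcal{P}_{T\cap\Omega} = m^{-1}\sum_i Z_i$ with $Z_i = \mathcal{P}_{T\cap\Omega}\big[(\bs e_1\bs c_i^*+\bs c_i\bs e_1^*)\langle \bs e_1\bs c_i^*+\bs c_i\bs e_1^*,\cdot\rangle\big]\mathcal{P}_{T\cap\Omega}$, viewed as operators on the $O(k)$-dimensional space $T\cap\Omega$, and apply the matrix Bernstein inequality of Proposition~\ref{BernsteinMatrix} exactly as in the proof of Proposition~\ref{propositionOnlyLinearVectorL1}. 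The two inputs needed are: (i) the variance proxy $\sigma^2 = \|\mathbb{E}(Z_i-\mathbb{E}Z_i)^2\|\lesssim k$, which follows because $Z_i$ is (up to the projector) a rank-$\le 2$ operator with $\|Z_i\|\lesssim \|\bs c_{i,S}\|^2\lesssim k$ typically, and the relevant fourth-moment computation is again Wick on $k$ coordinates; and (ii) the Orlicz bound $\|\,\|Z_i-\mathbb{E}Z_i\|\,\|_{\psi_{1/2}}\lesssim k$ (the operator norm of $Z_i$ is controlled by $\|\bs c_{i,S}\|^2$, a chi-square on $k$ degrees, which is $\psi_1$ in itself hence $\psi_{1/2}$ after the product structure; this is the same Orlicz estimate used above and handled via lemma 2.2.1 of~\cite{wellner2013weak}). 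Matrix Bernstein then yields $\|\mathcal{P}_{T\cap\Omega}\mathcal{R}\mathcal{P}_{T\cap\Omega} - \mathbb{E}\|\lesssim \sqrt{kt/m}\vee kt/m$ on the $O(k)$-dimensional ambient space, and taking $t\asymp\log^\alpha(n)$ gives the bound $\lesssim\delta$ with probability $1-n^{-\alpha}$ as soon as $m\gtrsim k\delta^{-1}$ (up to log factors). Combining with step one, $\|\mathcal{P}_{T\cap\Omega}\mathcal{H}\mathcal{P}_{T\cap\Omega}(\bs X) - \mathcal{P}_{T\cap\Omega}(\bs X)\| = \|\mathcal{P}_{T\cap\Omega}\mathcal{R}\mathcal{P}_{T\cap\Omega}(\bs X) - \mathbb{E}[\cdots](\bs X)\|\lesssim\delta\|\bs X\|_F$, which is the claim.

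The step I expect to be the main obstacle is step one, the deterministic bookkeeping: one has to verify that the particular combination in~\eqref{definitionMathcalH} --- the coefficients $2$, $-2$, $-2$, $+6$ on the $\bs e_1\bs e_1^*$ and $\bs I$ correction terms --- is precisely what is needed so that $\mathcal{P}_{T\cap\Omega}\mathcal{H}\mathcal{P}_{T\cap\Omega}$ has the identity as its mean, and this requires carefully tracking how $\langle\bs e_1\bs e_1^*,\bs X\rangle$ and $\langle\bs I,\bs X\rangle$ interact with $\mathbb{E}[(\bs e_1\bs c_i^*+\bs c_i\bs e_1^*)\langle\bs e_1\bs c_i^*+\bs c_i\bs e_1^*,\bs X\rangle]$ once everything is sandwiched by $\mathcal{P}_{T\cap\Omega}$ and restricted to $\bs x_0=[1,\tilde{\bs x}_0]$ with unit-norm $\tilde{\bs x}_0$. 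The random part is routine given the template already set in the previous section; the only mild subtlety there is that $Z_i$ involves a genuine quadratic-in-$\bs c_i$ term (unlike the linear-in-data terms of Proposition~\ref{propositionOnlyLinearVectorL1}), so the Orlicz index degrades from $\psi_1$ to $\psi_{1/2}$, but matrix Bernstein as stated accommodates any $\alpha\ge\tfrac12$ with only an extra $\log^{2}$ factor, which is absorbed into the ``up to log factors'' in the statement. Throughout one should keep in mind that $T\cap\Omega$ has dimension $O(k)$ rather than $O(n)$, so all the ``$n$''s in the ambient-dimension logarithms of Bernstein are really ``$k$''s (or at worst $n$ via the trivial bound), which does not affect the sample complexity $m\gtrsim k\delta^{-1}$ up to logs.
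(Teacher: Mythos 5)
Your plan matches the paper's proof essentially step for step: the paper likewise first carries out the explicit expectation computation showing that $\mathbb{E}\,\mathcal{P}_{T\cap\Omega}\mathcal{H}\mathcal{P}_{T\cap\Omega}(\bs u\bs u^*)=\mathcal{P}_{T\cap\Omega}(\bs u\bs u^*)$ (this is exactly where the coefficients $2,-2,-2,6$ are verified), and then applies the matrix Bernstein inequality with variance proxy $\sigma^2\lesssim k$ to get the $\sqrt{kt/m}\vee kt/m$ deviation. The only small discrepancy is your Orlicz bookkeeping: since $Z_i$ is a product of two terms each linear in $\bs c_i$ (one of $\psi_2$-size $\sqrt{k}$, one of $\psi_2$-size $O(1)$), the paper obtains $\|Z_i-\mathbb{E}Z_i\|_{\psi_1}\lesssim\sqrt{k}$ rather than your weaker $\psi_{1/2}$ estimate, which keeps the argument within the $\alpha\geq 1$ regime of Proposition~\ref{BernsteinMatrix} as stated.
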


\begin{restatable}{proposition}{restaterelationTTperpOmegaLinearMeasurements}
\label{relationTTperpOmegaLinearMeasurements}
Let $\mathcal{H}$ be defined as in~\eqref{definitionMathcalH}. Let $T$ and $T^\perp$ denote the tangent space to the positive semidefinite cone at $\bs X_0 = \bs x_0\bs x_0^*$ where $\bs x_0 = [1, \tilde{\bs x}_0]$ and $\tilde{\bs x}_0$ is a $k$-sparse vector of the unit sphere, i.e.  $\tilde{\bs x}_0\in \mathbb{S}^{n+1}$. We let $\Omega$ denote the support of $\bs X_0$ and use $\mathcal{P}_{T\cap \Omega}$ to encode the orthogonal projector onto the intersection $T\cap \Omega$. Then as soon as $m\gtrsim k\delta^{-1}$, with probability at least $1-n^{-\alpha}$ , for any matrix $\bs X\in T^\perp\cap \Omega$, we have 
\begin{align}
\left\|\mathcal{P}_{T\cap \Omega}\mathcal{H}(\bs X) \right\|\lesssim (1+ \delta)\|\bs X\|_F
\end{align}

\end{restatable}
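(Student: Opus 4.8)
The plan is to separate $\mathcal{H} = \mathcal{R} + \mathcal{D}$ into the random part $\mathcal{R}(\bs X) = m^{-1}\sum_{i=1}^m \bs M_i\langle\bs M_i,\bs X\rangle$, $\bs M_i = \bs e_1\bs c_i^* + \bs c_i\bs e_1^*$, and the deterministic part $\mathcal{D}$ of~\eqref{definitionMathcalH}, and for a fixed $\bs X\in T^\perp\cap\Omega$ to write
\begin{align}
\mathcal{P}_{T\cap\Omega}\mathcal{H}(\bs X) = \underbrace{\mathcal{P}_{T\cap\Omega}\big(\mathcal{R}(\bs X) - \mathbb{E}\mathcal{R}(\bs X)\big)}_{\text{random}} + \underbrace{\mathcal{P}_{T\cap\Omega}\big(\mathbb{E}\mathcal{R}(\bs X) + \mathcal{D}(\bs X)\big)}_{\text{deterministic}}.
\end{align}
I would bound the random term by $\delta\|\bs X\|_F$ (with the stated probability) and the deterministic term by a fixed multiple of $\|\bs X\|_F$; together this gives the $(1+\delta)$ estimate. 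Two elementary observations organise everything. First, since $\bs X\in\Omega$ and $\mathcal{P}_\Omega = \mathcal{P}_{T\cap\Omega} + \mathcal{P}_{T^\perp\cap\Omega}$ on $\Omega$, one has $\langle\bs M_i,\bs X\rangle = \langle\mathcal{P}_{T^\perp\cap\Omega}\bs M_i,\bs X\rangle$ (the $T\cap\Omega$-component of $\bs M_i$ is orthogonal to $\bs X$), so only the block $T^\perp\cap\Omega\to T\cap\Omega$ of $\mathcal{R}$ ever acts and $\dim(T\cap\Omega) = k+1$. Second, $\bs X\bs x_0 = 0$ (this is precisely $\bs X\in T^\perp$), so writing $\bs e_1 = \tfrac12\bs x_0 + \bs s$ with $\bs s\perp\bs x_0$, $\|\bs s\|_2^2 = \tfrac12$, one has $\bs X\bs e_1 = \bs X\bs s$, hence $\|\bs X\bs e_1\|_2 \le \|\bs s\|_2\,\|\bs X\| \le \|\bs X\|_F/\sqrt2$.

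For the deterministic term I would first compute $\mathbb{E}\mathcal{R}(\bs X) = 2\big(\bs e_1(\bs J\bs X\bs e_1)^* + (\bs J\bs X\bs e_1)\bs e_1^*\big)$, with $\bs J := \mathbb{E}[\bs c_i\bs c_i^*] = \bs I - \bs e_1\bs e_1^*$ (recall the first coordinate of $\bs c_i$ is null): this sees only the first column of $\bs X$, and by the bound above $\|\mathcal{P}_{T\cap\Omega}\mathbb{E}\mathcal{R}(\bs X)\|_F \lesssim \|\bs J\bs X\bs e_1\|_2 \lesssim \|\bs X\|_F$. The contribution of $\mathcal{D}(\bs X) = 2X_{11}\bs e_1\bs e_1^* - 2X_{11}\bs I - 2\,\mathrm{Tr}(\bs X)\,\bs e_1\bs e_1^* + 6\,\mathrm{Tr}(\bs X)\,\bs I$ is then computed directly from $\mathcal{P}_{T\cap\Omega}(\bs e_1\bs e_1^*)$ and $\mathcal{P}_{T\cap\Omega}(\bs I)$, both of which lie in the two–dimensional family spanned by $\bs x_0\bs x_0^*$ and $\bs x_0\bs s^* + \bs s\bs x_0^*$; the coefficients appearing in $\mathcal{D}$ — the same ones that make $\mathcal{P}_{T\cap\Omega}\mathbb{E}[\mathcal{H}]\mathcal{P}_{T\cap\Omega}$ equal to the identity in Proposition~\ref{propositionInjectivityOnTcapOmega} — are exactly what is needed for the $X_{11}$– and $\mathrm{Tr}(\bs X)$–proportional pieces to collapse to a matrix of Frobenius norm $\lesssim\|\bs X\|_F$. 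I expect this to be the main obstacle: the $\mathrm{Tr}(\bs X)$ term is the only one that could a priori cost an extra $\sqrt k$ (in general $|\mathrm{Tr}(\bs X)| \le \sqrt{k}\,\|\bs X\|_F$ for $\bs X$ supported on a $(k{+}1)$–block), so one has to follow the cancellation carefully rather than bound term by term.

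For the random term I would apply the matrix Bernstein inequality (Proposition~\ref{BernsteinMatrix}) to the i.i.d.\ centered rank-one operators $Z_i - \mathbb{E}Z_i$, where $Z_i:\bs X\mapsto\mathcal{P}_{T\cap\Omega}(\bs M_i)\,\langle\mathcal{P}_{T^\perp\cap\Omega}(\bs M_i),\bs X\rangle$, so that $\mathcal{P}_{T\cap\Omega}(\mathcal{R}-\mathbb{E}\mathcal{R})\mathcal{P}_{T^\perp\cap\Omega} = m^{-1}\sum_{i=1}^m(Z_i - \mathbb{E}Z_i)$; note the dimension enters Proposition~\ref{BernsteinMatrix} only logarithmically, which is what prevents a spurious $k^2$ sample complexity from $\dim(T^\perp\cap\Omega)\asymp k^2$. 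The crucial point — and the reason the sample complexity remains linear in $k$ — is that the Bernstein variance proxy is $O(k)$ and not $O(k^2)$: it is the larger of the operator norms of $\mathbb{E}\big[\|\mathcal{P}_{T^\perp\cap\Omega}\bs M_i\|_F^2\,\mathcal{P}_{T\cap\Omega}(\bs M_i)\mathcal{P}_{T\cap\Omega}(\bs M_i)^*\big]$ and its transpose, i.e.\ $\sup_{\|\bs u\|=1}\mathbb{E}\big[\|\mathcal{P}_{T^\perp\cap\Omega}\bs M_i\|_F^2\langle\mathcal{P}_{T\cap\Omega}\bs M_i,\bs u\rangle^2\big]$, and decomposing $\bs c_i$ (restricted to the support) as $\alpha_i\bs x_0 + \bs r_i$ with $\bs r_i\perp\bs x_0$ one sees that $\|\mathcal{P}_{T^\perp\cap\Omega}\bs M_i\|_F^2 \asymp \|\bs r_i\|_2^2 \asymp \chi^2_k$ while $\langle\mathcal{P}_{T\cap\Omega}\bs M_i,\bs u\rangle$ is a \emph{linear} form in $(\alpha_i,\bs r_i)$, so the expectation factorises as $O(k)\cdot O(1)$ (the residual fourth-moment terms being handled by a Hanson–Wright-type estimate, Proposition~\ref{HansonWright}). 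With $\sigma^2\lesssim k$ and $\|Z_i\|_{\psi_1} \lesssim \|\mathcal{P}_{T\cap\Omega}\bs M_i\|_F\,\|\mathcal{P}_{T^\perp\cap\Omega}\bs M_i\|_F \lesssim k$, Proposition~\ref{BernsteinMatrix} with $t\asymp\log^\alpha n$ gives $\|m^{-1}\sum_i(Z_i-\mathbb{E}Z_i)\| \lesssim \sqrt{k\log^\alpha n/m}\vee k\log^\alpha n/m \lesssim \delta$ as soon as $m\gtrsim k\delta^{-1}$ up to log factors, on an event of probability at least $1-n^{-\alpha}$. Adding the two bounds yields $\|\mathcal{P}_{T\cap\Omega}\mathcal{H}(\bs X)\| \lesssim (1+\delta)\|\bs X\|_F$.
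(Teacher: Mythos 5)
Your overall route is the same as the paper's: split $\mathcal{H}$ into the deterministic part (the mean of the random sum plus $\mathcal{D}$) and the centered random part, bound the former by an absolute constant times $\|\bs X\|_F$ and the latter by $\delta\|\bs X\|_F$ via the matrix Bernstein inequality with variance proxy $\sigma^2\lesssim k$ and a subexponential Orlicz bound, so that $m\gtrsim k\delta^{-1}$ (up to logs) suffices. Two of your refinements are genuine improvements over the paper's write-up: working with the rank-one \emph{operators} $Z_i:\bs X\mapsto \mathcal{P}_{T\cap \Omega}(\bs M_i)\langle \mathcal{P}_{T^\perp\cap \Omega}(\bs M_i),\bs X\rangle$ rather than with the matrices $\mathcal{P}_{T\cap\Omega}(\bs M_i)\langle\bs M_i,\bs X\rangle$ for a fixed $\bs X$ yields the bound uniformly over $\bs X\in T^\perp\cap\Omega$, which is what the statement actually asserts (the paper's fixed-$\bs X$ Bernstein argument only gives the pointwise version); and the observation $\bs X\bs x_0=\bs 0$, $\bs e_1=\tfrac12\bs x_0+\bs s$ cleanly explains why $\mathbb{E}\mathcal{R}(\bs X)$, which depends only on $\bs X\bs e_1$, is controlled by $\|\bs X\|_F$.

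The one step you flag as ``the main obstacle'' is, however, not resolved, and the cancellation you hope for does not occur. Since $\mathbb{E}\mathcal{R}(\bs X)=2\left(\bs e_1(\bs J\bs X\bs e_1)^*+(\bs J\bs X\bs e_1)\bs e_1^*\right)$ sees only the first column of $\bs X$, nothing in it is proportional to $\mathrm{Tr}(\bs X)$, so the piece $\left(-2\mathcal{P}_{T\cap\Omega}(\bs e_1\bs e_1^*)+6\bs I_{T\cap\Omega}\right)\mathrm{Tr}(\bs X)$ of $\mathcal{D}(\bs X)$ stands alone; its coefficient matrix $-(\bs x_0\bs e_1^*+\bs e_1\bs x_0^*)+\tfrac72\bs x_0\bs x_0^*$ has Frobenius norm of order one, so this term is of size $|\mathrm{Tr}(\bs X)|$, which for $\bs X\in T^\perp\cap\Omega$ can be as large as $\sqrt{k}\,\|\bs X\|_F$ (take $\bs X$ to be the orthogonal projector onto the complement of $\bs x_0$ inside the $\Omega$-block: $\mathrm{Tr}(\bs X)=k$, $\|\bs X\|_F=\sqrt{k}$). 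So a term-by-term \emph{or} a cancellation argument gives at best $\|\mathcal{P}_{T\cap\Omega}\mathcal{H}(\bs X)\|\lesssim\|\bs X\|_F+\mathrm{Tr}(\bs X)$, not $(1+\delta)\|\bs X\|_F$. To be fair, the paper's own proof simply asserts $\|\mathcal{D}(\bs X)\|\lesssim\|\bs X\|_F$ at this point without justification, so you are at parity with it and more candid; but to make the argument airtight you would need either to restrict to $\bs X\succeq 0$ and carry the $\mathrm{Tr}(\bs X)$ term explicitly (which is compatible with how the estimate is consumed downstream, where $\|\bs H_{T^\perp\cap\Omega}\|_F$ is anyway replaced by $\mathrm{Tr}(\bs H_{T^\perp\cap\Omega})$, and mirrors the explicit $\mathrm{Tr}(\bs H_B)$ term in Proposition~\ref{PropositionAstarAHOmegaComplement}), or to accept a $\sqrt{k}$ loss in the stated constant.
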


Combining Propositions~\ref{propositionInjectivityOnTcapOmega} and~\ref{relationTTperpOmegaLinearMeasurements}, and noting that
\begin{align}
\left\|\left(\mathcal{P}_{T\cap \Omega}\mathcal{H}\mathcal{P}_{T\cap \Omega}\right)^{-1}\right\| 
&= \left\|\left(\imath - \left(\imath - \mathcal{P}_{T\cap \Omega}\mathcal{H}\mathcal{P}_{T\cap \Omega}\right)\right)^{-1}\right\| \\
&= \sum_{k=0}^\infty \left\|\imath - \mathcal{P}_{T\cap \Omega}\mathcal{H}\mathcal{P}_{T\cap \Omega}\right\|^k\\
& =  \frac{1}{1-\delta}
\end{align}
with probability at least $1-n^{-\alpha}$ as soon as $m\gtrsim k\log^{\alpha}(n)$.  From this, we can write 
\begin{align}
\left\|\bs H_{T\cap \Omega}\right\|_F& =\left\| \left(\mathcal{P}_{T\cap \Omega}\mathcal{H}\mathcal{P}_{T\cap \Omega}\right)^{-1} \mathcal{H}\bs H_{T\cap \Omega}\right\|\\
&\leq \left\|\left(\mathcal{P}_{T\cap \Omega}\mathcal{H}\mathcal{P}_{T\cap \Omega}\right)^{-1}\right\|\left\|\mathcal{P}_{T\cap \Omega}\mathcal{H}\bs H_{T\cap \Omega}\right\|_F\\
&\leq  \left\|\mathcal{P}_{T\cap \Omega}\mathcal{H}\bs H_{T\cap \Omega}\right\|_F\\
&\leq  \left\|\mathcal{P}_{T\cap \Omega}\mathcal{H}\left(\bs H - \bs H_{T^\perp \cap \Omega} - \bs H_{\Omega^c}\right)\right\|_F\\
&\lesssim \|\bs H_{T^\perp\cap \Omega}\|_F + \|\mathcal{P}_{T\cap \Omega}\mathcal{H}\bs H_{\Omega^c}\|_F\label{relationBetweenHTOmegaHTperpOmegaHOmegaComplement001}
\end{align}

To obtain a finer control on the second term in~\eqref{relationBetweenHTOmegaHTperpOmegaHOmegaComplement001}, we use Proposition~\ref{PropositionAstarAHOmegaComplement} below which is proved in section~\ref{sectionProofPropositionAstarAHOmegaComplement}.

\begin{restatable}{proposition}{restatePropositionAstarAHOmegaComplement}
\label{PropositionAstarAHOmegaComplement}
Let $\mathcal{H}$ be defined as in~\eqref{definitionMathcalH}. Let $T$ and $T^\perp$ denote the tangent space to the positive semidefinite cone at $\bs X_0 = \bs x_0\bs x_0^*$ where $\bs x_0 = [1, \tilde{\bs x}_0]$ and $\tilde{\bs x}_0$ is a $k$-sparse vector of the unit sphere, i.e.  $\tilde{\bs x}_0\in \mathbb{S}^{n+1}$. We let $\Omega$ denote the support of $\bs X_0$ and use $\mathcal{P}_{T\cap \Omega}$ to encode the orthogonal projector onto the intersection $T\cap \Omega$. Then as soon as $m\gtrsim k\delta^{-1}$, (up to log factors) for any matrix $\bs H\in \Omega^c$, with probability at least $1-n^{-\alpha}$, if we let 
\begin{align}
\bs H = \left[\begin{array}{cc}
1& \bs h^*\\
\bs h & \tilde{\bs H}
\end{array}\right],
\end{align}
 we have 
\begin{align}
\left\|\mathcal{P}_{T\cap \Omega}\mathcal{H}(\bs H) - \mathbb{E}\mathcal{P}_{T\cap \Omega}\mathcal{H}(\bs H)\right\|_F\leq  \delta \|\bs h_{S^c}\|_2 + 8\text{\upshape Tr}(\bs H_{B})
\end{align}
with probability at least $1-n^{-\alpha}$ as soon as $m\gtrsim \delta^{-1} k$. 

\end{restatable}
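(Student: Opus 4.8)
The plan is to write $\mathcal{H}=\mathcal{R}+\mathcal{D}$, where $\mathcal{R}(\bs X)=m^{-1}\sum_{i=1}^{m}(\bs e_1\bs c_i^{*}+\bs c_i\bs e_1^{*})\,\langle\bs e_1\bs c_i^{*}+\bs c_i\bs e_1^{*},\bs X\rangle$ is the random term and $\mathcal{D}$ is the deterministic remainder, and to control $\mathcal{P}_{T\cap\Omega}\mathcal{R}(\bs H)$ and $\mathcal{P}_{T\cap\Omega}\mathcal{D}(\bs H)$ separately; these two contributions will account for the summands $\delta\|\bs h_{S^{c}}\|_2$ and $8\,\Tr(\bs H_{B})$ respectively. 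Throughout $\bs H\in\Omega^{c}$, so in particular $\bs H_{11}=0$ (as $(1,1)\in\Omega$) and $\bs H\bs e_1$ has support in $S^{c}$; I write $\bs h_{S^{c}}$ for this vector, and recall $\|\bs x_0\|_2^{2}=2$, $\bs x_0^{*}\bs e_1=1$, $\bs x_0^{*}\tilde{\bs x}_0=\|\tilde{\bs x}_0\|_2^{2}=1$.

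\emph{Deterministic part.} Since $\bs H_{11}=0$ we get $\langle\bs e_1\bs e_1^{*},\bs H\rangle=0$, while $\langle\bs I,\bs H\rangle=\Tr(\bs H)=\Tr(\bs H_{B})$ because the only diagonal positions outside $\Omega$ are exactly those of $B$. Hence $\mathcal{D}(\bs H)=\Tr(\bs H_{B})\,(6\bs I-2\bs e_1\bs e_1^{*})$ and $\mathcal{P}_{T\cap\Omega}\mathcal{D}(\bs H)=\Tr(\bs H_{B})\,\mathcal{P}_{T\cap\Omega}(6\bs I-2\bs e_1\bs e_1^{*})$. Projecting first onto $\Omega$ (harmless for a target inside $\Omega$) gives $\mathcal{P}_{\Omega}(6\bs I-2\bs e_1\bs e_1^{*})=4\bs e_1\bs e_1^{*}+6\sum_{j\in S}\bs e_j\bs e_j^{*}$; then applying the tangent--space projector $\mathcal{P}_{T}(\bs M)=\bs P_0\bs M+\bs M\bs P_0-\bs P_0\bs M\bs P_0$ with $\bs P_0=\tfrac12\bs x_0\bs x_0^{*}$ and simplifying with the identities above yields $\mathcal{P}_{T}\mathcal{P}_{\Omega}(6\bs I-2\bs e_1\bs e_1^{*})=\tfrac32\bs x_0\bs x_0^{*}+\bs x_0\tilde{\bs x}_0^{*}+\tilde{\bs x}_0\bs x_0^{*}$. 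Since $\|\mathcal{P}_{T\cap\Omega}(\bs M)\|_F\le\|\mathcal{P}_{T}\mathcal{P}_{\Omega}\bs M\|_F$, this gives $\|\mathcal{P}_{T\cap\Omega}\mathcal{D}(\bs H)\|_F\le\bigl(3+\sqrt6\bigr)\Tr(\bs H_{B})\le 8\,\Tr(\bs H_{B})$.

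\emph{Random part.} For $\bs H\in\Omega^{c}$ one has $\langle\bs e_1\bs c_i^{*}+\bs c_i\bs e_1^{*},\bs H\rangle=2\langle\bs h_{S^{c}},\bs c_{i,S^{c}}\rangle=:2\xi_i$, so $\mathcal{R}(\bs H)=\tfrac{2}{m}\sum_{i}\xi_i(\bs e_1\bs c_i^{*}+\bs c_i\bs e_1^{*})$, whose mean $2(\bs e_1\bs h_{S^{c}}^{*}+\bs h_{S^{c}}\bs e_1^{*})$ is supported in $\Omega^{c}$ and hence orthogonal to $T\cap\Omega$; thus $\mathcal{P}_{T\cap\Omega}\mathbb{E}\mathcal{R}(\bs H)=\bs 0$ and $\mathcal{P}_{T\cap\Omega}\mathcal{R}(\bs H)=\mathcal{P}_{T\cap\Omega}\bigl(\mathcal{R}(\bs H)-\mathbb{E}\mathcal{R}(\bs H)\bigr)$. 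Using $\mathcal{P}_{T\cap\Omega}=\mathcal{P}_{T\cap\Omega}\mathcal{P}_{\Omega}$ and $\mathcal{P}_{\Omega}\mathcal{R}(\bs H)=\bs e_1\bs u^{*}+\bs u\bs e_1^{*}$ with $\bs u:=\tfrac{2}{m}\sum_i\xi_i\bs c_{i,S}$, we get $\|\mathcal{P}_{T\cap\Omega}\mathcal{R}(\bs H)\|_F\le\|\bs e_1\bs u^{*}+\bs u\bs e_1^{*}\|_F=\sqrt2\,\|\bs u\|_2$. The summands $\xi_i\bs c_{i,S}$ are independent and mean zero, because $\xi_i$ is a function of $\bs c_{i,S^{c}}$, hence independent of $\bs c_{i,S}$, and $\mathbb{E}\xi_i=0$; they have variance proxy $\sigma^{2}=\max\bigl(\|m^{-1}\!\sum_i\mathbb{E}\xi_i^{2}\bs c_{i,S}\bs c_{i,S}^{*}\|,\,m^{-1}\!\sum_i\mathbb{E}\xi_i^{2}\|\bs c_{i,S}\|_2^{2}\bigr)\lesssim k\|\bs h_{S^{c}}\|_2^{2}$ and Orlicz norm $\|\xi_i\bs c_{i,S}\|_{\psi_1}\le\|\xi_i\|_{\psi_2}\,\bigl\|\,\|\bs c_{i,S}\|_2\,\bigr\|_{\psi_2}\lesssim\sqrt k\,\|\bs h_{S^{c}}\|_2$, using $\|\xi_i\|_{\psi_2}\asymp\|\bs h_{S^{c}}\|_2$ and $P(\|\bs c_{i,S}\|_2>\sqrt k+t)\le 2e^{-t^{2}/2}$. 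Proposition~\ref{BernsteinMatrix} with $t=\log^{\alpha}n$ then gives $\|\bs u\|_2\lesssim\sqrt k\,\|\bs h_{S^{c}}\|_2\bigl(\sqrt{\log^{\alpha}n/m}\vee\log^{\alpha}n/m\bigr)\le\delta\|\bs h_{S^{c}}\|_2$ with probability at least $1-n^{-\alpha}$ once $m\gtrsim k\delta^{-1}$ up to log factors (the square on $\delta$ being absorbed into the normalisation, as in Propositions~\ref{propositionInjectivityOnTcapOmega}--\ref{relationTTperpOmegaLinearMeasurements}). Combining, $\mathcal{P}_{T\cap\Omega}\mathcal{H}(\bs H)-\mathbb{E}\mathcal{P}_{T\cap\Omega}\mathcal{H}(\bs H)=\mathcal{P}_{T\cap\Omega}\mathcal{R}(\bs H)$ has Frobenius norm at most $\delta\|\bs h_{S^{c}}\|_2$, so the bound holds a fortiori, with $8\,\Tr(\bs H_{B})$ absorbing the deterministic contribution $\mathcal{P}_{T\cap\Omega}\mathcal{D}(\bs H)$ that the $\mathbb{E}$--subtraction removes and which is nonnegative since $\bs H_{B}=(\bs X-\bs X_0)_{B}\succeq\bs 0$.

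The reductions --- identifying $\mathcal{D}(\bs H)=\Tr(\bs H_{B})(6\bs I-2\bs e_1\bs e_1^{*})$ on $\Omega^{c}$, the explicit projection producing the constant $3+\sqrt6<8$, and the orthogonality $\mathcal{P}_{T\cap\Omega}\mathbb{E}\mathcal{R}(\bs H)=\bs 0$ --- are routine linear algebra. The real work is the concentration of $\bs u=\tfrac{2}{m}\sum_i\xi_i\bs c_{i,S}$: since each $\xi_i\bs c_{i,S}$ is a product of two independent Gaussian objects it is only subexponential, which forces the $\psi_1$--version of matrix Bernstein and a careful estimate of both the variance proxy ($\asymp k\|\bs h_{S^{c}}\|_2^{2}$) and the Orlicz norm ($\asymp\sqrt k\,\|\bs h_{S^{c}}\|_2$), together with tracking of the logarithmic factors to reach the stated sample complexity. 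It should be stressed that this is a \emph{fixed}-$\bs H$ estimate --- the probability is taken over the $\bs c_i$ for each given $\bs H\in\Omega^{c}$ --- so, unlike the operator--norm bounds of Propositions~\ref{propositionInjectivityOnTcapOmega}--\ref{relationTTperpOmegaLinearMeasurements}, whose domains $T\cap\Omega$ and $T^{\perp}\cap\Omega$ are $O(k)$--dimensional, no net/union bound over the $\Theta(n)$--dimensional space $\Omega^{c}$ is used (or available) here.
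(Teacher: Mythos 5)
Your proposal is correct and follows essentially the same route as the paper: split $\mathcal{H}$ into its deterministic part (bounded by $8\,\Tr(\bs H_B)$ using $\langle \bs e_1\bs e_1^*,\bs H\rangle=0$ and $\langle\bs I,\bs H\rangle=\Tr(\bs H_B)$) and its random part, whose projection has zero mean by the independence of $\bs c_{i,S}$ and $\bs c_{i,S^c}$, and then apply the subexponential matrix Bernstein inequality with variance proxy $\asymp k\|\bs h_{S^c}\|_2^2$ and $\psi_1$-norm $\asymp\sqrt{k}\,\|\bs h_{S^c}\|_2$. The only (cosmetic) difference is that you first compress the projected random part to the vector $\bs u=\tfrac{2}{m}\sum_i\xi_i\bs c_{i,S}$ before invoking Bernstein, whereas the paper applies it directly to the projected matrices $Z_i$; you also share the paper's slight looseness in reading $\sqrt{k/m}\lesssim\delta$ rather than $\sqrt{\delta}$ from the sample-size condition $m\gtrsim k\delta^{-1}$.
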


Combining Proposition~\ref{PropositionAstarAHOmegaComplement} with~\eqref{relationBetweenHTOmegaHTperpOmegaHOmegaComplement001}, and letting $\bs H_T = \bs x_0\bs x^* + \bs x\bs x_0^*$,  we finally get 
\begin{align}
\left\|\bs H_{T\cap \Omega}\right\|_F&\leq 2 \|\bs H_{T^\perp\cap \Omega}\|_F + \delta\|\bs h_{S^c}\|_2 + \text{\upshape Tr}(\bs H_B)\\
&\leq 2 \|\bs H_{T^\perp\cap \Omega}\|_F + \delta\|\bs x_{S^c}\|_{\ell_1}+ \delta\text{Tr}(\bs H_{T}^\perp)+ \text{\upshape Tr}(\bs H_B)\label{connectionBetweenHTOmegaAndHTperpOmegaHOmegac}
\end{align} 
In the last line, we used the decomposition 
\begin{align}
\bs H_{\Omega^c} &= \mathcal{P}_{\Omega^c}\left(\bs H_T\right) + \mathcal{P}_{\Omega^c}\left(\bs H_T^\perp \right)\\
&= \bs x_0(\bs x_{S^c})^* + \bs x_{S^c}\bs x_0^* + \mathcal{P}_{\Omega^c}\left(\bs H_T^\perp \right)
\end{align}
where we define $\mathcal{P}_{\Omega^c}\left(\bs H_T^\perp \right)$ blockwise as 
\begin{align}
\mathcal{P}_{\Omega^c}\left(\bs H_{T}^\perp\right) = \left[\begin{array}{cc}
0 & \bs h_{T^\perp\cap \Omega^c}^*\\
 \bs h_{T^\perp\cap \Omega^c} & \tilde{\bs H}_{T^\perp\cap \Omega^c}
\end{array}\right]
\end{align}
Noting that 
\begin{align}
\|\bs h_{T^\perp \cap \Omega^c}\|_2 &\leq \|\mathcal{P}_{\Omega^c}\left(\bs H_T^\perp\right)\|_F\\
& \leq  \|\mathcal{P}_{\Omega^c}\left(\bs H_T^\perp\right)\|_F\\
&\leq   \|\bs H_T^\perp\|_F\\
&\leq  \|\bs H_T^\perp \|_1  = \text{Tr}(\bs H_{T}^\perp)
\end{align}
Substituting~\eqref{connectionBetweenHTOmegaAndHTperpOmegaHOmegac} into~\eqref{relationBetweenHTOmegaHTperpOmegaHOmegaComplement001}, we have
\begin{align}
\|\bs H_{T\cap \Omega}\|_F& \leq 3 \|\bs H_{T^\perp\cap \Omega}\|_F + \delta\|\bs x_{S^c}\|_{\ell_1}+ \delta\text{Tr}(\bs H_{T}^\perp)+ \text{\upshape Tr}(\bs H_B)\\
&\leq 3\text{\upshape Tr}\left(\bs H_{T^\perp\cap \Omega}\right) + \delta \text{\upshape Tr}(\bs H_{T^\perp}) + \delta\|\bs x_{S^c}\|_{\ell_1}+  \text{\upshape Tr}(\bs H_B)
\end{align}
as well as
\begin{align}
\|\bs H_{\Omega}\|_F\leq 4\text{\upshape Tr}\left(\bs H_{T^\perp\cap \Omega}\right) + \delta \text{\upshape Tr}(\bs H_{T^\perp}) + \delta\|\bs x_{S^c}\|_{\ell_1}+  \text{\upshape Tr}(\bs H_B)
\end{align}
Using this last bound together with~\eqref{lowerBoundl1NormFinal01} and  ~\eqref{afterApplyingHansonWright02}, we then obtain
\begin{align}
\|\bs X\|_{\ell_1}&\geq \|\bs X_0\|_{\ell_1} - \mu_0 (1+\log^{\alpha}(n))\|\bs H_\Omega\|_F \|\tilde{\bs x}_0\|_{\ell_1} - \mu_0\|\bs H_\Omega\|_F\sqrt{k} \|\tilde{\bs x}_0\|_{\ell_1} \\
&- \left\| \bs H_{ \Omega}\right\|_F \mu_0^2 (1+\log^{\alpha}(n)) \|\tilde{\bs x}_0\|^2_{\ell_1} - \mu_0^2 \|\bs H_\Omega\|_F \sqrt{k}\|\tilde{\bs x}_0\|^2_{\ell_1} \\
& +   \|\bs H_{\Omega^c}\|_{\ell_1}  - \|\bs H_{\Omega}\|_F \log(n) \left(1+2^{-1}\gamma \mu_0\right)\\
& + \gamma\text{Tr}(\bs H_{B}) + \gamma \text{Tr}(\bs H_{T^\perp\cap \Omega}) - 2^{-1}\gamma \mu_0^2 \left(1+\log^{\alpha}(n)\right) \|\bs H_{\Omega}\|_F\\
&- 2^{-1}\gamma \mu_0^2 \sqrt{k} \|\bs H_{\Omega}\|_F \\
&\geq \|\bs X_0\|_{\ell_1} \\
&- \mu_0 (1+\log^{\alpha}(n))\left(4\text{\upshape Tr}\left(\bs H_{T^\perp\cap \Omega}\right) + \delta \text{\upshape Tr}(\bs H_{T^\perp}) + \delta\|\bs x_{S^c}\|_{\ell_1}+  \text{\upshape Tr}(\bs H_B)\right)  \|\tilde{\bs x}_0\|_{\ell_1} \\
&- \mu_0 \sqrt{k}\left(4\text{\upshape Tr}\left(\bs H_{T^\perp\cap \Omega}\right) + \delta \text{\upshape Tr}(\bs H_{T^\perp}) + \delta\|\bs x_{S^c}\|_{\ell_1}+  \text{\upshape Tr}(\bs H_B)\right)  \|\tilde{\bs x}_0\|_{\ell_1} \\
&- \mu_0^2 (1+\log^{\alpha}(n)) \left(4\text{\upshape Tr}\left(\bs H_{T^\perp\cap \Omega}\right) + \delta \text{\upshape Tr}(\bs H_{T^\perp}) + \delta\|\bs x_{S^c}\|_{\ell_1}+  \text{\upshape Tr}(\bs H_B)\right)  \|\tilde{\bs x}_0\|^2_{\ell_1} \\
&- \mu_0^2 \sqrt{k}\left(4\text{\upshape Tr}\left(\bs H_{T^\perp\cap \Omega}\right) + \delta \text{\upshape Tr}(\bs H_{T^\perp}) + \delta\|\bs x_{S^c}\|_{\ell_1}+  \text{\upshape Tr}(\bs H_B)\right)  \|\tilde{\bs x}_0\|^2_{\ell_1} \\
&+ \|\bs H_{\Omega^c}\|_{\ell_1}\\
&  - \log(n) \left(1+2^{-1}\gamma \mu_0\right)\left(4\text{\upshape Tr}\left(\bs H_{T^\perp\cap \Omega}\right) + \delta \text{\upshape Tr}(\bs H_{T^\perp}) + \delta\|\bs x_{S^c}\|_{\ell_1}+  \text{\upshape Tr}(\bs H_B)\right)\\
&- 2^{-1}\gamma \mu_0^2 \left(1+\log^{\alpha}(n)\right)  \left(4\text{\upshape Tr}\left(\bs H_{T^\perp\cap \Omega}\right) + \delta \text{\upshape Tr}(\bs H_{T^\perp}) + \delta\|\bs x_{S^c}\|_{\ell_1}+  \text{\upshape Tr}(\bs H_B)\right)\\
&- 2^{-1}\gamma \mu_0^2 \sqrt{k} \left(1+\log^{\alpha}(n)\right) \left(4\text{\upshape Tr}\left(\bs H_{T^\perp\cap \Omega}\right) + \delta \text{\upshape Tr}(\bs H_{T^\perp}) + \delta\|\bs x_{S^c}\|_{\ell_1}+  \text{\upshape Tr}(\bs H_B)\right) \\
&+ \gamma (1-\delta)\text{Tr}(\bs H_{B}) + \gamma (1-\delta) \text{Tr}(\bs H_{T^\perp\cap \Omega}) + \delta \gamma \text{\upshape Tr}(\bs H_{T^\perp}) \\
&\geq \|\bs X_0\|_{\ell_1} -|W_0|\left(4\text{\upshape Tr}\left(\bs H_{T^\perp\cap \Omega}\right) + \delta \text{\upshape Tr}(\bs H_{T^\perp}) + \delta\|\bs x_{S^c}\|_{\ell_1}+  \text{\upshape Tr}(\bs H_B)\right)\\
& + \|\bs H_{\Omega^c}\|_{\ell_1}  + \gamma (1-\delta)\text{Tr}(\bs H_{B}) + \gamma (1-\delta) \text{Tr}(\bs H_{T^\perp\cap \Omega}) + \delta \gamma \text{\upshape Tr}(\bs H_{T^\perp}) 
\end{align}
where we define $W_0$ as 
\begin{align}\begin{split}
W_0 &= \left(\mu_0 \|\tilde{\bs x}_0\|_{\ell_1} + \mu_0^2 \|\tilde{\bs x}_0\|_{\ell_1}^2 + 2^{-1}\gamma \mu_0^2 + 1+ 2^{-1}\gamma\mu_0  \right) (1+\log^{\alpha}(n))\\
&+  \left(\mu_0 \|\tilde{\bs x}_0\|_{\ell_1} + \mu_0^2 \|\tilde{\bs x}_0\|_{\ell_1}^2 + 2^{-1}\gamma \mu_0^2 \right) \sqrt{k}
\end{split}\label{definitionW0}
\end{align}
from which, using $\|\bs x_0\|_1\leq \mu_0 \sqrt{k}$ we can write $|W_0|\lesssim\left(\mu_0(k+\gamma) +( k+\gamma)\mu_0^2\sqrt{k}\right)$. Grouping the terms and rearranging, we get 
\begin{align}
\|\bs X\|_{\ell_1}&\geq\|\bs X_0\|_{\ell_1} + \left(\gamma(1-\delta) - 4|W_0|\right) \text{\upshape Tr}(\bs H_{T^\perp\cap \Omega})\label{secondToLast01}\\
&+ \left(\gamma(1-\delta) - |W_0|\right) \text{\upshape Tr}(\bs H_{B})\\
&+ \left(\delta \gamma -|W_0|\delta \right)\text{\upshape Tr}(\bs H_{T^\perp})\label{secondToLastEndMinus1}\\
&+ \left(\mu_0^{-1} - |W_0|\delta \right)\|\bs x_{S^c}\|_{\ell_1}\label{secondToLastEnd}
\end{align}
Taking $\gamma \gtrsim  \mu_0^2  k^{3/2}$, one can make the terms~\eqref{secondToLast01} to~\eqref{secondToLastEndMinus1} positive.  In particular, noting that $\mu_0\sqrt{k}>1$, and taking $\gamma  \asymp \mu_0^2  k^{3/2}$, for the last term to be positive, we need
\begin{align}
1 \geq \mu_0^3   k^{3/2} \delta
\end{align}
which gives the conclusion.

\subsection{\label{auxialliaryResultsLinearMeasMatrixL1}Auxilliary results}

\subsubsection{\label{sectionpropositionInjectivityOnTcapOmega}Proof of Proposition~\ref{propositionInjectivityOnTcapOmega}}

We start by recalling the statement of the proposition below.

\restateInjectivityOnTcapOmega*

\begin{proof}
We will show that the operator $\mathcal{P}_{T\cap \Omega}\mathcal{A}^*\mathcal{A}\mathcal{P}_{T\cap \Omega}$ defined on $T\cap \Omega$ as 
\begin{align*}
\mathcal{P}_{T\cap \Omega}\mathcal{A}^*\mathcal{A}\mathcal{P}_{T\cap \Omega}\; :\; \bs X \mapsto \mathcal{P}_{T\cap \Omega}\left(\bs e_1\bs e_1^*\right)\langle \bs e_1\bs e_1^*, \bs X  \rangle + \frac{1}{2} \sum_{i=1}^m \mathcal{P}_{T\cap \Omega}\left(\bs e_1\bs c_i^* + \bs c_i\bs e_1^*\right) \langle \bs e_1\bs c_i^* + \bs c_i\bs e_1^* , \bs X\rangle  
\end{align*}
satisfies 
\begin{align}
\left\|\mathcal{P}_{T\cap \Omega}\mathcal{A}^*\mathcal{A}\mathcal{P}_{T\cap \Omega} (\bs X) - \bs X\right\|_F \lesssim \sqrt{\frac{k}{m}}\|\bs X\|_F
\end{align}
with probability at least $1-n^{-\alpha}$ for any $\bs X\in T\cap \Omega$.  First note that any matrix in $T$ has rank at most 2.  We define $
\bs Y$ as
\begin{align}
    \mathcal{P}_{T\cap \Omega}(\bs Y)& = \frac{1}{m} \sum_{i=1}^m \mathcal{P}_{T\cap \Omega}\left(\bs e_1\bs c_i^* + \bs c_i\bs e_1^*\right) \langle \mathcal{P}_{T\cap \Omega}\left(\bs c_i\bs e_1^* + \bs e_1\bs c_i^*\right), \bs u\bs u^*\rangle  = \frac{1}{m} \sum_{i=1}^m \xi_i \bs Y_i\label{sumscertificates01Asymmetric}
\end{align}
where $\bs u$ is any vector of $\mathbb{S}^{n+1}$ with $|\text{\upshape supp}(\bs u)| = k$. Using $\|\bs x_0\|_2^2 = 2$
\begin{align}
    \bs Y_{i}& = \frac{1}{2} \langle \bs x_0, \bs c_i\rangle \left(\bs x_0\bs e_1^* + \bs e_1\bs x_0^*\right) +\frac{1}{2} \left(\bs x_0\bs c_{i\Omega}^* + \bs c_{i\Omega}\bs x_0^*\right) - \frac{1}{2}\langle \bs x_0, \bs c_i\rangle \bs x_0\bs x_0^* \\
    \xi_{i} &= \frac{1}{2}\langle \bs x_0, \bs c_i\rangle \left(\bs x_0\bs e_1^* + \bs e_1\bs x_0^*\right) + \frac{1}{2}\left(\bs x_0\bs c_{i\Omega}^* +\bs c_{i\Omega}\bs x_0^*\right) - \frac{1}{2}\langle \bs x_0, \bs c_i\rangle \bs x_0\bs x_0, \bs u\bs u^* \rangle   \\
    & =  u_1 \langle \bs x_0, \bs u\rangle  \langle \bs x_0, \bs c_i\rangle + \langle \bs x_0, \bs u\rangle \langle \bs c_{i\Omega}, \bs u\rangle  - \frac{1}{2}|\langle \bs x_0, \bs u\rangle |^2 \langle \bs x_0, \bs c_i\rangle 
\end{align}
We start by showing that $\mathbb{E}\bs Y = \mathcal{P}_{T\cap \Omega }(\bs u\bs u^*)$. We decompose the sum~\eqref{sumscertificates01Asymmetric} as $\sum_{i=1}^m \xi_{i}\bs Y_{i} = \bs H_1+\bs H_2+\bs H_3$ with
\begin{align}
    \mathbb{E}\bs H_1& = \mathbb{E}\left\{\frac{1}{2m }\sum_{i=1}^m\langle \bs x_0, \bs c_i \rangle \left(\bs x_0\bs e_1^* + \bs e_1\bs x_0^*\right) \left[u_1\langle \bs x_0, \bs u\rangle \langle \bs x_0, \bs c_i\rangle  + \langle \bs x_0, \bs u\rangle\langle \bs c_i, \bs u\rangle\right]\right\}\\
    &+\mathbb{E}\left\{\frac{1}{2m}\sum_{i=1}^m\langle \bs x_0, \bs c_i \rangle \left(\bs x_0\bs e_1^* + \bs e_1\bs x_0^*\right)\left[- \frac{1}{2}|\langle \bs x_0, \bs u\rangle |^2 \langle \bs x_0, \bs c_i\rangle  \right]\right\}\\
    & =  \frac{1}{2}u_1 \langle \bs x_0, \bs u\rangle \|\tilde{\bs x}_0\|^2 \left(\bs x_0\bs e_1^*+\bs e_1\bs x_0^*\right)+ \frac{1}{2}\langle \bs x_0, \bs u\rangle \langle \tilde{\bs x}_0, \tilde{\bs u}\rangle \left(\bs x_0\bs e_1^* + \bs e_1\bs x_0^*\right)\\
    &- \frac{1}{4}\|\tilde{\bs x}_0\|^2 |\langle \bs x_0, \bs u\rangle |^2 \left(\bs x_0\bs e_1^* + \bs e_1\bs x_0^*\right)\\
    & =\frac{1}{4}\|\tilde{\bs x}_0\|^2 |\langle \bs x_0, \bs u\rangle |^2 \left(\bs x_0\bs e_1^* + \bs e_1\bs x_0^*\right)
\end{align}
\begin{align}
    \mathbb{E}\bs H_2& = \mathbb{E}\left\{\frac{1}{2m}\sum_{i=1}^m \left(\bs x_0\bs c_i^* + \bs c_i\bs x_0^*\right) \left[u_1\langle \bs x_0, \bs u\rangle \langle \bs x_0, \bs c_i\rangle +  \langle \bs x_0, \bs u\rangle \langle \bs c_i, \bs u\rangle \right]\right\}\\
    & +\mathbb{E}\left\{\frac{1}{2m}\sum_{i=1}^m \left(\bs x_0\bs c_i^* + \bs c_i\bs x_0^*\right) \left[-\frac{1}{2}|\langle \bs x_0, \bs u\rangle|^2 \langle \bs x_0, \bs c_i\rangle  \right]\right\}\\
    & = \frac{1}{2}\left(\bs x_0\tilde{\bs x}_0^* + \tilde{\bs x}_0\bs x_0^*\right) u_1 \langle \bs x_0, \bs u\rangle+ \frac{1}{2}\langle \bs x_0, \bs u\rangle \left(\bs x_0\tilde{\bs u}_\Omega  + \tilde{\bs u}_\Omega\bs x_0^*\right)\\
&- \frac{1}{4}|\langle \bs x_0, \bs u\rangle|^2 \left(\bs x_0\tilde{\bs x}_0^* +\tilde{\bs x}_0\bs x_0^*\right)
\end{align}
\begin{align}
    \mathbb{E}\bs H_3& = -\frac{1}{2}\mathbb{E}\left\{\frac{1}{m}\sum_{i=1}^m \bs x_0\bs x_0^* \langle \bs x_0, \bs c_i\rangle \left[u_1\langle \bs x_0, \bs u\rangle \langle \bs x_0, \bs c_i\rangle + \langle \bs x_0, \bs u\rangle \langle \bs c_i, \bs u\rangle   \right]\right\}\\
    &  -\frac{1}{2}\mathbb{E}\left\{\frac{1}{m}\sum_{i=1}^m \bs x_0\bs x_0^* \langle \bs x_0, \bs c_i\rangle \left[-\frac{1}{2}|\langle \bs x_0, \bs u\rangle |^2 \langle \bs x_0, \bs c_i\rangle \right] \right\}\\
    & = -\frac{1}{2} \bs x_0\bs x_0^* \|\tilde{\bs x}_0\|^2_2 u_1 \langle \bs x_0, \bs u\rangle - \frac{1}{2}\langle \bs x_0, \bs u\rangle \bs x_0\bs x_0^* \langle \tilde{\bs x}_0, \tilde{\bs u}\rangle + \frac{1}{4}|\langle \bs x_0, \bs u\rangle |^2 \bs x_0\bs x_0^* \|\tilde{\bs x}_0\|^2\\
    & = -\frac{1}{4}|\langle \bs x_0, \bs u\rangle |^2 \bs x_0\bs x_0^* \|\tilde{\bs x}_0\|^2
\end{align}
\begin{align}
    \mathbb{E}\frac{1}{m}\sum_{i=1}^m \xi_{i}\bs Y_{i}& =\frac{1}{4}\|\tilde{\bs x}_0\|^2 |\langle \bs x_0, \bs u\rangle |^2 \left(\bs x_0\bs e_1^* + \bs e_1\bs x_0^*\right) -\frac{1}{4}|\langle \bs x_0, \bs u\rangle |^2 \bs x_0\bs x_0^* \|\tilde{\bs x}_0\|^2\\
&+ \frac{1}{2}\left(\bs x_0\tilde{\bs x}_0^* + \tilde{\bs x}_0\bs x_0^*\right) u_1 \langle \bs x_0, \bs u\rangle+ \frac{1}{2}\langle \bs x_0, \bs u\rangle \left(\bs x_0\tilde{\bs u}_\Omega  + \tilde{\bs u}_\Omega\bs x_0^*\right)\\
&- \frac{1}{4}|\langle \bs x_0, \bs u\rangle|^2 \left(\bs x_0\tilde{\bs x}_0^* +\tilde{\bs x}_0\bs x_0^*\right)
\end{align}
Adding $2\mathcal{P}_{T\cap \Omega}(\bs e_1\bs e_1^*)\langle \mathcal{P}_{T\cap \Omega}(\bs e_1\bs e_1^*), \bs u\bs u^* \rangle $,
\begin{align}
    & \mathcal{P}_{T\cap \Omega}\left(\bs e_1\bs e_1^*\right) \langle \mathcal{P}_{T\cap \Omega}\left(\bs e_1\bs e_1^*\right), \bs u\bs u^*\rangle \\
    & =  \mathcal{P}_{T\cap \Omega}\left(\bs e_1\bs e_1^*\right)\left(\frac{1}{2}\langle \bs x_0, \bs u\rangle u_1 - \frac{1}{4} |\langle \bs x_0, \bs u\rangle |^2 +  \frac{1}{2}u_1 \langle \bs x_0, \bs u\rangle   \right)\\
    & =\frac{1}{2} \left(\bs e_1\bs x_0^* + \bs x_0\bs e_1^* - \frac{1}{2}\bs x_0\bs x_0^*\right) \left(\langle \bs x_0, \bs u\rangle u_1 - \frac{1}{4} |\langle \bs x_0, \bs u\rangle |^2    \right)
\end{align}
we get 
\begin{align}
    &\mathbb{E}\bs Y +  \mathcal{P}_{T\cap \Omega}\left(\bs e_1\bs e_1^*\right) \langle \mathcal{P}_{T\cap \Omega}\left(\bs e_1\bs e_1^*\right), \bs u\bs u^*\rangle\label{developingAstarAweird01}\\
     =& \frac{1}{4}\|\tilde{\bs x}_0\|^2 |\langle \bs x_0, \bs u\rangle |^2 \left(\bs x_0\bs e_1^* + \bs e_1\bs x_0^*\right) -\frac{1}{4}|\langle \bs x_0, \bs u\rangle |^2 \bs x_0\bs x_0^* \|\tilde{\bs x}_0\|^2 \\
&+\frac{1}{2}\left(\bs x_0\tilde{\bs x}_0^* + \tilde{\bs x}_0\bs x_0^*\right) u_1 \langle \bs x_0, \bs u\rangle+ \frac{1}{2}\langle \bs x_0, \bs u\rangle \left(\bs x_0\tilde{\bs u}_\Omega  + \tilde{\bs u}_\Omega\bs x_0^*\right)\\
&- \frac{1}{4}|\langle \bs x_0, \bs u\rangle|^2 \left(\bs x_0\tilde{\bs x}_0^* +\tilde{\bs x}_0\bs x_0^*\right)\\
&+ \left(\bs e_1\bs x_0^* + \bs x_0\bs e_1^* - \frac{1}{2}\bs x_0\bs x_0^*\right) \left(\langle \bs x_0, \bs u\rangle u_1 - \frac{1}{4} |\langle \bs x_0, \bs u\rangle |^2    \right) \\
 =& - \frac{1}{2}|\langle \bs x_0, \bs u\rangle|^2 \left(\bs x_0\tilde{\bs x}_0^* +\tilde{\bs x}_0\bs x_0^*\right)\\
& +\frac{1}{2}\left(\bs x_0\tilde{\bs x}_0^* + \tilde{\bs x}_0\bs x_0^*\right) u_1 \langle \bs x_0, \bs u\rangle+ \frac{1}{2}\langle \bs x_0, \bs u\rangle \left(\bs x_0\tilde{\bs u}_\Omega  + \tilde{\bs u}_\Omega\bs x_0^*\right)\\
& + \left(\bs e_1\bs x_0^* + \bs x_0\bs e_1^* - \frac{1}{2}\bs x_0\bs x_0^*\right) \left(\langle \bs x_0, \bs u\rangle u_1 - \frac{1}{4} |\langle \bs x_0, \bs u\rangle |^2    \right)\\
 =& \bs x_0\bs x_0^* u_1 \langle \bs x_0, \bs u\rangle + \langle \bs x_0, \bs u \rangle \left(\bs x_0\bs u_{\Omega}^* + \bs u_{\Omega} \bs x_0^*\right)  \frac{1}{2}\\
& - \frac{1}{2} |\langle \bs x_0, \bs u\rangle |^2 \left(\bs x_0\tilde{\bs x}_0^* + \tilde{\bs x}_0\bs x_0^*\right)\\
& - \frac{1}{2} \bs x_0\bs x_0^* \langle \bs x_0, \bs u\rangle u_1 + \frac{1}{8} \bs x_0\bs x_0^* |\langle \bs x_0, \bs u\rangle |^2 - \frac{1}{4} |\langle \bs x_0, \bs u\rangle |^2 \left(\bs x_0\bs e_1^* + \bs e_1\bs x_0^*\right)\\
= & \mathcal{P}_{T\cap \Omega} \left(\bs u\bs u^*\right) + \frac{1}{4} |\langle \bs x_0, \bs u\rangle |^2 \bs x_0\bs x_0^* + \bs x_0\bs x_0^* u_1 \langle \bs x_0, \bs u\rangle\\
& - \frac{1}{2} |\langle \bs x_0, \bs u\rangle |^2 \bs x_0\bs x_0^* - \frac{1}{4} |\langle \bs x_0, \bs u\rangle |^2 \left(\bs x_0\tilde{\bs x}_0^* + \tilde{\bs x}_0\bs x_0^*\right)\\
& - \frac{1}{2} \bs x_0\bs x_0^* \langle \bs x_0, \bs u\rangle u_1 + \frac{1}{8} \bs x_0\bs x_0^* |\langle \bs x_0, \bs u\rangle |^2 \\
 =& \mathcal{P}_{T\cap \Omega}(\bs u\bs u^*) + \frac{1}{2}\bs x_0\bs x_0^* u_1 \langle \bs x_0, \bs u\rangle - \frac{1}{8} |\langle \bs x_0, \bs u\rangle |^2 \bs x_0\bs x_0^* \\
& - \frac{1}{4} |\langle \bs x_0, \bs u\rangle |^2 \left(\bs x_0\tilde{\bs x}_0^* + \tilde{\bs x}_0\bs x_0^*\right)   \\
 =& \mathcal{P}_{T\cap \Omega}\left(\bs u\bs u^*\right) + \frac{1}{2} \bs x_0\bs x_0^* u_1 \langle \bs x_0, \bs u\rangle - \frac{1}{8} |\langle \bs x_0, \bs u\rangle |^2 \bs x_0\bs x_0^* \\
&- \frac{1}{2} |\langle \bs x_0, \bs u\rangle |^2 \bs x_0\bs x_0^* \\
&+ \frac{1}{4} |\langle \bs x_0, \bs u\rangle |^2 \left(\bs x_0\bs e_1^* + \bs e_1\bs x_0^*\right)\\
 =& \mathcal{P}_{T\cap \Omega} \left(\bs u\bs u^*\right) + 2 \mathcal{P}_{T\cap \Omega}\left(\bs e_1\bs e_1^*\right) \langle \bs I_{T\cap \Omega}, \bs u\bs u^*\rangle + \frac{1}{8} |\langle \bs x_0, \bs u\rangle |^2 \bs x_0\bs x_0^* \\
& - \frac{1}{8} |\langle \bs x_0, \bs u\rangle |^2 \bs x_0\bs x_0^* \\
&+ \frac{1}{2} \bs x_0\bs x_0^* u_1 \langle \bs x_0, \bs u\rangle - \frac{1}{2} |\langle \bs x_0, \bs u\rangle |^2 \bs x_0\bs x_0^*\\
 = &\mathcal{P}_{T\cap \Omega} \left(\bs u\bs u^*\right) + 2\mathcal{P}_{T\cap \Omega} \left(\bs e_1\bs e_1^*\right) \langle \bs I_{T\cap \Omega}, \bs u\bs u^*\rangle\\
&+ 2\bs I_{T\cap \Omega}    \langle \mathcal{P}_{T\cap \Omega}(\bs e_1\bs e_1^*), \bs u\bs u^* \rangle + \frac{1}{8}\bs x_0\bs x_0^* \langle\bs x_0\bs x_0^*, \bs u\bs u^*\rangle  \\
 =& \mathcal{P}_{T\cap \Omega}\left(\bs u\bs u^*\right) +2 \mathcal{P}_{T\cap \Omega} \left(\bs e_1\bs e_1^*\right)\langle \bs I_{T\cap \Omega}, \bs u\bs u^*\rangle + 2\bs I_{T\cap \Omega} \langle \mathcal{P}_{T\cap \Omega}(\bs e_1\bs e_1^*), \bs u\bs u^*\rangle\\
& - \frac{3}{8} |\langle \bs x_0, \bs u\rangle |^2 \bs x_0\bs x_0^*\\
 =& \mathcal{P}_{T\cap \Omega}\left(\bs u\bs u^*\right) +2 \mathcal{P}_{T\cap \Omega} \left(\bs e_1\bs e_1^*\right)\langle \bs I_{T\cap \Omega}, \bs u\bs u^*\rangle + 2\bs I_{T\cap \Omega} \langle \mathcal{P}_{T\cap \Omega}(\bs e_1\bs e_1^*), \bs u\bs u^*\rangle\\
& - 6 \bs I_{T\cap \Omega} \langle \bs I_{T\cap \Omega}, \bs u\bs u^*\rangle 
 \label{ExpectationProjectionTOmegaLambda1}
\end{align}
From~\eqref{developingAstarAweird01} to~\eqref{ExpectationProjectionTOmegaLambda1}, we can thus write
\begin{align}
\mathcal{P}_{T\cap \Omega}\mathcal{H}(\bs u\bs u^*)\mathcal{P}_{T\cap \Omega} =& \frac{1}{m } \sum_{i=1}^m \mathcal{P}_{T\cap \Omega}\left(\bs c_i\bs e_1^* + \bs e_1\bs c_i^*\right) \langle \mathcal{P}_{T\cap \Omega}\left(\bs c_i\bs e_1^* + \bs e_1\bs c_i^*\right), \bs u\bs u^*\rangle \\
& + 2\mathcal{P}_{T\cap \Omega}\left(\bs e_1\bs e_1^*\right)\langle \mathcal{P}_{T\cap \Omega}\left(\bs e_1\bs e_1^*\right), \bs u\bs u^*\rangle \\
& - 2  \bs I_{T\cap \Omega} \langle \mathcal{P}_{T\cap \Omega}\left(\bs e_1\bs e_1^*\right), \bs u\bs u^*\rangle - 2  \mathcal{P}_{T\cap \Omega}\left(\bs e_1\bs e_1^*\right) \langle\bs I_{T\cap \Omega} \bs u\bs u^*\rangle \\
&+ 6 \bs I_{T\cap \Omega}\langle \bs I_{T\cap \Omega}, \bs u\bs u^*\rangle 
\end{align}
for which 
\begin{align}
\mathbb{E}\mathcal{P}_{T\cap \Omega}\mathcal{H}(\bs u\bs u^*)\mathcal{P}_{T\cap \Omega}  = \mathcal{P}_{T\cap \Omega}(\bs u\bs u^*) 
\end{align}
We now show concentration of this operator around its expectation.  
Let $Z_i$ to denote the variable 
\begin{align*}
Z_i &= \mathcal{P}_{T\cap \Omega}\left(\bs c_i\bs e_1^* + \bs e_1\bs c_i^*\right)\langle \mathcal{P}_{T\cap \Omega}\left(\bs c_i\bs e_1^* + \bs e_1\bs c_i^*, \bs u\bs u^*\right)\rangle 
\end{align*}
From this we have
\begin{align}
\mathbb{E} \sum_{i=1}^m Z_iZ_i^* & = \frac{1}{m} \mathbb{E} \sum_{i=1}^m \mathcal{P}_{T\cap \Omega} \left(\bs c_i\bs e_1^* + \bs e_1\bs c_i^*\right) \mathcal{P}_{T\cap \Omega}\left(\bs c_i\bs e_1^* + \bs e_1\bs c_i^*\right) \left|\langle \mathcal{P}_{T\cap \Omega}\left(\bs e_1\bs c_i^* + \bs c_i\bs e_1^*\right), \bs u\bs u^*\rangle \right|^2
\end{align}
we now let $\bs H$ to denote the matrix 
\begin{align}
\bs H& = \langle \bs x_0, \bs c_i\rangle \left(\bs x_0\bs e_1^* + \bs e_1\bs x_0^*\right) + \bs c_i\bs x_0^* + \bs x_0\bs c_i^* - 2\langle \bs x_0, \bs c_i\rangle \bs x_0\bs x_0^*  
\end{align}
Moreover, we let $h_i$ to denote the scalar $\langle \bs H, \bs u\bs u^*\rangle $. I.e. 
\begin{align}
h_i& = \langle \langle \bs x_0, \bs c_i\rangle \left(\bs x_0\bs e_1^* + \bs e_1\bs x_0^*\right) + \left(\bs c_i\bs x_0^* + \bs x_0\bs c_i^*\right) - 2\langle \bs x_0, \bs c_i\rangle \bs x_0\bs x_0^*  , \bs u\bs u^*   \rangle\\
& = 2 \langle \bs x_0, \bs c_i\rangle \langle \bs x_0, \bs u\rangle u_1 + 2\langle \bs c_i, \bs u\rangle \langle \bs x_0, \bs u\rangle - 2|\langle \bs x_0, \bs u\rangle |^2 \langle \bs x_0, \bs c_i\rangle   
\end{align}
From this, we have 
\begin{align}
h_i^2 & \lesssim  |\langle \bs x_0, \bs c_i \rangle |^2 u_1^2 |\langle \bs x_0, \bs u\rangle |^2 + |\langle \bs c_i, \bs u\rangle |^2 |\langle \bs x_0, \bs u\rangle |^2 + |\langle \bs x_0, \bs u\rangle |^4 |\langle \bs x_0, \bs c_i\rangle |^2
\end{align}
Similarly,  noting that for any two matrices $\bs A$ and $\bs B$ we have $\bs A\bs A^* + \bs B\bs B^* \succeq 0 -\bs A\bs B^* - \bs B\bs A^*$ as well as $\bs A\bs A^* + \bs B\bs B^*\succeq \bs A\bs B^* + \bs B\bs A^*$ hence $(\bs A+\bs B)(\bs A+\bs B)^*  \lesssim \bs A\bs A^* + \bs B\bs B^*$ and $\|\left(\bs A+\bs B\right)\left(\bs A+\bs B\right)^* \|\lesssim \|\bs A\bs A^* + \bs B\bs B^*\|$ and from this, that for any two scalars $a$ and $b$,  and psd matrix $\bs A$, we have $\|\bs A(a+b)^2\| \lesssim \|\bs A (a^2 + b^2)\|$, we can focus on the decomposition
\begin{align}
\left\|\mathbb{E} \left\{Z_iZ_i^* \right\}\right\| &\leq \left\|\mathbb{E} \left\{|\langle \bs x_0, \bs c_i\rangle |^4 u_1^2 |\langle \bs x_0, \bs u\rangle |^2 \left(\bs x_0\bs e_1^* + \bs e_1\bs x_0^*\right)\left(\bs x_0\bs e_1^* + \bs e_1\bs x_0^*\right)\right\}\right\| \\
&+ \left\|\mathbb{E}\left\{|\langle \bs x_0, \bs c_i\rangle |^2 |\langle \bs c_i, \bs u\rangle |^2 |\langle \bs x_0, \bs u\rangle |^2 \left(\bs x_0\bs e_1^* + \bs e_1\bs x_0^*\right) \left(\bs x_0\bs e_1^* + \bs e_1\bs x_0^*\right)  \right\}\right\|\\
& + \left\|\mathbb{E} \left\{|\langle \bs x_0, \bs u\rangle |^4 | \langle \bs x_0, \bs c_i\rangle |^4 \left(\bs x_0\bs e_1^* + \bs e_1\bs x_0^*\right)\left(\bs x_0\bs e_1^* + \bs e_1\bs x_0^*\right)\right\}\right\|\\
& + \left\|\mathbb{E} \left\{ |\langle \bs x_0, \bs c_i\rangle |^2 u_1^2 |\langle \bs x_0, \bs u\rangle |^2 \left(\bs c_i\bs x_0^* + \bs x_0\bs c_i^*\right) \left(\bs c_i\bs x_0^* + \bs x_0\bs x_0^*\right) \right\} \right\|\\
&+ \left\|\mathbb{E} \left\{|\langle \bs c_i, \bs u\rangle |^2 |\langle \bs x_0, \bs u\rangle |^2 \left(\bs c_i\bs x_0^* + \bs x_0\bs c_i^*\right) \left(\bs c_i\bs x_0^* + \bs x_0\bs c_i^*\right) \right\}\right\|\label{termcix001}\\
&+ \left\|\mathbb{E} \left\{|\langle \bs x_0, \bs u\rangle |^4 |\langle \bs x_0, \bs c_i\rangle |^2 \left(\bs c_i\bs x_0^* +\bs x_0\bs c_i^*\right) \left(\bs c_i\bs x_0^* + \bs x_0\bs c_i^*\right) \right\}\right\|\\
&+ \left\|\mathbb{E} \left\{|\langle \bs x_0, \bs c_i\rangle | 4 u_1^2 |\langle \bs x_0, \bs u\rangle |^2 \bs x_0\bs x_0^*\right\}\right\|\label{termcix002}\\
&+ \left\|\mathbb{E} \left\{|\langle \bs c_i, \bs u\rangle |^2 |\langle\bs x_0, \bs c_i \rangle |^2 |\langle \bs x_0, \bs u\rangle |^2 \bs x_0\bs x_0^* \right\}\right\|\label{termcix003}\\
&+ \left\|\mathbb{E} \left\{|\langle \bs x_0, \bs u\rangle |^4 |\langle \bs x_0, \bs c_i\rangle |^4 \bs x_0\bs x_0^*\right\}\right\|
\end{align}
Developing each of those terms, we get 
\begin{align}
&\left\|\mathbb{E} \left\{ |\langle \bs x_0, \bs c_i\rangle |^4 u_1^2 |\langle \bs x_0, \bs u\rangle |^2 \left(\bs x_0\bs x_0^* + \bs x_0\bs e_1^* + \bs e_1\bs x_0^* + \bs e_1\bs e_1^*\right)\right\} \right\|\\
&\lesssim u_1^2 |\langle \bs x_0, \bs u\rangle |^2 \left(|\langle \bs x_0\bs x_0^* , \bs x_0\bs x_0^*\rangle | - \|\bs x_0\|_4^4 + \|\bs x_0\|_2^4\right)\lesssim 1
\end{align}
\begin{align}
&\left\|\mathbb{E}\left\{|\langle \bs x_0, \bs c_i\rangle |^2 |\langle \bs c_i, \bs u \rangle|^2 |\langle \bs x_0, \bs u\rangle |^2 \left(\bs x_0\bs x_0^* + \bs x_0\bs e_1^* + \bs e_1\bs x_0^* + \bs e_1\bs e_1^*\right) \right\} \right\|\\
&\lesssim |\langle \bs x_0, \bs u\rangle |^2 \left\|\sum_{k\neq \ell} (x_0)_k (x_0)_\ell \bs u_k\bs u_\ell + \|\bs x_0\|^2 \|\bs u\|^2\right\|\\
&\lesssim |\langle \bs x_0, \bs u \rangle |^2 \left\|\sum_{k, \ell} (x_0)_k ( x_0)_\ell \bs u_k\bs u_\ell - \sum_{k} (x_0)_k^2 u_k^2 + \|\bs x_0\|^2 \|\bs u\|^2 \right\|\\
&\lesssim |\langle \bs x_0, \bs u\rangle |^2 \left(|\langle \bs x_0\bs x_0^*, \bs u\bs u^*\rangle | + \|\bs x_0\|^2 \|\bs u\|^2 \right)
\end{align}
\begin{align}
& \left\|\mathbb{E} \left\{|\langle \bs x_0, \bs u\rangle |^4 |\langle \bs x_0, \bs c_i\rangle |^4 \left(\bs x_0\bs x_0^* + \bs x_)\bs e-1^* + \bs e_1\bs x_0^* + \bs e_1\bs e_1^*\right) \right\} \right\|\\
&\lesssim |\langle \bs x_0, \bs u\rangle |^4 \left(|\langle \bs x_0\bs x_0^* , \bs x_0\bs x_0^*\rangle | - \|\bs x_0\|_4^4 + \|\bs x_0\|_2^4 \right)\lesssim 1
\end{align}
For~\eqref{termcix001} to~\eqref{termcix003}, expanding, we get
\begin{align}
\eqref{termcix001}&= \left\|\mathbb{E}\left\{|\langle \bs x_0, \bs c_i\rangle |^2 u_1^2 |\langle \bs x_0, \bs u\rangle |^2 \left(\left(\bs c_i\bs x_0^* + \bs x_0\bs c_i^*\right) \langle \bs c_i, \bs x_0\rangle +\bs x_0\bs x_0^* \|\bs c_i\|^2 + \bs c_i\bs c_i^* \|\bs x_0\|^2 \right) \right\}\right\|
\end{align}
Developing each of the terms, we obtain
\begin{align}
&\left\|\mathbb{E}\left\{|\langle \bs x_0, \bs c_i\rangle |^2 u_1^2 |\langle\bs x_0, \bs u\rangle |^2 \left(\bs c_i\bs x_0^* + \bs x_0\bs c_i^*\right) \langle \bs c_i,\bs x_0\rangle \right\}\right\|\\
&\leq  \left\| \left(\sum_{k\neq \ell} (x_0)_k (x_0)_\ell  \left(\bs e_k\bs x_0^* + \bs x_0\bs e_k^* \right) (x_0)_\ell \right) u_1^2 |\langle \bs x_0, \bs u\rangle |^2\right\|\\
&+\left\| \left(\sum_{k\neq \ell } (x_0)_k (x_0)_\ell \left(\bs e_\ell \bs x_0^* + \bs x_0\bs e_\ell^* \right) (x_0)_k \right) u_1^2 |\langle \bs x_0, \bs u\rangle |^2\right\|\\
& + \left\|\sum_k (x_0)_k^2 u_1 |\langle \bs x_0, \bs u\rangle |^2 \left(\bs e_k\bs x_0^* + \bs x_0\bs e_k^*\right) (x_0)_k\right\|\\
& \leq \left\|4\sum_{\ell} (\bs x_0\bs x_0^*) (x_0)_\ell^2 u_1^2 |\langle \bs x_0, \bs u\rangle |^2 + \sum_{k} (x_0)_k^2 \left(\bs e_k\bs x_0^* + \bs x_0\bs e_k^*\right) (x_0)_k\right\|\\
& +\left\| 4\sum_{k} (x_0)_k^2 \left(\bs e_k\bs x_0^* + \bs x_0\bs e_k^*\right) (x_0)_k u_1^2 |\langle \bs x_0, \bs u\rangle |^2\right\|\\
&\lesssim \|\bs x_0\|^2 \left\|\bs x_0\bs x_0^* u_1^2 |\langle \bs x_0, \bs u\rangle |^2\right\|+ \|\bs x_0\|^2 \left\|\bs x_0\bs x_0^* \right\| u_1^2 |\langle \bs x_0, \bs u\rangle |^2 \lesssim 1 \label{term2cix0Final01}
\end{align}
\begin{align}
&\left\|\mathbb{E}\left\{|\langle \bs x_0, \bs c_i\rangle |^2 u_1^2 |\langle \bs x_0, \bs u\rangle |^2 \bs x_0\bs x_0^* \|\bs c_i\|^2  \right\}\right\|\\
&\lesssim u_1^2 k \|\bs x_0\|^2 \left\|\bs x_0\bs x_0^*\right\| + \left\|\bs x_0\bs x_0^* \left(\langle \bs x_0\bs x_0^* , \mathbf{1}\mathbf{1}^*\rangle - \|\bs x_0\|^2 k\right)\right\|\lesssim k \label{term3cix0Final01}
\end{align}
\begin{align}
&\left\|\mathbb{E}\left\{|\langle \bs x_0, \bs c_i \rangle |^2 u_1^2 |\langle \bs x_0, \bs u\rangle |^2 \|\bs x_0\|^2 \bs c_i\bs c_i^* \right\}\right\|\\
&\lesssim u_1^2 |\langle \bs x_0, \bs u\rangle |^2 \left\|\bs x_0\bs x_0^* - \text{diag}\left(\bs x_0\bs x_0^*\right) + \|\bs x_0\|^2 \bs I \right\|\lesssim 1\label{term4cix0Final01}
\end{align}
In the line above, we used $\text{diag}\left(\bs x_0\bs x_0^*\right) = \sum_{k}\bs e_k\bs e_k^* (x_0)_k^2$. Grouping~\eqref{term2cix0Final01} to~\eqref{term4cix0Final01}, we get 
\begin{align}
\left\|\mathbb{E}\left\{|\langle \bs x_0, \bs c_i \rangle |^2 u_1^2 |\langle \bs x_0, \bs u\rangle |^2 \left(\bs c_i\bs x_0^* + \bs x_0\bs c_i^*\right)^2 \right\} \right\|\lesssim k 
\end{align}
Using similar ideas, we can write 
\begin{align}
&\left\|\mathbb{E}\left\{|\langle \bs c_i, \bs u\rangle |^2 |\langle \bs x_0, \bs u\rangle |^2 \left(\bs c_i\bs x_0^* + \bs x_0\bs c_i^*\right)^2\right\}\right\| \\
&\lesssim |\langle\bs x_0, \bs u \rangle |^2 \left\|\mathbb{E}\left\{|\langle \bs c_i, \bs u\rangle |^2 \left(\bs c_i\bs c_i^* \|\bs x_0\|^2  + \left(\bs c_i\bs x_0^* + \bs x_0\bs c_i^*\right) \langle \bs c_i, \bs x_0\rangle + \bs x_0\bs x_0^* \|\bs c_i\|^2\right)\right\}\right\|\\
&\lesssim |\langle \bs x_0, \bs u\rangle |^2 \left\|\left(\bs u\bs u^* - \text{diag}\left(\bs u\bs u^*\right) + \|\bs u\|^2\bs I\right)\|\bs x_0\|^2\right\|+ |\langle \bs x_0, \bs u\rangle |^2 k \|\bs u\|^2 \\
&+ |\langle \bs x_0, \bs u\rangle |^2 \left( \left\|2\sum_{k\neq \ell} u_ku_\ell \left(\bs e_k\bs x_0^* + \bs x_0\bs e_k^*\right) (x_0)_\ell\right\| +  \left\|\sum_{k} (u_k)^2 \left(\bs e_k\bs x_0^* + \bs x_0\bs e_k^*\right) (x_0)_k\right\| \right)\\
&\lesssim |\langle \bs x_0, \bs u \rangle |^2 \|\bs u\|^2 + k\|\bs u\|^2 |\langle \bs x_0, \bs u\rangle |^2 + |\langle \bs x_0, \bs u\rangle |^2 \left\|2\sum_{\ell} u_\ell (x_0)_\ell \left(\bs u\bs x_0^* + \bs x_0\bs u^*\right)\right\|\\
&+ |\langle \bs x_0, \bs u\rangle |^2 \left\|2\sum_{k} u_k^2 (x_0)_k \left(\bs e_k\bs x_0^* + \bs x_0\bs e_k^*\right)\right\|+ |\langle \bs x_0, \bs u\rangle |^2 \|\bs u\|^2 \|\bs x_0\bs x_0^*\|\\
&\lesssim |\langle \bs x_0, \bs u\rangle |^2 \|\bs u\|^2+ k\|\bs u\|^2 |\langle \bs x_0, \bs u\rangle |^2 + |\langle \bs x_0, \bs u\rangle |^2 |\langle \bs x_0, \bs u\rangle | \|\bs u\bs x_0^* + \bs x_0\bs u^*\|\\
&+ |\langle \bs x_0, \bs u\rangle |^2 \|\bs u\|^2 \|\bs x_0\bs x_0^*\| + |\langle \bs x_0, \bs u\rangle |^2 \|\bs u\|^2 \|\bs x_0\bs x_0^*\| \lesssim k
\end{align}
Finally, 
\begin{align}
\left\|\mathbb{E}\left\{|\langle \bs x_0, \bs u\rangle |^4 |\langle \bs x_0, \bs c_i\rangle |^2 \left(\bs c_i\bs x_0^* + \bs x_0\bs c_i^*\right)^2  \right\}\right\|\lesssim k
\end{align}
\begin{align}
\left\|\mathbb{E}\left\{|\langle \bs x_0, \bs c_i\rangle |^4 u_1^2 |\langle \bs x_0, \bs u\rangle |^2 \bs x_0\bs x_0^*\right\}\right\| &\lesssim |\langle \bs x_0, \bs u\rangle |^2 \|\bs x_0\bs x_0^*\| u_1^2 \\
&+ \left|\sum_{k, \ell} (x_0)_k^2 (x_0)_\ell^2 -\sum_{k} (x_0)_k^2 + \|\bs x_0\|_2^4 \right|
\end{align}
Using $\|\bs x_0\|_4 \leq \|\bs x_0\|_2$, we get 
\begin{align}
&\|\mathbb{E}\left\{|\langle \bs c_i, \bs u\rangle |^2 |\langle \bs x_0, \bs u\rangle |^2 |\langle \bs x_0, \bs c_i\rangle |^2 \bs x_0\bs x_0^* \right\}\|\\
&\lesssim |\langle \bs x_0, \bs u\rangle |^2 \left|\langle \bs u\bs u^* , \bs x_0\bs x_0^*\rangle - \sum_{k} u_k^2 (x_0)_k^2 + \|\bs u\|^2 \|\bs x_0\|^2 \right| \left\|\bs x_0\bs x_0^*\right\|
\end{align}
\begin{align}
&\left\|\mathbb{E}\left\{|\langle \bs x_0, \bs u\rangle |^4 |\langle \bs x_0, \bs c_i\rangle |^4 \bs x_0\bs x_0^* \right\}\right\|\\
& \lesssim |\langle \bs x_0, \bs u\rangle |^4 \|\bs x_0\bs x_0^*\| \left|\langle \bs x_0\bs x_0^*, \bs x_0\bs x_0^*\rangle - \sum_{k} (x_0)_k^4 + \|\bs x_0\|_2^4 \right|\lesssim 1
\end{align}
All in all, we thus have
\begin{align}
\left\|\mathbb{E}\sum_{i=1}^m Z_iZ_i^* - \mathbb{E}Z_i\mathbb{E}Z_i^*\right\|\lesssim k \label{boundVarianceLambda101}
\end{align}

The variables $Z_i = \mathcal{P}_{T\cap \Omega}(\bs c_i\bs e_1^* + \bs e_1\bs c_i^*)\langle \mathcal{P}_{T\cap \Omega}(\bs e_1\bs c_i^* + \bs c_i\bs e_1^*), \bs u\bs u^*\rangle $ have subexponential entries. Using 
\begin{align}
\left\|Z_i - \mathbb{E}Z_i\right\|&\lesssim \left\|\mathcal{P}_{T\cap \Omega} \left(\bs c_i\bs e_1^* + \bs e_1\bs c_i^*\right) \langle \mathcal{P}_{T\cap \Omega}(\bs e_1\bs c_i^* + \bs c_i\bs e_1^*), \bs u\bs u^*\rangle \right\|_{\psi_1}\\
&+ \left\|\mathbb{E} \left\{\mathcal{P}_{T\cap \Omega} \left(\bs c_i\bs e_1^* + \bs e_1\bs c_i^*\right) \langle \mathcal{P}_{T\cap \Omega} \left(\bs e_1\bs c_i^* + \bs c_i\bs e_1^*\right), \bs u\bs u^* \rangle  \right\} \right\|_{\psi_1}
\end{align} 
Now using 
\begin{align}
\mathcal{P}_{T\cap \Omega}(\bs c_i\bs e_1^* +\bs e_1\bs c_i^*)& = \langle \bs x_0, \bs c_i\rangle \left(\bs x_0\bs e_1^* + \bs e_1\bs x_0^*\right) + \left(\bs x_0\bs c_i^* + \bs c_i\bs x_0^*\right) - 2\langle \bs x_0, \bs c_i \rangle \bs x_0\bs x_0^*  \\
\langle \mathcal{P}_{T\cap \Omega}(\bs c_i\bs e_1^* +\bs e_1\bs c_i^*), \bs u\bs u^*\rangle & = 2\langle \bs x_0, \bs u\rangle u_1 \langle \bs x_0, \bs c_i\rangle + 2\langle \bs x_0, \bs u\rangle \langle \bs c_i, \bs u\rangle - 2\langle \bs x_0, \bs c_i\rangle |\langle \bs x_0, \bs u\rangle |^2     
\end{align}
we can consider the decomposition 
\begin{align}
&\left\|\mathbb{E}\left\{\mathcal{P}_{T\cap \Omega}\left(\bs c_i\bs e_1^* + \bs e_1\bs c_i^* \right) \langle \mathcal{P}_{T\cap \Omega}\left(\bs e_1\bs c_i^* + \bs c_i\bs e_1^*\right), \bs u\bs u^*\rangle \right\}\right\|\\
&\leq  \left\|\mathbb{E}\left\{2|\langle \bs x_0, \bs c_i\rangle|^2 \left(\bs x_0\bs e_1^* + \bs e_1\bs x_0^*\right) \langle \bs x_0, \bs u\rangle u_1   \right\}\right\|\\
& + \left\|\mathbb{E}\left\{2\langle \bs x_0, \bs c_i\rangle \langle \bs x_0, \bs u\rangle \langle \bs c_i, \bs u\rangle \left(\bs x_0\bs e_1^* + \bs e_1\bs x_0^*\right)  \right\} \right\|\\
&+ \left\|\mathbb{E}\left\{2|\langle \bs x_0, \bs c_i\rangle |^2 |\langle \bs x_0, \bs u\rangle |^2 \left(\bs x_0\bs e_1^* + \bs e_1\bs x_0^*\right) \right\}\right\|\\
&+ \left\|\mathbb{E}\left\{2\langle \bs x_0, \bs u\rangle u_1 \langle \bs x_0, \bs c_i\rangle \left(\bs x_0\bs c_i^* + \bs c_i\bs x_0^*\right)  \right\} \right\|\\
&+ \left\|\mathbb{E}\left\{2 \left(\bs x_0\bs c_i^* + \bs c_i\bs x_0^*\right) \langle \bs x_0, \bs u\rangle \langle \bs c_i, \bs u \rangle \right\}\right\|\\
&+ \left\|\mathbb{E}\left\{2\langle \bs x_0, \bs c_i\rangle |\langle \bs x_0, \bs u\rangle |^2 \left(\bs c_i\bs x_0^* + \bs x_0\bs c_i^*\right)  \right\}\right\|\\
&+ \left\|\mathbb{E}\left\{4|\langle \bs x_0, \bs c_i\rangle |^2 \bs x_0\bs x_0^* \langle \bs x_0, \bs u\rangle u_1  \right\}\right\|\\
&+ \left\|\mathbb{E}\left\{4 \langle \bs x_0, \bs u\rangle \langle \bs c_i, \bs u\rangle \langle \bs x_0, \bs c_i\rangle \bs x_0\bs x_0^*   \right\} \right\|\\
&+ \left\|\mathbb{E} \left\{4|\langle \bs x_0, \bs c_i\rangle |^2 |\langle \bs x_0, \bs u\rangle |^2 \bs x_0\bs x_0^*\right\} \right\|\\
&\leq \left\|\left(\bs x_0\bs e_1^* + \bs e_1\bs x_0^*\right) \langle \bs x_0, \bs u\rangle u_1 \left(\|\bs x_0\bs x_0^*\|_F^2 - \|\bs x_0\|^2 + \|\bs x_0\|^4\right) \right\|\label{finalNormExpectationendlambda1caseSDPProof01}\\
&+ \left\|2|\langle \bs x_0, \bs u\rangle |^2 \left(\bs x_0\bs e_1^* + \bs e_1\bs x_0^*\right) \right\|\\
&+ \left\|2 |\langle \bs x_0, \bs u\rangle |^2 \left(\bs x_0\bs e_1^* + \bs e_1\bs x_0^*\right) \left(\|\bs x_0\bs x_0^*\|_F^2 - \|\bs x_0\|^2 + \|\bs x_0\|^4 \right) \right\|\\
& + \left\|4\langle \bs x_0, \bs u\rangle u_1 \bs x_0\bs x_0^* \right\| + \left\|2\left(\bs x_0\bs u^* + \bs u\bs x_0^*\right) \langle \bs x_0, \bs u\rangle  \right\|\\
&+ \left\|4\bs x_0\bs x_0^* |\langle \bs x_0, \bs u\rangle |^2 \right\| + \left\|4\bs x_0\bs x_0^* \langle \bs x_0, \bs u\rangle u_1 \left(\|\bs x_0\bs x_0^*\|_F^2 - \|\bs x_0\|^2 + \|\bs x_0\|^4 \right)  \right\|\\
&+ \left\|4 |\langle \bs x_0, \bs u\rangle|^2 \bs x_0\bs x_0^*\right\| + \left\|4|\langle \bs x_0, \bs u\rangle |^2 \bs x_0\bs x_0^* \left(\|\bs x_0\bs x_0^*\|_F^2 -\|\bs x_0\|^2 + \|\bs x_0\|^4 \right) \right\|\lesssim 1\label{finalNormExpectationendlambda1caseSDPProof0end}
\end{align}
On the other hand, we have
\begin{align}
&\left\| |\langle \bs x_0, \bs c_i \rangle|^2 \left(\bs x_0\bs e_1^* = \bs e_1\bs x_0^*\right) \langle \bs x_0, \bs u\rangle u_1   \right\|_{\psi_1} \leq |\langle \bs x_0, \bs u\rangle u_1| \|\langle \bs x_0, \bs c_i\rangle \|_{\psi_2}^2 \lesssim 1\label{finalOrliczNormfirstlambda1caseSDPProof}\\
&\left\|\langle \bs x_0, \bs c_i\rangle \langle \bs x_0, \bs u\rangle \langle \bs c_i, \bs u\rangle \left(\bs x_0\bs e_1^* + \bs e_1\bs x_0^*\right)   \right\|_{\psi_1} \leq \|\langle \bs x_0, \bs c_i\rangle \|_{\psi_2} \|\langle \bs c_i, \bs u\rangle \|_{\psi_2} \lesssim 1\\
& \left\||\langle \bs x_0, \bs c_i\rangle |^2 |\langle \bs x_0, \bs u\rangle |^2 \left(\bs x_0\bs e_1^* + \bs e_1\bs x_0^*\right)\right\|_{\psi_1} \lesssim \left\|\langle \bs x_0, \bs c_i\rangle \right\|_{\psi_2}^2 \lesssim 1\\
& \left\| \left(\bs x_0\bs c_i^* + \bs c_i\bs x_0^*\right) \langle \bs x_0, \bs u\rangle \langle \bs c_i, \bs u\rangle   \right\|_{\psi_1} \leq \|\langle \bs c_i, \bs u\rangle \|_{\psi_2} \|\bs x_0\bs c_i^* + \bs c_i\bs x_0^*\|_{\psi_2}\lesssim \sqrt{k}\\
&\left\|2\left(\bs x_0\bs c_i^* + \bs c_i\bs x_0^*\right) \langle \bs x_0, \bs u\rangle \langle \bs c_i, \bs u\rangle  \right\|_{\psi_1} \lesssim \left\|\|\bs x_0\bs c_i^* + \bs c_i\bs x_0^*\| \right\|_{\psi_2} \left\|\langle \bs c_i, \bs u\rangle \right\|_{\psi_2} \lesssim \sqrt{k}\\
&\left\|2\langle \bs x_0, \bs c_i\rangle |\langle \bs x_0, \bs u\rangle |^2 \left(\bs c_i\bs x_0^* + \bs x_0\bs c_i^*\right) \right\| \lesssim \sqrt{k}
\end{align}
Finally,
\begin{align}
\left\| 4|\langle \bs x_0, \bs c_i\rangle |^2 \bs x_0\bs x_0^* \langle \bs x_0, \bs u\rangle u_1 \right\|_{\psi_1} &\lesssim \|\langle \bs x_0, \bs c_i\rangle \|_{\psi_1}^2 \lesssim 1\\
 \left\|4 \langle \bs x_0, \bs u\rangle \langle \bs c_i, \bs u\rangle \langle \bs x_0, \bs c_i\rangle \bs x_0\bs x_0^*   \right\|_{\psi_1}& \leq \|\langle \bs c_i, \bs u\rangle \|_{\psi_2} \|\langle \bs c_i, \bs x_0\rangle \|_{\psi_2}\lesssim 1\\
 \left\|4|\langle \bs x_0, \bs c_i\rangle |^2 |\langle \bs x_0, \bs u\rangle |^2 \bs x_0\bs x_0^*\right\|_{\psi_1}&\leq \|\langle \bs x_0, \bs c_i\rangle \|_{\psi_2}^2 \lesssim 1\label{finalOrliczNormendlambda1caseSDPProof}
\end{align}
Combining~\eqref{finalOrliczNormfirstlambda1caseSDPProof} to~\eqref{finalOrliczNormendlambda1caseSDPProof} with~\eqref{finalNormExpectationendlambda1caseSDPProof01}~\eqref{finalNormExpectationendlambda1caseSDPProof0end}, we can write  
\begin{align}
\|Z_i - \mathbb{E}Z_i\|_{\psi_1}\lesssim \sqrt{k}
\end{align}
Using this,  together with~\eqref{boundVarianceLambda101} as well as~\eqref{ExpectationProjectionTOmegaLambda1}, and applying the matrix version of Bernstein's inequality (see Proposition~\ref{BernsteinMatrix}) gives 
\begin{align}
\left\|\mathcal{H}(\bs X) - \mathcal{P}_{T\cap \Omega}(\bs X)\right\| &=\left\|\frac{1}{m}\sum_{i=1}^m \mathcal{P}_{T\cap \Omega}(\bs c_i\bs e_1^* + \bs e_1\bs c_i^*)\langle \mathcal{P}_{T\cap \Omega}\left( \bs e_1\bs c_i^* + \bs c_i\bs e_1^*\right), \bs X\rangle \right\|\\
& \lesssim \sqrt{\frac{k\delta\log^{\alpha}(n)}{m}}\|\bs X\|_F
\end{align}
with probability at least $1-n^{-\alpha}$.

\end{proof}

\subsubsection{\label{sectionrelationTTperpOmegaLinearMeasurements}Proof of Proposition~\ref{relationTTperpOmegaLinearMeasurements}}

\restaterelationTTperpOmegaLinearMeasurements*

\begin{proof}
Using the definition of $\mathcal{A}$,  considering a decomposition 
\begin{align}
\bs X = \left[\begin{array}{cc}
1 & \bs x^*\\
\bs x& \tilde{\bs X}
\end{array}\right]
\end{align}
and focusing on the random part first, we have 
\begin{align}
\begin{split}
&\mathbb{E}\left\{\mathcal{P}_{T\cap \Omega}\mathcal{A}^*\mathcal{A}(\bs X)\right\} = \mathbb{E}\left\{\frac{1}{2m}\sum_{i=1}^m \mathcal{P}_{T\cap \Omega} \left(\bs c_i\bs e_1^* + \bs e_1\bs c_i^*\right)\langle \bs c_i\bs e_1^* + \bs e_1\bs c_i^*, \bs X\rangle \right\}\\
 &= \mathbb{E} \left\{\frac{1}{2m}\sum_{i=1}^m\left[ \langle \bs x_0, \bs c_i\rangle \left(\bs x_0\bs e_1^* + \bs e_1\bs x_0^*\right) +  \left(\bs x_0\bs c_i^* + \bs c_i\bs x_0^*\right) - \bs x_0\bs x_0^* \langle \bs x_0, \bs c_i\rangle \right] \langle \bs c_i, \bs x\rangle  \right\}\end{split}\label{ExpectationAstarATfor relationHTHTper}
\end{align}
Focusing on the first term above,
\begin{align}
&\mathbb{E} \left\{\frac{1}{2m}\sum_{i=1}^m\left[ \langle \bs x_0, \bs c_i\rangle \left(\bs x_0\bs e_1^* + \bs e_1\bs x_0^*\right) +  \left(\bs x_0\bs c_i^* + \bs c_i\bs x_0^*\right) - \bs x_0\bs x_0^* \langle \bs x_0, \bs c_i\rangle \right] \langle \bs c_i, \bs x\rangle  \right\}\\
& = \left[\langle \bs x_0, \bs x\rangle\left(\bs x_0\bs e_1^* + \bs e_1\bs x_0^*\right) + \bs x_0\bs x^* + \bs x\bs x_0^* - \bs x_0\bs x_0^* \langle \bs x_0, \bs x\rangle  \right]
\end{align}
Substituting this in~\eqref{ExpectationAstarATfor relationHTHTper},  we can write 
\begin{align}
\left\|\mathbb{E}\mathcal{P}_{T\cap \Omega}\mathcal{A}^*\mathcal{A}\bs X\right\|\lesssim \|\bs X\|_F\label{normExpectationZiTtoTperpOmega}
\end{align}
Moreover, for the deterministic part, 
\begin{align}
\mathcal{D}(\bs X)&\equiv 2\mathcal{P}_{T\cap \Omega} \left(\bs e_1\bs e_1^*\right) \langle \bs e_1\bs e_1^*, \bs X\rangle - 2\mathcal{P}_{T\cap \Omega}(\bs e_1\bs e_1^*) \langle \bs I, \bs X\rangle  \\
& - 2\mathcal{P}_{T\cap \Omega} (\bs I) \langle \bs e_1\bs e_1^* , \bs X\rangle + 6\bs I_{T\cap \Omega} \langle \bs I, \bs X\rangle   
\end{align}
we also have
\begin{align}
\left\|\mathcal{D}(\bs X)\right\|\lesssim \|\bs X\|_F 
\end{align}
To control the deviation
\begin{align}
&\frac{1}{2m}\sum_{i=1}^m  \mathcal{P}_{T\cap \Omega}\left(\bs c_i\bs e_1^* + \bs e_1\bs c_i^*\right) \langle \bs c_i\bs e_1^* + \bs e_1\bs c_i^*, \bs X\rangle\\
&  - \mathbb{E}\frac{1}{2m}\sum_{i=1}^m \mathcal{P}_{T\cap \Omega} \left(\bs c_i\bs e_1^* + \bs e_1\bs c_i^*\right) \langle \bs c_i\bs e_1^* + \bs e_1\bs c_i^*\rangle 
\end{align}
As before, we define $Z_i$ as
\begin{align}
Z_i& =   \mathcal{P}_{T\cap \Omega}\left(\bs c_i\bs e_1^* + \bs e_1\bs c_i^*\right) \langle \bs c_i\bs e_1^* + \bs e_1\bs c_i^*, \bs X\rangle  - \mathbb{E} \mathcal{P}_{T\cap \Omega} \left(\bs c_i\bs e_1^* + \bs e_1\bs c_i^*\right) \langle \bs c_i\bs e_1^* + \bs e_1\bs c_i^*\rangle
\end{align}
The matrices $Z_i$ have subexponential entries. Using~\eqref{normExpectationZiTtoTperpOmega} together with standard results on subexponential tails, one can write 
\begin{align}
\|Z_i\|_{\psi_1}& \lesssim \left\|\mathcal{P}_{T\cap \Omega}\left(\bs e_1\bs c_i^* + \bs c_i\bs e_1^*\right) \right\|_{\psi_2} \|\langle \bs c_i\bs e_1^* + \bs e_1\bs c_i^*, \bs X\rangle \|_{\psi_2} + \|\bs X\|_F\\
&\lesssim \|\bs X\|_F \left(\left\|\bs c_i\bs e_1^* + \bs e_1\bs c_i^* \right\|_{\psi_2} + \left\|\langle \bs x_0, \bs c_i\rangle \left(\bs x_0\bs e_1^* + \bs e_1\bs x_0^*\right) \right\|_{\psi_2} + \left\|\bs x_0\bs x_0^* \langle \bs x_0, \bs c_i\rangle \right\|_{\psi_2}\right)\\
&\lesssim \sqrt{k}\|\bs X\|_F\label{boundOrliczConnectionTTperFirst}
\end{align}
On the other hand, if we let 
\begin{align}
\sigma^2 = \left\|\sum_{i=1}^m Z_i Z_i^*\right\|, 
\end{align}
Using $\mathbb{E}\left\{XX^*\right\} - \mathbb{E}X\mathbb{E}X^* \succeq 0$, and therefore focusing on the norm
\begin{align}
\left\|\sum_{i=1}^m \mathbb{E}\left\{\mathcal{P}_{T\cap \Omega}\left(\bs e_1\bs c_i^* + \bs c_i\bs e_1^*\right)^2 \left|\langle \bs e_1\bs c_i^* + \bs c_i\bs e_1^*, \bs X\rangle \right|^2 \right\}\right\|
\end{align}
Moreover, using $0\preceq \mathbb{E} \left\{ \left(\bs A_i+\bs B_i\right) \left(\bs A_i + \bs B_i\right)^* \right\}\preceq 2\mathbb{E}\left\{\bs A_i\bs A_i^*\right\} + 2\mathbb{E}\left\{\bs B_i\bs B_i^*\right\}$, we can write 
\begin{align}
\left\|\mathbb{E}\left\{\left(\bs A_i+\bs B_i\right)\left(\bs A_i + \bs B_i\right)^*\right\}\right\|&\lesssim \left\|\mathbb{E}\left\{\bs A_i\bs A_i^*\right\}\right\|+ \left\|\mathbb{E}\left\{\bs B_i\bs B_i^*\right\}\right\|
\end{align}
together with
\begin{align}
\mathcal{P}_{T\cap \Omega}\left(\bs c_i\bs e_1^* + \bs e_1\bs c_i^*\right)& = \langle \bs x_0, \bs c_i\rangle \left(\bs x_0\bs e_1^* + \bs e_1\bs x_0^*\right) + \left(\bs x_0\bs c_i^* + \bs c_i\bs x_0^*\right)  - \bs x_0\bs x_0^* \langle \bs x_0, \bs c_i\rangle 
\end{align}
we can write 
\begin{align}
\sigma^2& \lesssim \left\|\frac{1}{m}\sum_{i=1}^m \mathbb{E} |\langle \bs x_0, \bs c_i \rangle |^2 \left(\bs x_0\bs e_1^* + \bs e_1\bs x_0^*\right)^2 |\langle \bs c_i\bs e_1^* + \bs e_1\bs c_i^*, \bs X\rangle |^2 \right\|\label{term1TversTperpProof01}\\
&+ \left\|\frac{1}{m}\sum_{i=1}^m \mathbb{E} \left(\bs x_0\bs c_i^* + \bs c_i\bs x_0^*\right)^2 \left|\langle \bs c_i\bs e_1^* + \bs e_1\bs c_i^*\rangle \right|\right\|\label{term2TversTperpProof01}\\
&+ \left\|\frac{1}{m}\sum_{i=1}^m \mathbb{E}\left\{\bs x_0\bs x_0^* |\langle \bs x_0, \bs c_i\rangle |^2 \left|\langle \bs c_i\bs e_1^* + \bs e_1\bs c_i^*, \bs X\rangle \right|^2\right\}\right\|\label{term3TversTperpProof01}
\end{align}
For~\eqref{term1TversTperpProof01} and~\eqref{term3TversTperpProof01}, we use
\begin{align}
\eqref{term1TversTperpProof01} + \eqref{term3TversTperpProof01}\lesssim &\left|\langle \tilde{\bs x}_0\tilde{\bs x}_0^*, \bs x\bs x^*\rangle - \langle \text{\upshape diag}(\tilde{\bs x}_0\tilde{\bs x}_0^*), \text{\upshape diag}(\bs x\bs x^*)\rangle  \right| \left\|\bs x_0\bs e_1^* + \bs e_1\bs x_0^*\right\|^2\\
&+ \|\tilde{\bs x}_0\|^2 \|\bs x\|^2 \left\|\bs x_0\bs e_1^* + \bs e_1\bs x_0^*\right\|^2\\
\lesssim &\|\bs x\|^2\label{boundVarianceConnectionTTperFirst}
\end{align}
For~\eqref{term2TversTperpProof01}, we use 
\begin{align}
&\left\|\mathbb{E}\left\{\left(\bs x_0\bs x_0^*\|\bs c_i\|^2 + \bs c_i\bs c_i^* + \langle \bs x_0, \bs c_i\rangle \left(\bs x_0\bs c_i^* + \bs c_i\bs x_0^*\right) \right) \left|\langle \bs c_i\bs e_1^* + \bs e_1\bs c_i^*, \bs X\rangle \right|^2\right\}\right\|\\
\lesssim &k \|\bs x\|^2 + \left\|\bs x\bs x^* - \text{\upshape diag}(\bs x\bs x^*) + \|\bs x\|^2 \bs I\right\| \\
&+\left\|\mathbb{E}\left\{\left(\sum_{k, \ell} \left(\bs x_0\bs e_\ell^* + \bs e_\ell \bs x_0^*\right) c_i[k]c_i[\ell] x_0[k]\right) \sum_{a, b}c_i[a]c_i[b]x_ax_b\right\}\right\|\label{term2TversTperpProof02}
\end{align}
For the last term in~\eqref{term2TversTperpProof02}, we use 
\begin{align}
&\left\|\sum_{k\neq \ell} \left(\bs x_0\bs e_\ell^* + \bs e_\ell \bs x_0^* \right) x[\ell]x_0[k]x[k] + \sum_{k=\ell } \left(\bs x_0\bs e_k^* + \bs e_k\bs x_0^*\right) x_0[k] \|\bs x\|^2 \right\|\\
&\lesssim \left\|\left(\bs x_0\bs x^* + \bs x\bs x_0^*\right) \langle \bs x_0, \bs x\rangle - \sum_{k} \left(\bs x_0\bs e_k^* + \bs e_k\bs x_0^*\right) x^2[k] x_0[k]  \right\|\\
&\, + \left\|\bs x_0\bs x_0^* \|\bs x\|^2 \right\|\\
&\lesssim \|\bs x\|^2\label{boundVarianceConnectionTTperLast}
\end{align}
Grouping~\eqref{boundVarianceConnectionTTperFirst} to~\eqref{boundVarianceConnectionTTperLast}, with~\eqref{boundOrliczConnectionTTperFirst} and applying proposition~\ref{BernsteinMatrix}, we get 
\begin{align}
&\frac{1}{2m}\left\|\sum_{i=1}^m \mathcal{P}_{T\cap \Omega}\left(\bs c_i\bs e_1^* + \bs e_1\bs c_i^*\right) \langle \bs c_i\bs e_1^* + \bs e_1 \bs c_i^*\rangle -\mathbb{E}\sum_{i=1}^m  \mathcal{P}_{T\cap \Omega}\left(\bs c_i\bs e_1^* + \bs e_1\bs c_i^*\right) \langle \bs c_i\bs e_1^* + \bs e_1 \bs c_i^*\rangle \right\|\\
&\lesssim \delta\|\bs X\|_F
\end{align}
with probability at least $1-n^{-\alpha}$ as soon as $m\gtrsim k\log(n)^{\alpha}$. 
\end{proof}

\subsubsection{\label{sectionProofPropositionAstarAHOmegaComplement}Proof of Proposition~\ref{PropositionAstarAHOmegaComplement}}

\restatePropositionAstarAHOmegaComplement*

\begin{proof}
Using $S$ to denote the support of $\bs x_0$, and noting that $(1,1)\notin \Omega^c$,  first note that we have 
\begin{align}
&(2m)^{-1}\mathbb{E}\sum_{i=1}^m  \mathcal{P}_{T\cap \Omega}\left(\bs c_i\bs e_1^* + \bs e_1\bs c_i^*\right) \langle\bs e_1\bs c_i^* + \bs c_i\bs e_1^*, \bs X_{\Omega^c}\rangle \\
& =m^{-1}\mathbb{E} \sum_{i=1}^m \left(\langle \bs x_0, \bs c_i\rangle \left(\bs x_0\bs e_1^* + \bs e_1\bs x_0^*\right) + (\bs c_{i})_{\Omega}\bs x_0 + \bs x_0(\bs c_i)_{\Omega}^* - 2\langle \bs x_0, \bs c_i\rangle \bs x_0\bs x_0^*   \right) \langle \bs c_i, \bs x_{S^c}\rangle \\
& = 0
\end{align}
Moreover,  for the deterministic part, noting that $(1,1)\notin\Omega^c$, we get 
\begin{align}
\left\|\mathcal{D}(\bs X_{\Omega^c})\right\| &=  \left\|2\mathcal{P}_{T\cap \Omega}(\bs e_1\bs e_1^* + \bs I)\langle \bs e_1\bs e_1^*, \bs X_{\Omega^c} \rangle + \bs I_{T\cap \Omega}\langle 6\bs I -2\bs e_1\bs e_1^*, \bs X_{\Omega^c} \rangle\right\|\\
&\leq 8\text{\upshape Tr}(\bs X_{\Omega^c})
\end{align}

To control the deviation
\begin{align}
\left\|\frac{1}{2m}\sum_{i=1}^m \mathcal{P}_{T\cap \Omega}\left(\bs c_i\bs e_1^* + \bs e_1\bs c_i^*\right) \langle \bs c_i\bs e_1^* + \bs e_1\bs c_i^*\rangle - \frac{1}{2m}\sum_{i=1}^m \mathcal{P}_{T\cap \Omega}\left(\bs c_i\bs e_1^* + \bs e_1\bs c_i^*\right) \langle \bs c_i\bs e_1^* + \bs e_1\bs c_i^*, \bs X_{\Omega^c}\rangle   \right\|
\end{align}
we again turn to Proposition~\ref{BernsteinMatrix}.  Letting $Z_i$ to denote the variables
\begin{align}
Z_i& = \frac{1}{2} \mathcal{P}_{T\cap \Omega}\left(\bs c_i\bs e_1^* + \bs e_1\bs c_i^*\right) \langle \bs c_i\bs e_1^* + \bs e_1\bs c_i^*, \bs X_{\Omega^c}\rangle 
\end{align}
as well as $\sigma^2 = \left\|m^{-1}\sum_{i=1}^m Z_iZ_i^*\right\|$, we can write 
\begin{align}
\|Z_i\|_{\psi_1}&\leq \left\|\mathcal{P}_{T\cap \Omega}\left(\bs c_i\bs e_1^* + \bs e_1\bs c_i^*\right)\right\|_{\psi_2} \left\|\langle \bs c_i\bs e_1^* + \bs e_1\bs c_i^*, \bs X_{\Omega^c}\rangle \right\|_{\psi_2}\\
&+ \left\|\left\|\bs x_0(\bs x_{S^c})^* + (\bs x_{S^c})\bs x_0^*\right\|\right\|_{\psi_1}\\
&\lesssim \|\bs x_{S^c}\| + \left\|\bs x_{S^c}\right\| \left(\|\langle \bs x_0, \bs c_i\rangle \left(\bs x_0\bs e_1^* + \bs e_1\bs x_0^*\right) \|_{\psi_2} + \left\|\left(\bs x_0\bs c_i^* + \bs c_i\bs x_0^*\right)\right\|_{\psi_2}\right)\\
&  + \left\|\bs x_{S^c}\right\| \left(\left\|\bs x_0\bs x_0^* \langle \bs x_0, \bs c_i\rangle \right\|_{\psi_2} \right)\\
&\lesssim \|\bs x_{S^c}\| + \|\bs x_{S^c}\| \sqrt{k}\label{boundOrliczOmegaC01}
\end{align}
On the other hand, we have 
\begin{align}
\sigma^2 &= \left\|\frac{1}{m}\sum_{i=1}^m \mathbb{E}Z_iZ_i^*\right\|\\
&\leq \left\|\frac{1}{m}\sum_{i=1}^m \mathbb{E} \bs A_i\bs A_i^* \left|\langle \bs c_i\bs e_1^* + \bs e_1\bs c_i^*, \bs X_{\Omega^c}\rangle \right|^2\right\|
\end{align}
where the $\bs A_i$ are defined as 
\begin{align}
\bs A_i& = \left(\bs x_0\bs c_i^* + \bs c_i\bs x_0^*\right) + \langle \bs x_0, \bs c_i\rangle \left(\bs x_0\bs e_1^* + \bs e_1\bs x_0^*\right) - 2\bs x_0\bs x_0^* \langle \bs x_0, \bs c_i\rangle  
\end{align}
Noting that $\sum_{i} \left(\bs A_i' +\bs A_i'' \right)\left(\bs A_i' +\bs A_i'' \right)^* \precsim \sum_{i} \bs A_i'(\bs A_i')^* + \bs A_i''(\bs A_i'')^* $, we get
\begin{align}
\sigma^2 &\lesssim \left\|\frac{1}{m}\sum_{i=1}^m \mathbb{E}\left\{\left(\bs x_0\bs c_i^* + \bs c_i\bs x_0^*\right)^2 \left|\langle \bs c_i\bs e_1^* + \bs e_1\bs c_i^*, \bs X_{\Omega^c}\rangle \right| \right\}\right\|\\
&+ \left\|\frac{1}{m}\sum_{i=1}^m \mathbb{E}\left\{|\langle \bs x_0,\bs c_i\rangle |^2 \left(\bs x_0\bs e_1^* + \bs e_1\bs x_0^*\right)^2 |\langle \bs c_i\bs e_1^* + \bs e_1\bs c_i^*, \bs X_{\Omega^c}\rangle |\right\}\right\|\\
&+ \left\|\frac{1}{m}\sum_{i=1}^m \mathbb{E}\left\{\bs x_0\bs x_0^* |\langle \bs x_0, \bs c_i\rangle |^2 \left|\langle \bs c_i\bs e_1^* + \bs e_1\bs c_i^*, \bs X_{\Omega^c}\rangle \right| \right\} \right\|\\
&\lesssim \left\|\frac{1}{m} \sum_{i=1}^m \mathbb{E} \left\{\|\bs c_i\|^2 \bs x_0\bs x_0^* + \bs c_i\bs c_i^* + \langle \bs c_i, \bs x_0\rangle \left(\bs c_i\bs x_0^* + \bs x_0\bs c_i^*\right) |\langle \bs x_{S^c}, \bs c_i\rangle |^2 \right\} \right\|\\
&+ \left\|\frac{1}{m}\sum_{i=1}^m \mathbb{E} \left\{|\langle \bs x_0, \bs c_i\rangle |^2 \left(\bs e_1\bs e_1^* + \bs e_1\bs x_0^* + \bs x_0\bs e_1^* +\bs x_0\bs x_0^*\right) |\langle \bs c_i, \bs x_{S^c}\rangle|^2 \right\} \right\|\\
&+ \left\|\frac{1}{m}\sum_{i=1}^m \mathbb{E}\left\{\bs x_0\bs x_0^* |\langle \bs x_0, \bs c_i\rangle |^2 |\langle \bs c_i, \bs x_{S^c}\rangle |^2 \right\}\right\|\\
&\lesssim \left\|k \|\bs x_{S^c}\|^2 \bs x_0\bs x_0^* + \left(\bs x_{S^c}(\bs x_{S^c})^* - \text{\upshape diag}(\bs x_{S^c}(\bs x_{S^c})^*) + \|\bs x_{S^c}\|^2\bs I\right)\right\|\\
& + \|\bs x_{S^c}\|^2 \|\bs x_0\|^2 + \|\bs x_{S^c}\|^2 \|\bs x_0\|^2\label{stepExpansionCix0}\\
&+ \left\|\left(\bs e_1\bs e_1^* + \bs e_1\bs x_0^* + \bs x_0\bs e_1^* +\bs x_0\bs x_0^*\right)\right\| \left|\langle \bs x_0\bs x_0^*, \bs x_{S^c}(\bs x_{S^c})^*\rangle\right\|\\
& -\left\|\left(\bs e_1\bs e_1^* + \bs e_1\bs x_0^* + \bs x_0\bs e_1^* +\bs x_0\bs x_0^*\right)\right\| \left|\langle \text{\upshape diag}(\bs x_0\bs x_0^*), \text{\upshape diag}(\bs x_{S^c}(\bs x_{S^c})^*)\rangle + \|\bs x_0\|^2 \|\bs x_{S^c}\|^2  \right|\\
&+ \left\|\bs x_0\bs x_0^*\right\| \left|\langle \bs x_0\bs x_0^* , \bs x_{S^c}(\bs x_{S^c})^*\rangle - \langle \text{\upshape diag}(\bs x_0\bs x_0^*), \text{\upshape diag}(\bs x_{S^c}(\bs x_{S^c})^*)\rangle + \|\bs x_0\|^2 \|\bs x_{S^c}\|^2  \right|\\
&\lesssim \|\bs x_{S^c}\|^2 k\label{boundVarianceOmegacrelationHTOmega}
\end{align}
In~\eqref{stepExpansionCix0} we use
\begin{align}
&\left\|\mathbb{E}\left\{\sum_{a, b} c_i[a]x_0[a]c_i[b]\left(\bs e_b\bs x_0^* + \bs x_0\bs e_b^*\right)\sum_{m,n} c_i[m]x_{S^c}[m]c_i[n]x_{S^c}[n] \right\}\right\|\\
& = \left\| \sum_{a\neq b} x_0[a]\left(\bs x_0\bs e_b^* + \bs e_b\bs x_0^*\right) x_{S^c}[a]x_{S^c}[b] \right\| + \left\|2\bs x_0\bs x_0^* \|\bs x_{S^c}\|^2 \right\| + \|\bs x_{S^c}\|^2 \|\bs x_0\|^2\\
&\lesssim \left\|\langle \bs x_0, \bs x_{S^c} \left(\bs x_0(\bs x_{S^c})^* + \bs x_{S^c}\bs x_0^*\right)\rangle \right\| + \left\|\bs x_{S^c}\right\|^2 \left\|\bs x_0\right\|^2
\end{align}
Combining~\eqref{boundVarianceOmegacrelationHTOmega} and~\eqref{boundOrliczOmegaC01} and applying Proposition~\ref{BernsteinMatrix}, we get
\begin{align}
&\left\|m^{-1}\sum_{i=1}^m \mathcal{P}_{T\cap \Omega}\left(\bs c_i\bs e_1^* + \bs e_1\bs c_i^*\right) \langle \bs c_i\bs e_1^* + \bs e_1\bs c_i^*, \bs X_{\Omega^c}\rangle\right.\\
&\left. - \mathbb{E}m^{-1} \sum_{i=1}^m \mathcal{P}_{T\cap \Omega}\left(\bs c_i\bs e_1^* + \bs e_1\bs c_i^*\right)\langle \bs c_i\bs e_1^* + \bs e_1\bs c_i^*, \bs X_{\Omega^c}\rangle \right\|\lesssim \sqrt{\frac{k\|\bs x_{S^c}\|^2}{m}} \vee \frac{k\|\bs x_{S^c}\|}{m}
\end{align}
with probability at least $1-e^{-t}$. Taking $t=\log^{\alpha}(n)$ gives
\begin{align}
&\left\|m^{-1}\sum_{i=1}^m \mathcal{P}_{T\cap \Omega}\left(\bs c_i\bs e_1^* + \bs e_1\bs c_i^*\right) \langle \bs c_i\bs e_1^* + \bs e_1\bs c_i^*, \bs X_{\Omega^c}\rangle\right.\\
&\left. - \mathbb{E}m^{-1} \sum_{i=1}^m \mathcal{P}_{T\cap \Omega}\left(\bs c_i\bs e_1^* + \bs e_1\bs c_i^*\right)\langle \bs c_i\bs e_1^* + \bs e_1\bs c_i^*, \bs X_{\Omega^c}\rangle \right\|\lesssim \delta \|\bs x_{S^c}\|
\end{align}
with probability at least $1-n^{-\alpha}$ as soon as $m\gtrsim k\delta^{-1}$ (up to log factors).

\end{proof}

\subsection{\label{numericalExperimentsQuadratic}Numerical Experiments}

To back the theory, we provide a number of numerical experiments. We start by studying the recovery of a vector $\bs x_0\in [1,\tilde{\bs x}_0]$ with support $S$, of size $|S|=k$, sampled uniformly at random from $[1,2,\ldots, n]$ and $\left.\tilde{\bs x}_0\right|_{S}\in \text{Uni}\left( \mathbb{S}^{|S|-1}\right)$ so that $\|\tilde{\bs x}_0\|_2=1$.  We take $n = 30$, $k = 5$ and $m$ ranging from $2$ to $30$.  Recovery from a system of purely quadratic equations would occur at $m \geq k^2= 25$ in this case. From Figure~\ref{figureRecoveryLinearMatrixL1} we see that recovery of $\bs x_0$ in the linear setting with the vector and matrix $\ell_1$ norms occur, as expected well before this limit.  For each of the two formulations $\mathsf{SDP}_1(\tilde{\bs A}, \lambda)$ and $\mathsf{SDP}_2(\tilde{\bs A}, \lambda)$, we conduct a number of $10$ experiments and display the average relative errors in Figure~\ref{figureRecoveryLinearMatrixL1} (top). 

To further corroborate the assumptions of Propositions~\ref{propositionOnlyLinearVectorL1} and~\ref{propositionSDP2lambda1}, we study the recovery of a vector $\bs x_0$ with $S=\text{supp}(\bs x_0)$, $|S|=k$ sampled uniformly at random from $[1,2,\ldots, n]$ and defined on $S$ as $x_0[i] = w/\sqrt{w^2 + (1-w)^2(k-1)} $ for $i=S[1]$ and $(1-w)/\sqrt{w^2 + (1-w)^2(k-1)}$ otherwise. The experiments are again repeated 10 times and for various values of $w$. The average relative errors are shown in Fig~\ref{figureRecoveryLinearMatrixL1} (bottom left). Those results in particular suggest that the incoherence and random sign conditions in Proposition~\ref{propositionSDP2lambda1} arise as theoretical artefacts.  

Finally, we study the recovery of the matrix $\bs X_0 = \bs x_0\bs x_0^*$, $\bs x_0 = [1, \tilde{\bs x}_0]$ for a $k$-sparse $\tilde{\bs x}_0\in \mathbb{S}^{n-1}$ without the trace constraint. As in the previous settings we repeat our experiments 10 times and display the average relative error in Fig~\ref{figureRecoveryLinearMatrixL1} (bottom right). Those experiments clearly indicate the need for this constraint.
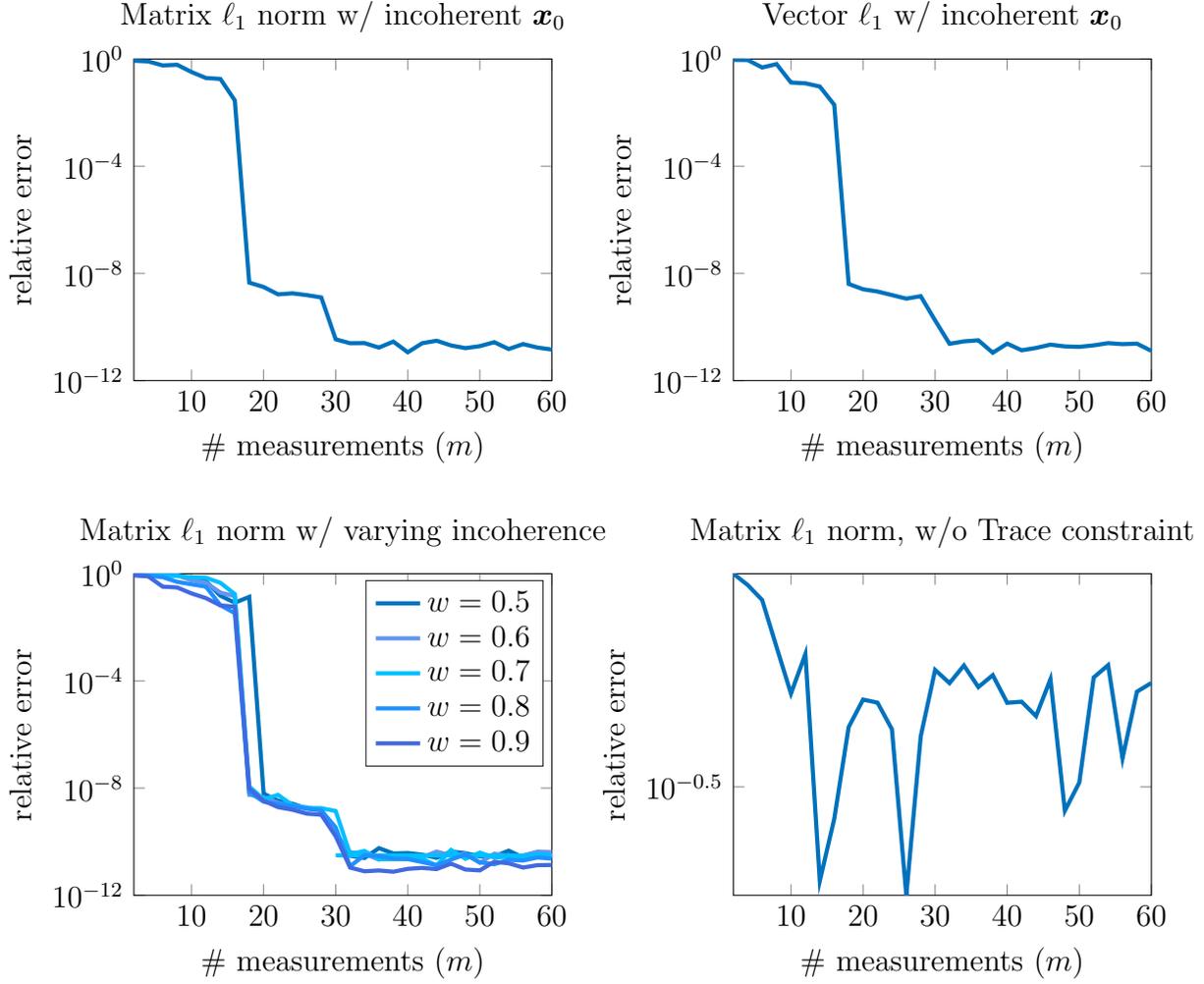
\begin{figure}[h!]
\hspace{-1cm}
\begin{minipage}{0.3\linewidth}
%
%
\definecolor{mycolor1}{rgb}{0.00000,0.44700,0.74100}%
\begin{tikzpicture}

\begin{axis}[%
width=2.228in,
height=1.7154in,
at={(1.011in,0.642in)},
scale only axis,
xmin=2,
xmax=60,
ymode=log,
ymin=1e-12,
ymax=1,
yminorticks=true,
axis background/.style={fill=white},
xlabel={\# measurements ($m$)},
    ylabel={relative error},
title = {Matrix $\ell_1$ norm w/ incoherent $\bs x_0$}
]
\addplot [color=mycolor1, forget plot, line width=1.6pt]
  table[row sep=crcr]{%
2	0.861801639374832\\
4	0.813893599552643\\
6	0.575902983656732\\
8	0.609789420597854\\
10	0.326863836628767\\
12	0.195682013656673\\
14	0.180840638261492\\
16	0.0285097997230055\\
18	4.52598898715635e-09\\
20	3.11004017248476e-09\\
22	1.65245315975616e-09\\
24	1.80302040033212e-09\\
26	1.54516368637627e-09\\
28	1.26558490693445e-09\\
30	3.46503391373232e-11\\
32	2.47618167073508e-11\\
34	2.50457314591715e-11\\
36	1.70340147179476e-11\\
38	2.85074838164305e-11\\
40	1.1387600475716e-11\\
42	2.4831231404017e-11\\
44	3.07316109062812e-11\\
46	2.04827451067447e-11\\
48	1.62557716358814e-11\\
50	1.92969579069761e-11\\
52	2.71243966214095e-11\\
54	1.50545240520523e-11\\
56	2.30207829205235e-11\\
58	1.71016251206012e-11\\
60	1.43877794937807e-11\\
};
\end{axis}

\begin{axis}[%
width=2.228in,
height=1.7154in,
at={(0in,0in)},
scale only axis,
xmin=0,
xmax=1,
ymin=0,
ymax=1,
axis line style={draw=none},
ticks=none,
axis x line*=bottom,
axis y line*=left
]
\end{axis}
\end{tikzpicture}%
\end{minipage}\hspace{3cm}
\begin{minipage}{0.3\linewidth}
%
%
\definecolor{mycolor1}{rgb}{0.00000,0.44700,0.74100}%
\begin{tikzpicture}

\begin{axis}[%
width=2.228in,
height=1.7154in,
at={(1.011in,0.642in)},
scale only axis,
xmin=2,
xmax=60,
ymode=log,
ymin=1e-12,
ymax=1,
yminorticks=true,
axis background/.style={fill=white},
xlabel={\# measurements ($m$)},
    ylabel={relative error},
title = {Vector $\ell_1$ w/ incoherent $\bs x_0$}
]
\addplot [color=mycolor1, forget plot, line width=1.6pt]
  table[row sep=crcr]{%
2	0.926629875514028\\
4	0.916954096912032\\
6	0.482793037666978\\
8	0.646397080096777\\
10	0.132634861948755\\
12	0.123840678270426\\
14	0.0940136384878549\\
16	0.0195819073177046\\
18	4.04551553307582e-09\\
20	2.55994829959576e-09\\
22	2.10972807061004e-09\\
24	1.55428554326959e-09\\
26	1.1432427486223e-09\\
28	1.41090081751153e-09\\
30	1.756855134483e-10\\
32	2.36261634215116e-11\\
34	2.90650816084253e-11\\
36	3.19082717558013e-11\\
38	1.0948835711531e-11\\
40	2.40148597877748e-11\\
42	1.3506000212959e-11\\
44	1.65208022676803e-11\\
46	2.19743755005522e-11\\
48	1.88269036732003e-11\\
50	1.81612642421509e-11\\
52	2.06346314573567e-11\\
54	2.49876303645274e-11\\
56	2.28877027160594e-11\\
58	2.38845541057002e-11\\
60	1.2877863968809e-11\\
};
\end{axis}

\begin{axis}[%
width=2.228in,
height=1.7154in,
at={(0in,0in)},
scale only axis,
xmin=0,
xmax=1,
ymin=0,
ymax=1,
axis line style={draw=none},
ticks=none,
axis x line*=bottom,
axis y line*=left
]
\end{axis}
\end{tikzpicture}%
\end{minipage}\\

\hspace{-1cm}
\begin{minipage}{0.3\linewidth}
%
%
\definecolor{mycolor1}{rgb}{0.00000,0.44700,0.74100}%
\definecolor{mycolor2}{RGB}{100,149,237}%
\definecolor{mycolor3}{RGB}{0,191,255}%
\definecolor{mycolor4}{RGB}{30,144,255}%
\definecolor{mycolor5}{RGB}{65,105,225}%
\begin{tikzpicture}

\begin{axis}[%
width=2.228in,
height=1.7154in,
at={(1.011in,0.642in)},
scale only axis,
xmin=2,
xmax=60,
ymode=log,
ymin=1e-12,
ymax=1,
yminorticks=true,
axis background/.style={fill=white},
legend style={legend cell align=left, align=left, draw=white!15!black},
xlabel={\# measurements ($m$)},
    ylabel={relative error},
title = {Matrix $\ell_1$ norm w/ varying incoherence}
]
\addplot [color=mycolor1,line width=1.6pt]
  table[row sep=crcr]{%
2	0.925197116297647\\
4	0.917725451108134\\
6	0.889225554313859\\
8	0.882532581603353\\
10	0.64226835121396\\
12	0.517358683401099\\
14	0.15082798051969\\
16	0.0826625300916496\\
18	0.136083814985821\\
20	6.19721526464422e-09\\
22	3.38719078576636e-09\\
24	2.60124943866368e-09\\
26	1.72519096429066e-09\\
28	1.55180160222642e-09\\
30	2.13801172615475e-10\\
32	2.97498131913101e-11\\
34	2.46299342686762e-11\\
36	5.73180710774561e-11\\
38	3.63417405754863e-11\\
40	3.76912647341846e-11\\
42	3.26802199270494e-11\\
44	2.50309236560166e-11\\
46	4.29146254778031e-11\\
48	3.71195215341752e-11\\
50	2.81844855980281e-11\\
52	2.82422667185788e-11\\
54	4.6100705515047e-11\\
56	2.57199424852138e-11\\
58	4.06049560082506e-11\\
60	3.00034015318309e-11\\
};
\addlegendentry{$w = 0.5$}

\addplot [color=mycolor2, line width=1.6pt]
  table[row sep=crcr]{%
2	0.870545314784002\\
4	0.968357661766201\\
6	0.844557500605038\\
8	0.864484402051707\\
10	0.572870038509288\\
12	0.461178019843884\\
14	0.196909017128781\\
16	0.144602480899879\\
18	5.63687313481581e-09\\
20	4.56619008456269e-09\\
22	2.81005863116249e-09\\
24	2.14305459759871e-09\\
26	1.91341817072067e-09\\
28	1.46916062494851e-09\\
30	1.95613810594799e-10\\
32	4.04404091616567e-11\\
34	3.51215566703931e-11\\
36	2.14277867139821e-11\\
38	3.01972628338014e-11\\
40	2.97752045624595e-11\\
42	2.62624889104089e-11\\
44	4.26202605357148e-11\\
46	3.06182717985208e-11\\
48	3.52919194263932e-11\\
50	1.62134761755026e-11\\
52	3.55271654027725e-11\\
54	2.91791249253362e-11\\
56	3.06194734498135e-11\\
58	4.22016186600926e-11\\
60	4.0880844874068e-11\\
};
\addlegendentry{$w = 0.6$}

\addplot [color=mycolor3, line width=1.6pt]
  table[row sep=crcr]{%
2	0.928438290959634\\
4	0.857179442009879\\
8	0.836626894985408\\
10	0.739754331224973\\
12	0.704553725523316\\
14	0.45820393997363\\
16	0.173242811443695\\
18	1.15619968978578e-08\\
20	3.81870961418726e-09\\
22	5.58083126736101e-09\\
24	2.29908909181014e-09\\
26	1.80155640815409e-09\\
28	1.78131587293996e-09\\
30	1.39319768680301e-09\\
32	3.47528768486355e-11\\
34	4.49862457499352e-11\\
36	2.19939434619979e-11\\
38	2.25521919115418e-11\\
40	2.30039746450152e-11\\
42	2.29255146597808e-11\\
44	1.29291206743624e-11\\
46	4.91557829646846e-11\\
48	2.30704903798733e-11\\
50	3.8954302343264e-11\\
52	2.55662920137981e-11\\
54	2.74579559081506e-11\\
56	2.83996641488324e-11\\
58	3.43919357262031e-11\\
60	3.03239627762451e-11\\
30	3.09862504974821e-11\\
};
\addlegendentry{$w = 0.7$}

\addplot [color=mycolor4, line width=1.6pt]
  table[row sep=crcr]{%
2	0.881368688129894\\
4	0.833967014038345\\
6	0.7409149050182\\
8	0.495677280774816\\
10	0.403499173355253\\
12	0.32879720819566\\
14	0.0667536923206447\\
16	0.0335477309475007\\
18	7.06689405319547e-09\\
20	3.09453439281059e-09\\
22	2.57537766496833e-09\\
24	2.10852723898248e-09\\
26	1.67587494058229e-09\\
28	1.46483434728885e-09\\
30	3.40653026713729e-10\\
32	1.20830516420922e-11\\
34	3.15450150148444e-11\\
36	2.85704303211755e-11\\
38	2.29659791819127e-11\\
40	2.23980627442815e-11\\
42	1.7373375302974e-11\\
44	1.31819981047591e-11\\
46	2.15699183673764e-11\\
48	3.33197984369525e-11\\
50	1.70227483077658e-11\\
52	1.44195678581289e-11\\
54	2.10660435476984e-11\\
56	1.98138189323582e-11\\
58	2.5750342045887e-11\\
60	2.32986154545152e-11\\
};
\addlegendentry{$w =0.8$}

\addplot [color=mycolor5, line width=1.6pt]
  table[row sep=crcr]{%
2	0.868055088228917\\
4	0.791408039953162\\
6	0.328703828392677\\
8	0.30689225851675\\
10	0.183739254496404\\
12	0.122381432627517\\
14	0.0655162774308057\\
16	0.0584434831156798\\
18	9.62099097876106e-09\\
20	3.34408835126706e-09\\
22	1.94563686044322e-09\\
24	1.55421823742771e-09\\
26	1.11049674703123e-09\\
28	1.0145128327093e-09\\
30	1.59005472516268e-10\\
32	1.0828813169972e-11\\
34	7.85835793891934e-12\\
36	8.45295089708167e-12\\
38	7.57352106220992e-12\\
40	9.67804763123854e-12\\
42	1.05317920976585e-11\\
44	9.48837382588084e-12\\
46	1.51998187134349e-11\\
48	9.14429241665776e-12\\
50	8.49379706859166e-12\\
52	1.78214610268793e-11\\
54	1.55756550450229e-11\\
56	1.08606754947002e-11\\
58	1.32917559120313e-11\\
60	1.34992402005251e-11\\
};
\addlegendentry{$w = 0.9$}

\end{axis}

\begin{axis}[%
width=2.228in,
height=1.7154in,
at={(0in,0in)},
scale only axis,
xmin=0,
xmax=1,
ymin=0,
ymax=1,
axis line style={draw=none},
ticks=none,
axis x line*=bottom,
axis y line*=left
]
\end{axis}
\end{tikzpicture}%
\end{minipage}\hspace{3cm}
\begin{minipage}{0.3\linewidth}
%
%
\definecolor{mycolor1}{rgb}{0.00000,0.44700,0.74100}%
\begin{tikzpicture}

\begin{axis}[%
width=2.228in,
height=1.7154in,
at={(1.011in,0.642in)},
scale only axis,
xmin=2,
xmax=60,
ymode=log,
ymin=0.183163557882738,
ymax=0.926305171033744,
yminorticks=true,
axis background/.style={fill=white},
xlabel={\# measurements ($m$)},
    ylabel={relative error},
title = {Matrix $\ell_1$ norm,  w/o Trace constraint}
]
\addplot [color=mycolor1, forget plot, line width=1.6pt]
  table[row sep=crcr]{%
2	0.926305171033744\\
4	0.874999948998156\\
6	0.811512578749527\\
8	0.640052237283235\\
10	0.506297606997037\\
12	0.617729556956604\\
14	0.197010044007035\\
16	0.269118146053049\\
18	0.427615687829486\\
20	0.491130193683393\\
22	0.483639190265549\\
24	0.423059159059947\\
26	0.183163557882738\\
28	0.407940166325098\\
30	0.570350560976841\\
32	0.533532155208053\\
34	0.58333519021193\\
36	0.523000742977577\\
38	0.555544510454287\\
40	0.483189472509491\\
42	0.48590124677407\\
44	0.451839593937599\\
46	0.54430471495143\\
48	0.28054827622542\\
50	0.323330722689312\\
52	0.549005628282317\\
54	0.58419322451903\\
56	0.365375490765714\\
58	0.511057075141469\\
60	0.533941614259256\\
};
\end{axis}

\begin{axis}[%
width=2.228in,
height=1.7154in,
at={(0in,0in)},
scale only axis,
xmin=0,
xmax=1,
ymin=0,
ymax=1,
axis line style={draw=none},
ticks=none,
axis x line*=bottom,
axis y line*=left
]
\end{axis}
\end{tikzpicture}%
\end{minipage}
\caption{\label{figureRecoveryLinearMatrixL1}Top: Phase transition for the recovery of $\bs X_0$ by the semidefinite programs $\mathsf{SDP}_1(\tilde{\bs A}, 1)$ and $\mathsf{SDP}_2(\tilde{\bs A}, 1)$. The simulations were run for $n=30$, $k=4$ as well as varying value of $m$. Bottom: (Left) Evolution of the recovery for various levels of incoherence (Right) Evolution of the recovery without the Trace constraint.}
\end{figure}

%

\section{\label{sectionFinalResult}General $\mathsf{SDP}_2(\tilde{\bs A}, \lambda)$ setting}

In this section, we consider the semidefinite program $\mathsf{SDP}(\tilde{\bs A}, \lambda)$ which interpolates between the linear and quadratic measurements on $\bs x\in \mathbb{S}^{n-1}$. The linear map $\mathcal{A}_{\lambda}$ is now defined mathematically as
\begin{align}
\begin{array}{lll}
\mathcal{S}^{n\times n} & \rightarrow & \mathbb{R}^m\\
\bs X &\mapsto&  \left\{\langle\lambda\left(\bs c_i\bs e_1^* + \bs e_1\bs c_i^*\right)+ (1-\lambda)\langle\bs A_i, \bs X \rangle \right\}_{1\leq i\leq m}
\end{array}\label{definitionMathcalAInterpolationLinearQuadratic01}
\end{align}

As explained in section~\ref{quadraticEquations}, it will not be possible to ensure recovery of the rank one matrix $\bs X_0 = \bs x_0\bs x_0^*$ in the case $\lambda=0$ as the quadratic part of $\mathcal{A}$ only applies to the central part of $\bs X$ (i.e. the second order pseudo moments). Clearly if we let $\tilde{\bs X}_0 = \tilde{\bs x}_0\tilde{\bs x}_0^*$, then the matrix 
\begin{align}
\left[\begin{array}{cc}
1 & \bs 0^*\\
\bs 0& \tilde{\bs x}_0\tilde{\bs x}_0^* 
\end{array}\right]
\end{align} 
will always have smaller $\ell_1$ norm than $\bs X_0 = \bs x_0\bs x_0^*$ and will perfectly satisfy the constraints.  It seems therefore reasonable to expect that beyond a certain threshold, it will not be possible to recover the whole matrix anymore and we should instead focus solely on the central part (from which it will be possible to extract $\bs x_0$).  This phenomenon is illlustrated in section~\ref{sectionNumericalExperimentsInterpolation}.

\subsection{General outline}

Keeping the inverse from section~\ref{sectionLinearMeasurementsMatrixL1} and simply viewing the additional contribution as a perturbation, for any $\bs X\in T\cap \Omega$, we have
\begin{align}
&\bs X +  \lambda^{-1} \left(\mathcal{P}_{T\cap \Omega}\mathcal{H}\mathcal{P}_{T\cap \Omega}\right)^{-1}\sum_{i=1}^m \mathcal{P}_{T\cap \Omega} \left(\bs e_1\bs c_i^* + \bs c_i\bs e_1^*\right) \langle  (1-\lambda)\bs A_i, \bs X\rangle\\
&= \left(\mathcal{P}_{T\cap \Omega}\mathcal{H}\mathcal{P}_{T\cap \Omega}\right)^{-1} (\lambda m)^{-1}\sum_{i=1}^m \mathcal{P}_{T\cap \Omega} \left(\bs e_1\bs c_i^* + \bs c_i\bs e_1^*\right) \langle \lambda \left(\bs e_1\bs c_i^* + \bs c_i\bs e_1^*\right) + (1-\lambda)\bs A_i, \bs X\rangle \\
&+ \left(\mathcal{P}_{T\cap \Omega}\mathcal{H}\mathcal{P}_{T\cap \Omega}\right)^{-1}\left(2\mathcal{P}_{T\cap \Omega} \left(\bs e_1\bs e_1^*\right) \langle \bs e_1\bs e_1^*, \bs X\rangle - 2\mathcal{P}_{T\cap \Omega}(\bs e_1\bs e_1^*) \langle \bs I, \bs X\rangle   \right)\\
& + \left(\mathcal{P}_{T\cap \Omega}\mathcal{H}\mathcal{P}_{T\cap \Omega}\right)^{-1}\left(- 2\mathcal{P}_{T\cap \Omega} (\bs I) \langle \bs e_1\bs e_1^* , \bs X\rangle+6\bs I_{T\cap \Omega} \langle \bs I, \bs X\rangle   \right)
\end{align}
From this, combing Proposition~\ref{propositionInjectivityOnTcapOmega} as well as the results of section~\ref{sectionLinearMeasurementsMatrixL1} with Lemma~\ref{lemmaCrossTermsAici01Quadratic} below,  we can further write 
\begin{align}
\begin{split}
\left\|\bs H_{T\cap \Omega} \right\| \left(1 - \frac{1-\lambda}{\lambda}\sqrt{\frac{k\log^{\alpha}(n)}{m}}\right) &\lesssim \text{\upshape Tr}(\bs H_B) + \delta \left(\|\bs x_{S^c}\|_1 +  \text{Tr}(\bs H_{T^\perp})\right)  + \|\bs H_{T^\perp\cap \Omega}\|_F\\
& + \left(1-\lambda\right)\lambda^{-1} \left( \text{\upshape Tr}(\bs H_B)+ \text{Tr}(\bs H_{T^\perp\cap \Omega})\right)\\
&+\left(1-\lambda\right)\lambda^{-1} \delta \left(\|\bs H_{\Omega^c}\|_{\ell_1} + \text{Tr}(\bs H_{T^\perp})\right)
\end{split}\label{afterInterpolation04}
\end{align}
with probability $1-o_n(1)$ as soon as $m\gtrsim k\delta^{-1}$. 
\begin{restatable}{lemma}{restateCrossTermsQuadratic}
\label{lemmaCrossTermsAici01Quadratic}
Let $\bs A_i$, $\bs c_i$ respectively denote i.i.d standard normal random matrices and vectors.  Let $\Omega$ be defined as in section~\eqref{quadraticEquations} and let $T$ denote the tangent space to the cone of positive semidefinite matrices at $\bs X_0 = \bs x_0\bs x_0^*$
\begin{align}
\left\| m^{-1}\sum_{i=1}^m \mathcal{P}_{T\cap \Omega} (\bs e_1\bs c_i^* + \bs c_i\bs e_1^*) \langle \bs A_i, \bs X\rangle \right\| \leq \delta \|\bs X\|_F\label{boundInterpolation01}
\end{align}
with probability at least $1-n^{-\alpha}$ as soon as $m\gtrsim \delta^{-1}k$ (up to log factors)
\end{restatable}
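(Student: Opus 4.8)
The plan is to apply the matrix Bernstein inequality (Proposition~\ref{BernsteinMatrix}) to the sum $S_m = m^{-1}\sum_{i=1}^m Z_i$ with $Z_i = \mathcal{P}_{T\cap \Omega}(\bs e_1\bs c_i^* + \bs c_i\bs e_1^*)\langle \bs A_i, \bs X\rangle$, exactly as was done in the proofs of Propositions~\ref{propositionInjectivityOnTcapOmega}, \ref{relationTTperpOmegaLinearMeasurements} and~\ref{PropositionAstarAHOmegaComplement}. The decisive structural fact here is that $\bs A_i$ and $\bs c_i$ are \emph{independent}: the scalar $\langle \bs A_i, \bs X\rangle$ is a mean-zero Gaussian with variance $\|\bs X\|_F^2$, independent of the rank-at-most-two random matrix $\mathcal{P}_{T\cap\Omega}(\bs e_1\bs c_i^* + \bs c_i\bs e_1^*)$. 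Consequently $\mathbb{E} Z_i = \mathbb{E}[\langle \bs A_i,\bs X\rangle]\,\mathbb{E}[\mathcal{P}_{T\cap\Omega}(\bs e_1\bs c_i^* + \bs c_i\bs e_1^*)] = 0$, so no deterministic/expectation term survives (this is why the bound is purely $\delta\|\bs X\|_F$ with no additive $\text{Tr}(\bs H_B)$-type term as in Proposition~\ref{PropositionAstarAHOmegaComplement}), and the whole quantity is a genuine deviation that we control directly.

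First I would record the expansion
\begin{align}
\mathcal{P}_{T\cap \Omega}(\bs e_1\bs c_i^* + \bs c_i\bs e_1^*) = \langle \bs x_0,\bs c_i\rangle(\bs x_0\bs e_1^* + \bs e_1\bs x_0^*) + (\bs x_0\bs c_{i\Omega}^* + \bs c_{i\Omega}\bs x_0^*) - 2\langle \bs x_0,\bs c_i\rangle\bs x_0\bs x_0^*,
\end{align}
which is the same decomposition used in the earlier proofs. Then I would bound the variance proxy $\sigma_Z^2 = \|m^{-1}\sum_i \mathbb{E} Z_iZ_i^*\|$. By independence, $\mathbb{E}Z_iZ_i^* = \mathbb{E}[\langle\bs A_i,\bs X\rangle^2]\,\mathbb{E}[\mathcal{P}_{T\cap\Omega}(\bs e_1\bs c_i^*+\bs c_i\bs e_1^*)\mathcal{P}_{T\cap\Omega}(\bs e_1\bs c_i^*+\bs c_i\bs e_1^*)^*] = \|\bs X\|_F^2\,\mathbb{E}[\mathcal{P}_{T\cap\Omega}(\bs e_1\bs c_i^*+\bs c_i\bs e_1^*)^2]$. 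Using the operator-convexity inequality $(\bs A+\bs B+\bs C)(\bs A+\bs B+\bs C)^* \preceq 3(\bs A\bs A^* + \bs B\bs B^* + \bs C\bs C^*)$ on the three terms of the expansion above, and evaluating the resulting Gaussian expectations (e.g. $\mathbb{E}\langle\bs x_0,\bs c_i\rangle^2(\bs x_0\bs e_1^*+\bs e_1\bs x_0^*)^2 \preceq C\,\bs I$ on the relevant $4$-dimensional subspace, and $\mathbb{E}(\bs x_0\bs c_{i\Omega}^* + \bs c_{i\Omega}\bs x_0^*)^2 \preceq C(\bs x_0\bs x_0^*\|\bs c_{i\Omega}\|^2 \text{ in expectation} + \dots) \preceq Ck\bs I$), one finds $\sigma_Z^2 \lesssim k\|\bs X\|_F^2$ — the factor $k$ coming, as usual, from the cross term $\bs c_{i\Omega}\bs x_0^*$ whose squared Frobenius norm is of order $\|\bs c_{i\Omega}\|^2 \asymp k$. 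For the Orlicz norm I would use $\|Z_i\|_{\psi_1} \le \|\langle\bs A_i,\bs X\rangle\|_{\psi_2}\,\|\,\|\mathcal{P}_{T\cap\Omega}(\bs e_1\bs c_i^*+\bs c_i\bs e_1^*)\|\,\|_{\psi_2} \lesssim \|\bs X\|_F\,(1 + \sqrt{k}) \lesssim \sqrt{k}\|\bs X\|_F$, since $\|\langle\bs A_i,\bs X\rangle\|_{\psi_2}\asymp \|\bs X\|_F$ and the $\psi_2$-norm of the operator norm of the rank-two Gaussian matrix is of order $\sqrt k$.

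Plugging $\sigma_Z \asymp \sqrt{k}\|\bs X\|_F$ and $\|Z_i\|_{\psi_1}\asymp \sqrt{k}\|\bs X\|_F$ into Proposition~\ref{BernsteinMatrix} with $t = \log^{\alpha}(n)$ gives
\begin{align}
\left\|m^{-1}\sum_{i=1}^m Z_i\right\| \lesssim \sqrt{\frac{k\log^{\alpha}(n)}{m}}\,\|\bs X\|_F \;\vee\; \frac{k\log^{\alpha}(n)}{m}\,\|\bs X\|_F
\end{align}
with probability at least $1-n^{-\alpha}$, so as soon as $m\gtrsim k\delta^{-1}$ (up to log factors) the right-hand side is at most $\delta\|\bs X\|_F$, which is the claim. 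I expect the only real work to be the variance computation — carefully tracking that the $\bs c_{i\Omega}\bs x_0^*$ term contributes $O(k)$ and not more, and checking that the $\bs X$-dependence factors cleanly out as $\|\bs X\|_F^2$ because $\bs A_i\perp\!\!\!\perp\bs c_i$; everything else is a routine repetition of the Bernstein arguments already carried out in Section~\ref{auxialliaryResultsLinearMeasMatrixL1}.
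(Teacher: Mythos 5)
Your proposal is correct and follows essentially the same route as the paper: the paper likewise sets $Z_i = \mathcal{P}_{T\cap \Omega}(\bs e_1\bs c_i^* + \bs c_i\bs e_1^*)\langle \bs A_i, \bs X\rangle$, uses the independence of $\bs A_i$ and $\bs c_i$ to get $\mathbb{E}Z_i=0$, bounds the variance proxy by $k\|\bs X\|_F^2$ and the $\psi_1$-norm by $\sqrt{k}\,\|\bs X\|_F$ via the same expansion of $\mathcal{P}_{T\cap\Omega}(\bs e_1\bs c_i^*+\bs c_i\bs e_1^*)$, and concludes with Proposition~\ref{BernsteinMatrix}. Your added remark explaining why no deterministic term survives (in contrast with Proposition~\ref{PropositionAstarAHOmegaComplement}) is a useful clarification but does not change the argument.
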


Substituting~\eqref{afterInterpolation04} into~\eqref{afterApplyingHansonWright02} and combining with~\eqref{lowerBoundl1NormFinal01}, we get

\begin{align}
\|\bs X\|_{\ell_1}&\geq \|\bs X_0\|_{\ell_1} -|W_0|\left(4\text{\upshape Tr}\left(\bs H_{T^\perp\cap \Omega}\right) + \delta \text{\upshape Tr}(\bs H_{T^\perp}) + \delta\|\bs x_{S^c}\|_{\ell_1}+  \text{\upshape Tr}(\bs H_B)\right)\\
&-|W_0|(1-\lambda)\lambda^{-1}\left(4\text{\upshape Tr}\left(\bs H_{T^\perp\cap \Omega}\right) + \delta \text{\upshape Tr}(\bs H_{T^\perp}) + \delta\|\bs H_{\Omega^c}\|_{\ell_1}+  \text{\upshape Tr}(\bs H_B)\right)\\
& + \|\bs H_{\Omega^c}\|_{\ell_1}  + \gamma (1-\delta)\text{Tr}(\bs H_{B}) + \gamma (1-\delta) \text{Tr}(\bs H_{T^\perp\cap \Omega}) + \delta \gamma \text{\upshape Tr}(\bs H_{T^\perp}) 
\end{align}
where $W_0$ is defined as in~\eqref{definitionW0} with $|W_0|\lesssim\left(\mu_0(k+\gamma) +( k+\gamma)\sqrt{k}\mu_0^2\right)$.  Further developing, we obtain
\begin{align}
\|\bs X\|_{\ell_1}&\geq\|\bs X_0\|_{\ell_1} + \left(\gamma(1-\delta) - 4|W_0| - |W_0|(1-\lambda)\lambda^{-1}\right) \text{\upshape Tr}(\bs H_{T^\perp\cap \Omega})\\
&+ \left(\gamma(1-\delta) - |W_0|- |W_0|(1-\lambda)\lambda^{-1}\right) \text{\upshape Tr}(\bs H_{B})\\
&+ \left(\delta \gamma -|W_0|\delta- |W_0|\delta(1-\lambda)\lambda^{-1} \right)\text{\upshape Tr}(\bs H_{T^\perp})\\
&+ \left(2^{-1}\mu_0^{-1} - |W_0|\delta \right)\|\bs x_{S^c}\|_{\ell_1}\\
&+ 2^{-1}\left(1- |W_0|\delta(1-\lambda)\lambda^{-1}\right)\|\bs H_{\Omega^c}\|_{\ell_1} 
\end{align}
which immediately implies $\mu_0(1-\lambda)/\lambda \vee \mu_0^2\sqrt{k}(1-\lambda)/\lambda <1$ (up to log factors). From this, one can choose 
\begin{align}
\gamma > \mu_0^2\left( \frac{1-\lambda}{\lambda}\right) k^{3/2}
\end{align}
And the last term then gives 
\begin{align}
1>\delta \mu_0^2\left( \frac{1-\lambda}{\lambda}\right) k^{3/2}
\end{align}
which can be satisfied as soon as $m\gtrsim k^4\mu_0^4 (1-\lambda)^2/\lambda^2$.

\subsection{\label{sectionNumericalExperimentsInterpolation}Numerical Experiments}

Again, to support the theory, we perform a series of numerical experiments.  We start by studying the phase transition with respect to the recovery of a sparse vector $\bs x_0\in \mathbb{R}^{20}$, $\text{\upshape supp}(\bs x_0) = S$, $|S| =4$ which we generate as $\left.\bs x_0\right|_S \sim \text{\upshape Uni}(\mathbb{S}^{3})$, $S\sim \text{Uni}(n,k)$ for various values of $m$ (number of measurements) and $\lambda$ (linear vs quadratic tradeoff). For each $(m, \lambda)$ pair,  and for each of the two formulations $\mathsf{SDP}_1(\tilde{\bs A}, \lambda)$ and $\mathsf{SDP}_2(\tilde{\bs A}, \lambda)$, a total of $40$ experiments are run and the results are stored as $1$ if $\|\tilde{\bs X} - \tilde{\bs X}_0\|/\|\tilde{\bs X}_0\|<.01$ and $0$ otherwise. The averages are shown in Fig.~\ref{phaseIllustrationmanual} (top and bottom).  Note that the relative error is computed solely with respect to the central part $\tilde{\bs X}$ of $\bs X$ (leaving aside the first order pseudo moments which cannot be recovered for $\lambda=0$). The setting of Theorem~\ref{mainTheorem} (for an incoherent vector) is illustrated in the $(\lambda, k)$-space alongside the empirical phase transition in a similar framework ($k$-sparse vector $\bs x_0$ with entries generated as $\left.\bs x_0\right|_S\sim \text{\upshape Uni}(\mathbb{S}^{k-1})$ and support $S$ generated uniformly at random from $\mathbb{N}_n$) for $m=n=20$,  $k\in [0,15]$ and $\lambda\in[0,1]$. We represent the bounds $k\lesssim m\lambda^2/(1-\lambda)^2$, $k\lesssim m$ for appropriate constants selected to match the observed phase transition, as well as $\lambda > \sqrt{k}$.  

The numerical simulations of Fig~\ref{phaseIllustrationmanual} indicate a sharp reduction in the number of measurements needed around $\lambda = 0$.  To investigate this phenomenon more closely,  we perform additional simulations around that regime, taking $n = 30$, $k = 4$, a number of measurements $m = 50$ to ensure the quadratic regime (see~\cite{amini2008high, li2013sparse}) and generate the relative error on the second order pseudo-moments only for $50$ values of $\lambda$ between $0$ and $1$.  We repeat the experiment $10$ times and display the average relative error in Fig~\ref{phaseTransitionRecoveryColumnInnerMatrix} for both $\mathsf{SDP}_1(\tilde{\bs A}, \lambda)$ (top) and $\mathsf{SDP}_2(\tilde{\bs A}, \lambda)$ (bottom).  Once again, those simulations show, for $\mathsf{SDP}_2(\tilde{\bs A}, \lambda)$ only, a sharp decrease in the number of measurements needed when moving from a small external field to no external field at all.

\begin{figure}
\vspace{-7cm}
\begin{minipage}{.4\linewidth}
\vspace{7.4cm}
\begin{tikzpicture}
\node [color=black] at (3,1.5) {$\bs X \neq \bs X_0$};
\node [color=blue] at (4.3,4) {$\bs X = \bs X_0$};
  \begin{axis}[
      width = 8cm, height= 7cm,
      xlabel = {$\lambda$},
      ylabel = {$m$},
            xmin = 0,
ymin = 1,
xmax =1,
      ylabel style={rotate=-90}
    ]
    
    \addplot[name path=p,domain=0.04:1.00, smooth, thick]{-95.66*x^3 + 136.69*x^2 - 86.2*x + 50.17} ;
    \path[name path=axis] (axis cs:0.04,0) -- (axis cs:1,0);

\path[name path=axis2] (axis cs:0,0) -- (axis cs:.04,0);
\path[name path=axis3] (axis cs:0,48) -- (axis cs:.04,48);
    
\addplot[gray!30] fill between[of=p and axis];
\addplot[gray!30] fill between[of=axis2 and axis3];

  \end{axis}
\end{tikzpicture}
\end{minipage}
\hspace{.2cm}
\begin{minipage}{.4\linewidth}
%
%
\begin{tikzpicture}

\begin{axis}[%
width=2.528in,
height=2.054in,
,ylabel={$m$},xlabel={$\lambda$},ylabel style={rotate=-90},
at={(1.011in,0.642in)},
scale only axis,
axis on top,
xmin=-0.0138888888888889,
xmax=1.01388888888889,
ymin=3.81818181818182,
ymax=40.1818181818182,
axis background/.style={fill=white}
]
\addplot [forget plot] graphics [xmin=-0.23505050505050505, xmax=1.1870505050505, ymin=-26.5, ymax=68.5] {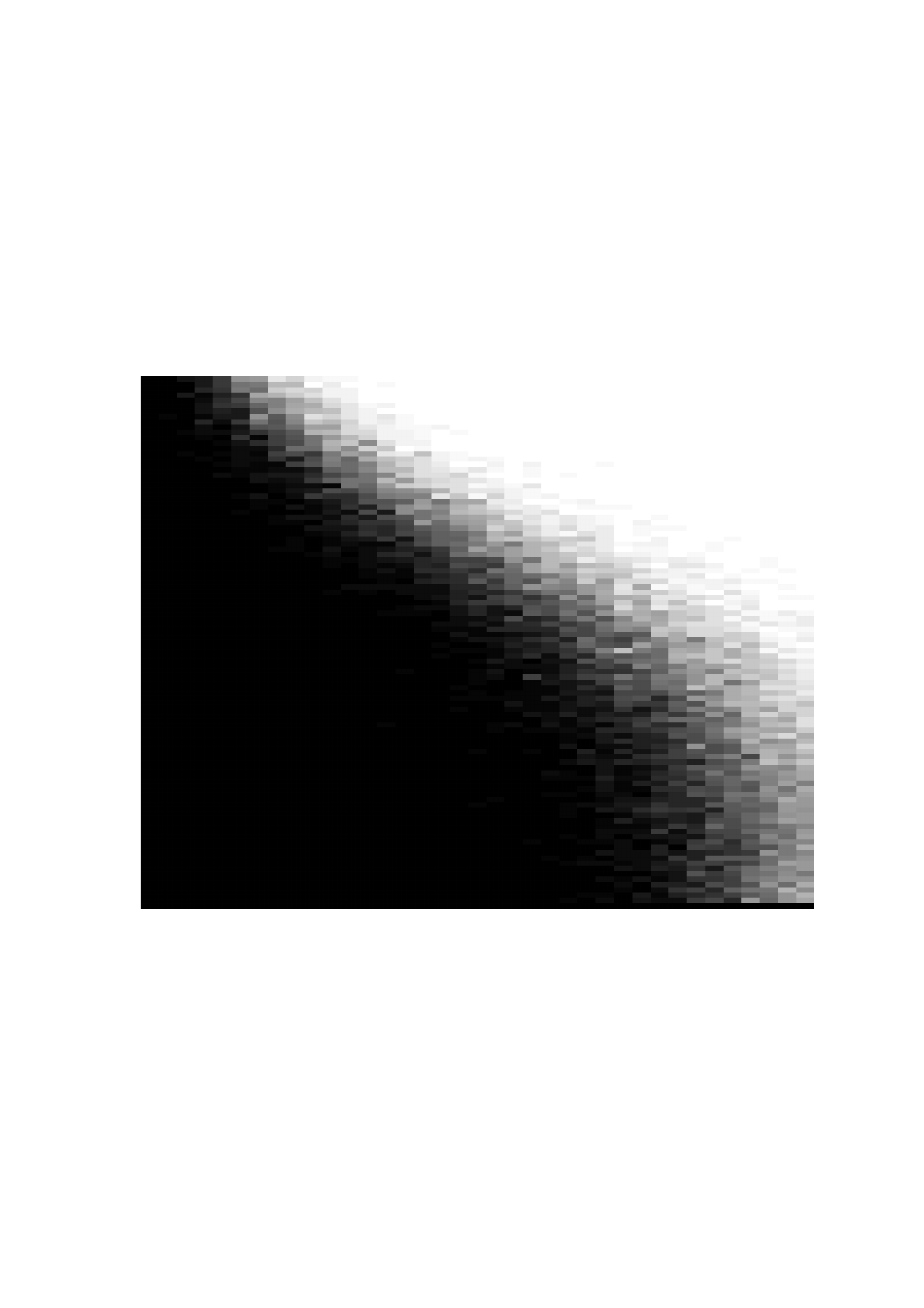};
\end{axis}

\begin{axis}[%
width=7.778in,
height=5.833in,
at={(0in,0in)},
scale only axis,
xmin=0,
xmax=1,
ymin=0,
ymax=1,
axis line style={draw=none},
ticks=none,
axis x line*=bottom,
axis y line*=left
]
\end{axis}
\end{tikzpicture}%
\end{minipage}

\vspace{-8cm}
\begin{minipage}{.4\linewidth}
\vspace{7.4cm}
\begin{tikzpicture}
\node [color=black] at (2,.7) {$\bs X \neq \bs X_0$};
\node [color=blue] at (5,4) {$\bs X = \bs X_0$};
\node [color=red] at (1.1,3) {$\tilde{\bs X} = \tilde{\bs X}_0$};
\draw[red, thick] (.9,3.2) -- (.1,4.5);
  \begin{axis}[
   width = 8cm, height= 7cm,
      xlabel = {$\lambda$},
      ylabel = {$m$},
      xmin = 0,
xmax =1,
      ylabel style={rotate=-90}
    ]
    
    \addplot[name path=p,domain=0.04:1.00, smooth, thick]{5.17319*x^3 + 19.69*x^2 - 59.80*x + 42.94} ;
    \path[name path=axis] (axis cs:0.04,5) -- (axis cs:1,5);

\path[name path=axis2] (axis cs:0,5) -- (axis cs:.04,5);
\path[name path=axis3] (axis cs:0,48) -- (axis cs:.04,48);
    
\addplot[gray!30] fill between[of=p and axis];
\addplot[gray!30] fill between[of=axis2 and axis3];

\draw[line width=4pt, red] (axis cs:0,22) -- (axis cs:0,48);

  \end{axis}

\end{tikzpicture}%
\end{minipage}
\hspace{.2cm}
\begin{minipage}{.4\linewidth}
%
%
\begin{tikzpicture}

\begin{axis}[%
width=2.528in,
height=2.054in,
at={(1.011in,0.642in)},
scale only axis,
axis on top,
xmin=-0.00505050505050505,
xmax=1.0050505050505,
ymin=3.5,
ymax=40.5,
axis background/.style={fill=white},ylabel={$m$},xlabel={$\lambda$},ylabel style={rotate=-90}
]
\addplot [forget plot] graphics [xmin=-0.21505050505050505, xmax=1.1670505050505, ymin=-26.5, ymax=68.5] {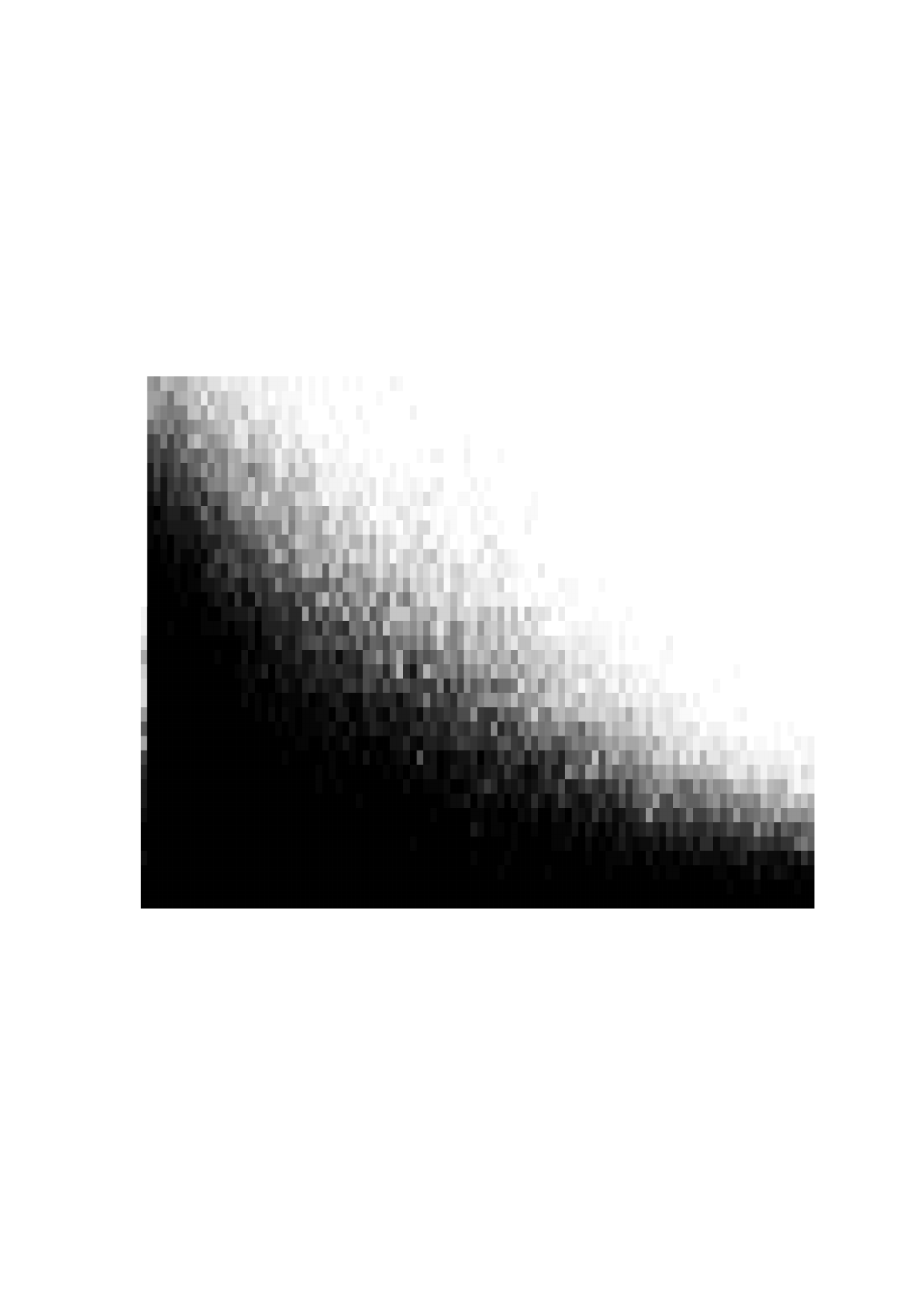};
\end{axis}

\begin{axis}[%
width=7.778in,
height=5.833in,
at={(0in,0in)},
scale only axis,
xmin=0,
xmax=1,
ymin=0,
ymax=1,
axis line style={draw=none},
ticks=none,
axis x line*=bottom,
axis y line*=left
]
\end{axis}
\end{tikzpicture}%
\end{minipage}

\caption{\label{phaseIllustrationmanual}Phase transitions observed for the semidefinite programs $\mathsf{SDP}_1(\tilde{\bs A}, \lambda)$ and $\mathsf{SDP}_2(\tilde{\bs A}, \lambda)$ (bottom). In this case the support size was set to $k=4$ and the ambient dimension was set to $n=20$. The total number of experiments is set to 40 and an experiment is considered as succesful if $\|\bs X - \bs X_0\|_F/\|\bs X_0\|_F<.01$ where $\bs X$ is the matrix returned by \textsc{cvx}. The various regimes are illustrated on the left where we have highlighted in red the regime in which the SDP successfully recovers the matrix $\tilde{\bs X}$ while failing to recover the whole matrix $\bs X$. The behavior of the SDP in this last regime is further illustrated in Fig.~\ref{phaseTransitionRecoveryColumnInnerMatrix}. }
\end{figure}

%
%
%

\begin{figure}
%
%
\begin{tikzpicture}

\begin{axis}[%
width=3.028in,
height=2.34in,
ylabel={$k$},xlabel={$\lambda$},ylabel style={rotate=-90},
at={(1.811in,0.642in)},
scale only axis,
axis on top,
xmin=-0.0128205128205128,
xmax=1.01282051282051,
ymin=0.5,
ymax=15.5,
axis background/.style={fill=white}
]
\addplot [forget plot] graphics [xmin=-0.258205128205128, xmax=1.21282051282051, ymin=-11.5, ymax=26.5] {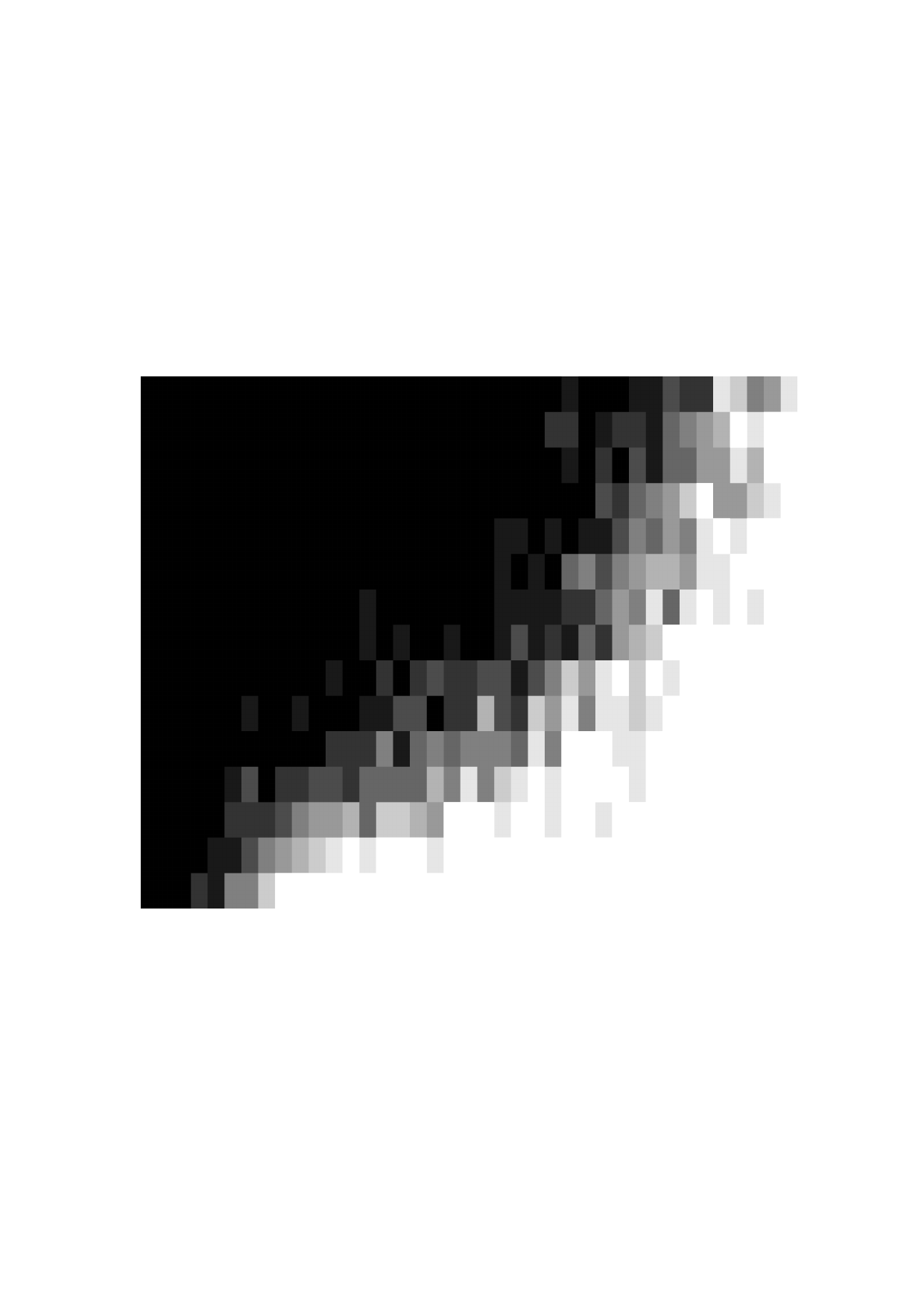};
\addplot [name path=A,color=green, forget plot, line width=1.5pt]
  table[row sep=crcr]{%
0	0\\
0.0256410256410256	0.000173130193905817\\
0.0512820512820513	0.000730460189919649\\
0.0769230769230769	0.00173611111111111\\
0.102564102564103	0.00326530612244898\\
0.128205128205128	0.00540657439446367\\
0.153846153846154	0.00826446280991736\\
0.179487179487179	0.011962890625\\
0.205128205128205	0.0166493236212279\\
0.230769230769231	0.0225\\
0.256410256410256	0.0297265160523187\\
0.282051282051282	0.0385841836734694\\
0.307692307692308	0.0493827160493827\\
0.333333333333333	0.0625\\
0.358974358974359	0.0784\\
0.384615384615385	0.09765625\\
0.41025641025641	0.120982986767486\\
0.435897435897436	0.149276859504132\\
0.461538461538462	0.183673469387755\\
0.487179487179487	0.225625\\
0.512820512820513	0.277008310249307\\
0.538461538461538	0.340277777777778\\
0.564102564102564	0.418685121107266\\
0.58974358974359	0.5166015625\\
0.615384615384615	0.64\\
0.641025641025641	0.797193877551021\\
0.666666666666667	1\\
0.692307692307692	1.265625\\
0.717948717948718	1.6198347107438\\
0.743589743589744	2.1025\\
0.769230769230769	2.77777777777778\\
0.794871794871795	3.75390625\\
0.82051282051282	5.22448979591837\\
0.846153846153846	7.5625\\
0.871794871794872	11.56\\
0.897435897435897	19.140625\\
0.923076923076923	36.0000000000001\\
0.948717948717949	85.5624999999999\\
0.974358974358974	360.999999999999\\
1	inf\\
};
\addplot [color=blue, forget plot, line width=1.5pt]
  table[row sep=crcr]{%
0.66	0\\
0.66	1\\
0.66	2\\
0.66	3\\
0.66	4\\
0.66	5\\
0.66	6\\
0.66	7\\
0.66	8\\
0.66	9\\
0.66	10\\
0.66	11\\
0.66	12\\
0.66	13\\
0.66	14\\
0.66	15\\
0.66	15.4\\
};
\addplot [name path=B,color=green, dashed, forget plot, line width=1.5pt]
  table[row sep=crcr]{%
0	8.4\\
0.0282051282051282	8.4\\
0.0564102564102564	8.4\\
0.0846153846153846	8.4\\
0.112820512820513	8.4\\
0.141025641025641	8.4\\
0.169230769230769	8.4\\
0.197435897435897	8.4\\
0.225641025641026	8.4\\
0.253846153846154	8.4\\
0.282051282051282	8.4\\
0.31025641025641	8.4\\
0.338461538461538	8.4\\
0.366666666666667	8.4\\
0.394871794871795	8.4\\
0.423076923076923	8.4\\
0.451282051282051	8.4\\
0.47948717948718	8.4\\
0.507692307692308	8.4\\
0.535897435897436	8.4\\
0.564102564102564	8.4\\
0.592307692307692	8.4\\
0.620512820512821	8.4\\
0.648717948717949	8.4\\
0.676923076923077	8.4\\
0.705128205128205	8.4\\
0.733333333333333	8.4\\
0.761538461538462	8.4\\
0.78974358974359	8.4\\
0.817948717948718	8.4\\
0.846153846153846	8.4\\
0.874358974358974	8.4\\
0.902564102564103	8.4\\
0.930769230769231	8.4\\
0.958974358974359	8.4\\
0.987179487179487	8.4\\
1.01538461538462	8.4\\
1.04358974358974	8.4\\
1.07179487179487	8.4\\
1.1	8.4\\
};

\path[name path=C] (axis cs:0,0) -- (axis cs:1,0);


\tikzfillbetween[
    of=A and C, soft clip={domain=0:.855}] {pattern color=green, pattern=north east lines};
\tikzfillbetween[
    of=B and C, soft clip={domain=.855:1}] {pattern color=green, pattern=north east lines};

\end{axis}

\begin{axis}[%
width=7.778in,
height=5.833in,
at={(0in,0in)},
scale only axis,
xmin=0,
xmax=1,
ymin=0,
ymax=1,
axis line style={draw=none},
ticks=none,
axis x line*=bottom,
axis y line*=left
]
\end{axis}

\end{tikzpicture}%
\caption{Evolution of the recovery for the semidefinite program~\eqref{sdpquadraticLinear} for various support sizes $k$ and various magnitudes of the interpolation parameter $\lambda$. In this case the support size was set to $k=4$ and the vector size was set to $n=20$. The total number of experiments is set to 40 and an experiment is considered as succesful if $\|\tilde{\bs X} - \bs X_0\|_F/\|\bs X_0\|_F<.01$ where $\bs X$ is the matrix returned by \textsc{cvx}. The green lines indicate the sample complexities $k\leq m\lambda^2/(1-\lambda)^2/5$ and $k\leq  m/4.8$ (The multiplicative constants were chosen to match the empirical phase transition). The blue line indicates the lower bound $\lambda>k^{1/2}/(k^{1/2}+1)$. The setting of Theorem~\ref{mainTheorem} for strongly incoherent vectors, i.e. $\mu_0k^{1/2}\asymp 1$ is illustrated by the area hatched in green. }
\end{figure}

\begin{figure}

    \begin{minipage}{.45\linewidth}
\hspace{-1cm}
%
%
\definecolor{mycolor1}{rgb}{0.00000,0.44700,0.74100}%
\begin{tikzpicture}

\begin{axis}[%
width=2.428in,
height=1.954in,
at={(1.011in,0.642in)},
scale only axis,
xmin=0,
xmax=1,
ymode=log,
ymin=1e-12,
ymax=1,
yminorticks=true,
xlabel={$\lambda$},ylabel style={rotate=-90},
axis background/.style={fill=white}]
\addplot [color=mycolor1, forget plot,line width=2pt]
  table[row sep=crcr]{%
0	0.707106781186548\\
.02	0.488699753074324\\
.04	0.289924620755412\\
.06	0.148710049433817\\
.08	0.236317011751795\\
.1	0.152218476132037\\
.12	0.167116612162094\\
.14	0.0489057759557445\\
.16	0.121733373170036\\
.18	0.0151037734795895\\
.2	0.113644402455601\\
.22	0.0075622744283673\\
.24	0.0404759453685622\\
.26	0.0599200925811696\\
.28	0.0613260493750187\\
.3	0.0250481713210799\\
.32	0.0138086355512643\\
.34	2.4862135496553e-10\\
.36	0.0354882021569276\\
.38	0.0732398338478439\\
.4	0.00166041356014295\\
.42	1.67524794962368e-08\\
.44	6.9631658123186e-09\\
.46	1.95948678371456e-10\\
.48	0.00589728538485635\\
.5	2.46113660202836e-09\\
.52	1.26609373723646e-10\\
.54	2.88780520900752e-09\\
.56	2.52590751192157e-09\\
.58	2.62861708982063e-10\\
.6	2.16744150294967e-09\\
.62	2.05589873787076e-10\\
.64	1.38120308222496e-10\\
.66	1.06447458814097e-10\\
.68	1.13211448750492e-10\\
.7	8.94268284414377e-11\\
.72	9.41381768411223e-11\\
.74	1.33046881267128e-10\\
.76	1.04942941808227e-10\\
.78	1.08471981558789e-10\\
.8	1.1095836286073e-10\\
.82	1.32447967758319e-10\\
.84	5.88657410244242e-11\\
.86	8.1713294500891e-11\\
.88	8.66178235572038e-11\\
.9	1.06814622102025e-10\\
.92	7.08432264686367e-11\\
.94	7.00352623058077e-11\\
.96	6.31959165014668e-11\\
.98	7.44359623105767e-11\\
1	9.73570786882417e-11\\
};
%
\end{axis}

\begin{axis}[%
width=2.428in,
height=1.954in,
at={(0in,0in)},
scale only axis,
xmin=0,
xmax=1,
ymin=0,
ymax=1,
axis line style={draw=none},
ticks=none,
axis x line*=bottom,
axis y line*=left
]
\end{axis}
\end{tikzpicture}%
\end{minipage}
\begin{minipage}{.45\linewidth}
\hspace{-2.6cm}\vspace{-.4cm}
%
%
\definecolor{mycolor1}{rgb}{0.00000,0.44700,0.74100}%
\begin{tikzpicture}

\begin{axis}[%
width=2.428in,
height=1.954in,
at={(1.011in,0.642in)},
scale only axis,
xmin=0,
xmax=1,
ymode=log,
ymin=1e-12,
ymax=1,
yminorticks=true,
xlabel={$\lambda$},ylabel style={rotate=-90},
axis background/.style={fill=white}
]
\addplot [color=mycolor1, forget plot,line width=2pt]
  table[row sep=crcr]{%
0	0.0762681004627713\\
.02	0.0764296064215244\\
.04	0.115776809227246\\
.06	0.178573195865403\\
.08	0.0695368996468805\\
.1	0.0908785665949652\\
.12	0.219737297862986\\
.14	0.173371769754537\\
.16	0.0969808855136793\\
.18	0.0800327160875975\\
.2	0.147660850397208\\
.22	0.0250807828952093\\
.24	0.127329994299783\\
.26	0.144113945672819\\
.28	0.0444506300889511\\
.3	0.121977039810627\\
.32	0.0466596685791171\\
.34	0.0711617084156318\\
.36	0.0210743489475558\\
.38	0.0420026650385536\\
.4	0.00775337764658844\\
.42	4.87969545267645e-09\\
.44	0.0234010657419767\\
.46	0.00984003060506706\\
.48	4.76261818653961e-10\\
.5	5.31055169452492e-09\\
.52	1.2430563524251e-08\\
.54	3.39056843318342e-09\\
.56	3.44973787390983e-10\\
.58	6.6077807688062e-09\\
.6	3.64503251035118e-10\\
.62	2.90946933344492e-10\\
.64	2.53644549301481e-10\\
.66	2.33277813367697e-10\\
.68	2.25516793768179e-10\\
.7	2.30995950073741e-10\\
.72	3.12009786740235e-10\\
.74	2.96576602612956e-10\\
.76	2.38049007447961e-10\\
.78	1.93233671148461e-10\\
.8	2.91656932598342e-10\\
.82	2.83653974158873e-10\\
.84	2.01732451098627e-10\\
.86	1.48447143087232e-10\\
.88	1.77475911524968e-10\\
.9	2.27322463806119e-10\\
.92	2.09901568207585e-10\\
.94	1.98187448105543e-10\\
.96	1.94296340357099e-10\\
.98	1.67530289577671e-10\\
1	1.7861221475827e-10\\
};
\end{axis}

\begin{axis}[%
width=2.428in,
height=1.954in,
at={(0in,0in)},
scale only axis,
xmin=0,
xmax=1,
ymin=0,
ymax=1,
axis line style={draw=none},
ticks=none,
axis x line*=bottom,
axis y line*=left
]
\end{axis}
\end{tikzpicture}%
\end{minipage}
\begin{minipage}{.45\linewidth}
\hspace{-1cm}
%
%
\definecolor{mycolor1}{rgb}{0.00000,0.44700,0.74100}%
\begin{tikzpicture}

\begin{axis}[%
width=2.428in,
height=1.954in,
at={(1.011in,0.642in)},
scale only axis,
xmin=0,
xmax=1,
ymode=log,
ymin=1e-12,
ymax=1,
yminorticks=true,xlabel={$\lambda$},ylabel style={rotate=-90},
axis background/.style={fill=white}
]
\addplot [color=mycolor1, forget plot,line width=2pt]
  table[row sep=crcr]{%
0	0.707106781186548\\
.02	0.348394835487082\\
.04	0.25992289966914\\
.06 0.0563599827518177\\
.08	0.209913892735856\\
.1	2.54302551118754e-08\\
.12	1.1892712484522e-08\\
.14	2.22069217595453e-08\\
.16	9.14316933980127e-09\\
.18	1.43071213255648e-08\\
.2	2.37853464973275e-08\\
.22	1.13593727258668e-09\\
.24	7.91525704403228e-09\\
.26	1.00199809309368e-08\\
.28	4.56328337919368e-09\\
.3	5.33223573390809e-09\\
.32	8.54218305880522e-10\\
.34	4.96600312150329e-10\\
.36	6.33131616816669e-09\\
.38	4.10065497800899e-09\\
.4	7.87359911086991e-10\\
.42	1.88125806961419e-09\\
.44	4.42041708248658e-09\\
.46	4.26728691093929e-10\\
.48	1.05751720946896e-09\\
.5	8.5471115993097e-10\\
.52	1.55192448645892e-10\\
.54	3.27373608113146e-09\\
.56	2.76422489393089e-10\\
.58	2.91870435547821e-12\\
.6	1.99975479484318e-11\\
.62	2.96411088254189e-11\\
.64	1.59697906759252e-10\\
.66	1.29557217859553e-11\\
.68	4.86901878834046e-12\\
.7	5.97806149358988e-11\\
.72	1.71150144916825e-11\\
.74	8.49875929367104e-12\\
.76	7.71553224782635e-12\\
.78	7.40349759329004e-12\\
.8	1.23889256671768e-11\\
.82	6.40049509557755e-11\\
.84	6.15677541364696e-12\\
.86	8.44035771107842e-12\\
.88	1.02608575478123e-11\\
.9	5.22330335573654e-12\\
.92	6.9024497681041e-12\\
.94	1.01320808180547e-11\\
.96	8.1192785303854e-12\\
.98	1.21628458277225e-11\\
1	9.27339396978897e-12\\
};
\end{axis}

\begin{axis}[%
width=2.428in,
height=1.954in,
at={(0in,0in)},
scale only axis,
xmin=0,
xmax=1,
ymin=0,
ymax=1,
axis line style={draw=none},
ticks=none,
axis x line*=bottom,
axis y line*=left
]
\end{axis}
\end{tikzpicture}%
\end{minipage}
\begin{minipage}{.45\linewidth}
\hspace{-1cm}
%
%
\definecolor{mycolor1}{rgb}{0.00000,0.44700,0.74100}%
\begin{tikzpicture}

\begin{axis}[%
width=2.428in,
height=1.954in,
at={(1.011in,0.642in)},
scale only axis,
xmin=0,
xmax=1,
ymode=log,
ymin=1e-12,
ymax=1,
yminorticks=true,
xlabel={$\lambda$},ylabel style={rotate=-90},
axis background/.style={fill=white}
]
\addplot [color=mycolor1, forget plot,line width=2pt]
  table[row sep=crcr]{%
0	1.93638321900315e-09\\
.02	0.0219015693929812\\
.04	0.0334398395886849\\
.06	0.0219327877689793\\
.08	0.0492352571437857\\
.1	3.25238745742614e-08\\
.12	0.00863471083369332\\
.14	6.26066749083661e-09\\
.16	3.43851083882484e-09\\
.18	1.97873277816575e-09\\
.2	1.90600443237248e-09\\
.22	1.66755476663834e-09\\
.24	7.69500535284038e-09\\
.26	3.57643168575154e-09\\
.28	2.0874286088635e-09\\
.3	1.81945756888655e-09\\
.32	2.40732652591012e-09\\
.34	2.8510043910822e-09\\
.36	1.20785992020527e-09\\
.38	2.15132901148835e-09\\
.4	2.11046112271008e-09\\
.42	1.3290438072085e-09\\
.44	1.42699122705337e-09\\
.46	1.46089017724377e-09\\
.48	1.33361526878256e-09\\
.5	9.08721273512973e-10\\
.52	4.27778755045799e-10\\
.54	4.66249055546394e-10\\
.56	9.82588632005039e-10\\
.58	1.39623744168086e-10\\
.6	1.71751865565429e-10\\
.62	1.25345797546245e-10\\
.64	5.72691143455776e-11\\
.66	5.1442284060662e-11\\
.68	5.74963659148019e-11\\
.7	8.7598911762941e-11\\
.72	3.92110902775482e-11\\
.74	2.17964258594876e-11\\
.76	4.53906901841572e-11\\
.78	4.0494200468466e-11\\
.8	7.31260146070879e-11\\
.82	7.36534507238798e-11\\
.84	2.08278562235152e-11\\
.86	4.32141058704165e-11\\
.88	1.07105148566457e-10\\
.9	3.00489730618111e-11\\
.92	7.22770375141804e-12\\
.94	6.25888229907746e-11\\
.96	5.03200427480621e-11\\
.98	3.38909452364506e-11\\
1	4.38262341768506e-11\\
};
\end{axis}

\begin{axis}[%
width=2.428in,
height=1.954in,
at={(0in,0in)},
scale only axis,
xmin=0,
xmax=1,
ymin=0,
ymax=1,
axis line style={draw=none},
ticks=none,
axis x line*=bottom,
axis y line*=left
]
\end{axis}
\end{tikzpicture}%
\end{minipage}
\caption{\label{phaseTransitionRecoveryColumnInnerMatrix} Phase transition for the semidefinite programs $\mathsf{SDP}_1(\tilde{\bs A}, \lambda)$ and $\mathsf{SDP}_2(\tilde{\bs A}, \lambda)$ in the ``large $m$" (quadratic recovery) regime (area between the red line and the black line in Fig.~\ref{phaseIllustrationmanual}) in which the recovery is possible for the purely quadratic measurement operator only if we restrict to second order part of the matrix. Here we take $n = 30$, $k = 4$ $m=50$ and we collect a number of $10$ experiments which are averaged to provide the blue curves.  $\lambda$ is sampled every $.02$ values between $0$ and $1$. The experiments are carried out in the ``large $m$" (i.e. $m>k^2$) regime as shown by the exact recovery of the solution at $\lambda=0$}
\end{figure}
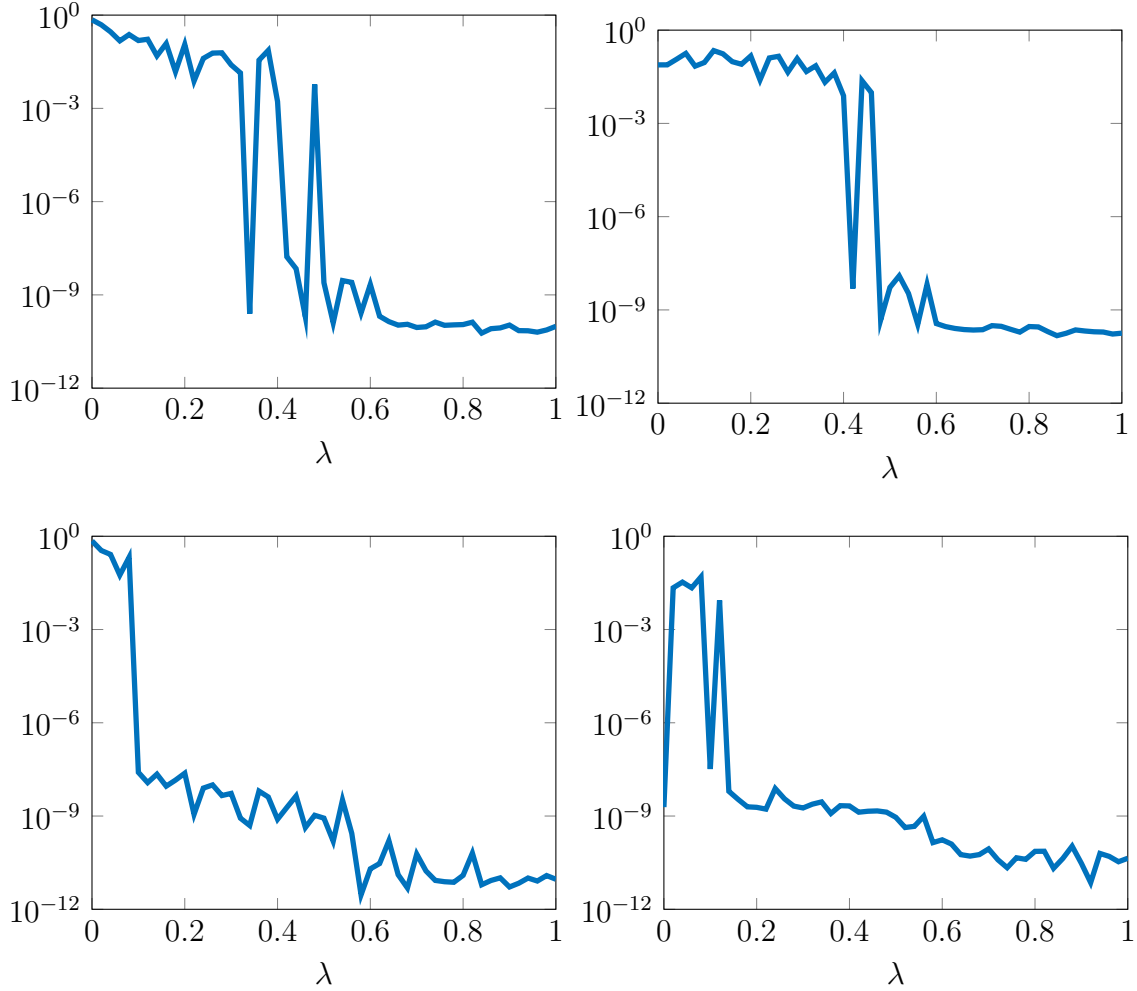

%
%
%
%
%
%
%
%
%
%


\subsection{Auxilliary results}

\subsubsection{Proof of lemma~\ref{lemmaCrossTermsAici01Quadratic}}

\restateCrossTermsQuadratic*

\begin{proof}
From the independence of the $\bs c_i, \bs A_i$, letting 
\begin{align}
Z_i = \mathcal{P}_{T\cap \Omega}(\bs e_1\bs c_i^* + \bs c_i\bs e_1^*) \langle \bs A_i, \bs X\rangle
\end{align}
we have $\mathbb{E}Z_i = 0$. Moreover, 
\begin{align}
&\left\|m^{-1}\sum_{i=1}^m Z_iZ_i^*\right\|\\
& = \left\|m^{-1}\sum_{i=1}^m \left(\frac{(\bs x_0\bs c_i^* + \bs c_i\bs x_0^*)}{2} + \langle \bs c_i, \bs x_0\rangle \frac{(\bs e_1\bs x_0^* + \bs x_0\bs e_1^*)}{2} - \frac{\langle \bs x_0, \bs c_i\rangle \bs x_0\bs x_0^*}{2} \right)^2\right\|\\
& \lesssim \left\|\mathbb{E} \left|\langle \bs A_i, \bs X\rangle \right|^2 \frac{(\bs x_0\bs c_i^* + \bs c_i\bs x_0^*)^2}{4} \right\|+ \left\|\mathbb{E} |\langle \bs A_i, \bs X\rangle |^2 |\langle \bs c_i, \bs x_0\rangle |^2 \frac{(\bs e_1\bs x_0^* + \bs x_0\bs e_1^*)^2}{4}\right\|\\
&+ \left\|\mathbb{E} \frac{|\langle \bs x_0, \bs c_i\rangle |^2}{4} |\langle \bs A_i, \bs X\rangle |\bs x_0\bs x_0^* \|\bs x_0\|^2\right\|\\
&\lesssim \left\|\frac{\|\bs X\|_F^2}{4} \mathbb{E}\left\{\langle \bs x_0, \bs c_i\rangle \left(\bs x_0\bs c_i^* + \bs c_i\bs x_0^*\right) + \bs c_i\bs c_i^*\|\bs x_0\|^2 + \bs x_0\bs x_0^*\|\bs c_i\|^2 \right\}\right\|\\
&+ \left\|\frac{\|\bs X\|_F^2}{4} \|\bs x_0\|^2 \frac{\left(\bs e_1\bs x_0^* + \bs x_0\bs e_1^* + \bs e_1\bs e_1^* \|\bs x_0\|^2 + \bs x_0\bs x_0^*\right)}{4}\right\|+ \left\|\frac{\|\bs X\|_F^2}{4} \|\bs x_0\|^4 \bs x_0\bs x_0^*\right\|\\
&\lesssim \|\bs X\|_F^2 k\label{boundVarianceLemmaQuadratic01}
\end{align}
The variables $Z_i$ are subexponential and we have
\begin{align}
\|Z_i\|_{\psi_1}&= \left\|\mathcal{P}_{T\cap \Omega}\left(\bs e_1\bs c_i^* + \bs c_i\bs e_1^*\right)\langle \bs A_i, \bs X\rangle - \mathbb{E}\mathcal{P}_{T\cap \Omega}\left(\bs e_1\bs c_i^* + \bs c_i\bs e_1^*\right)\langle \bs A_i, \bs X\rangle \right\|_{\psi_1}\\
&\leq  \left\|\mathcal{P}_{T\cap \Omega}\left(\bs e_1\bs c_i^* + \bs c_i\bs e_1^*\right)\right\|_{\psi_2} \left\|\langle \bs A_i, \bs X\rangle\right\|_{\psi_2}\\
&\leq \left\|\frac{(\bs x_0\bs c_i^* + \bs c_i\bs x_0^*)}{2} + \langle \bs x_0, \bs c_i\rangle \frac{(\bs e_1\bs x_0^* +\bs x_0\bs e_1^*)}{2} - \frac{1}{2}\langle \bs x_0, \bs c_i\rangle \bs x_0\bs x_0^*  \right\|_{\psi_2} \|\bs X\|_F\\
&\lesssim \|\bs X\|_F \left(\sqrt{k}+ \|\bs x_0\|\right)\label{boundOrliczLemmaQuadratic01}
\end{align}
Combining~\eqref{boundOrliczLemmaQuadratic01} and~\eqref{boundVarianceLemmaQuadratic01} and applying Proposition~\ref{BernsteinMatrix} gives the desired result.

\end{proof}

\bibliographystyle{abbrv}
\bibliography{sample}

\end{document}